\documentclass[12pt]{article}
\usepackage[left=1in, top=1in, right=1in, bottom=1in]{geometry}

\usepackage{setspace}

\usepackage{csquotes}
\usepackage{microtype}
\usepackage{graphicx}
\usepackage{subfigure}
\usepackage{booktabs} 
\usepackage{amsmath, amsthm, amssymb}
\usepackage{mathtools}
\usepackage{thm-restate}
\usepackage[authoryear, round]{natbib}
\usepackage{color}
\usepackage{lscape}
\usepackage{soul}
\usepackage{bm, bbm}
\usepackage{comment}
\usepackage{enumerate}
\usepackage{tabto}

\usepackage{hyperref}
\hypersetup{
    colorlinks=false,
    linkcolor=blue,
    filecolor=magenta,      
    urlcolor=cyan,
}

\newcommand{\AutoAdjust}[3]{{ \mathchoice{ \left #1 #2  \right #3}{#1 #2 #3}{#1 #2 #3}{#1 #2 #3} }}
\newcommand{\Xcomment}[1]{{}}

\newcommand{\InBrackets}[1]{\AutoAdjust{[}{#1}{]}}
\newcommand{\Ex}[2][]{\operatorname{\mathbbm E}_{#1}\InBrackets{#2}}


\renewcommand{\vec}{\bm}

\newtheorem{theorem}{Theorem}[section]

\newtheorem{lemma}[theorem]{Lemma}
\newtheorem{definition}[theorem]{Definition}
\newtheorem{claim}[theorem]{Claim}
\newtheorem{corollary}[theorem]{Corollary}

\newtheorem{remark}[theorem]{Remark}
\newtheorem{example}[theorem]{Example}

\DeclareMathOperator*{\argmax}{arg\,max}

\newcommand{\noaccents}[1]{#1}
\newcommand{\newagentvar}[3][\noaccents]{%
\expandafter\newcommand\expandafter{\csname #2\endcsname}{#1{#3}}%
\expandafter\newcommand\expandafter{\csname #2s\endcsname}{#1{\boldsymbol{#3}}}%
\expandafter\newcommand\expandafter{\csname #2smi\endcsname}[1][i]{#1{\boldsymbol{#3}}_{-##1}}%
\expandafter\newcommand\expandafter{\csname #2i\endcsname}[1][i]{#1{#3}_{##1}}%
\expandafter\newcommand\expandafter{\csname #2ith\endcsname}[1][i]{#1{#3}_{(##1)}}%
}

\newcommand{\newvecagentvar}[3][\noaccents]{%
\expandafter\newcommand\expandafter{\csname #2\endcsname}{#1{\boldsymbol{#3}}}%
\expandafter\newcommand\expandafter{\csname #2s\endcsname}{#1{\boldsymbol{#3}}}%
\expandafter\newcommand\expandafter{\csname #2smi\endcsname}[1][i]{#1{\boldsymbol{#3}}_{-##1}}%
\expandafter\newcommand\expandafter{\csname #2i\endcsname}[1][i]{#1{\boldsymbol{#3}}_{##1}}%
\expandafter\newcommand\expandafter{\csname #2ith\endcsname}[1][i]{#1{#3}_{(##1)}}%
}



\newagentvar{val}{v}
\newagentvar{bid}{b}
\newagentvar{dist}{F}
\newagentvar{alloc}{x}
\newagentvar{util}{u}
\newagentvar{pay}{p}
\newagentvar{ability}{a}
\newagentvar{effort}{e}

\newcommand{\redcom}[1]{\textcolor{red}{[#1]}}

\DeclareMathOperator{\reals}{{\mathbb R}}

\newcommand{\bin}{\mathrm{Bin}}
\newcommand{\cutil}{\gamma}
\newcommand{\mrd}{\mathrm{MRD}}
\newcommand{\MRD}{\mathrm{MRD}}
\newcommand{\wta}{\mathrm{APA}}
\newcommand{\WTA}{\mathrm{APA}}
\newcommand{\AVERSE}{\mathrm{averse}}

\newcommand{\Supp}{\mathrm{Supp}}

\newcommand{\yotam}[1]{{\color{red}{Yotam: #1}}}

\allowdisplaybreaks






\title{From Monopoly to Competition:\\ Optimal Contests Prevail}

\author{Xiaotie Deng\thanks{Center on Frontiers of Computing Studies, Department of Computer Science, Peking University} \\ \tt{xiaotie@pku.edu.cn} \and Yotam Gafni\thanks{Technion - Israel Institute of Technology} \\ \tt{yotam.gafni@campus.technion.ac.il} \and Ron Lavi\footnotemark[2] \thanks{University of Bath, UK} \\ \tt{ronlavi@ie.technion.ac.il} \and Tao Lin\thanks{School of Engineering and Applied Sciences, Harvard University} \\ \tt{tlin@g.harvard.edu} \and Hongyi Ling\footnotemark[1] \\ \tt{hy\_ling@pku.edu.cn}  
}
\date{}

\begin{document}
\onehalfspacing

\maketitle

\begin{abstract}
We study competition among contests in a general model that allows for an arbitrary and heterogeneous space of contest design, where the goal of the contest designers is to maximize the contestants' sum of efforts. Our main result shows that optimal contests in the monopolistic setting (i.e., those that maximize the sum of efforts in a model with a single contest) form an equilibrium in the model with competition among contests. Under a very natural assumption these contests are in fact dominant, and the equilibria that they form are unique. Moreover, equilibria with the optimal contests are Pareto-optimal even in cases where other equilibria emerge. In many natural cases, they also maximize the social welfare.

\end{abstract}

\section{Introduction}

Many important economic and social interactions may be viewed as contests. The designer aims to maximize her abstract utility (e.g.~workers' productivity, sales competitions, innovative ideas for new projects, useful information from contestants) by forming a contest, and contestants exert effort in hopes of winning a prize.
The design of optimal contests is by now well understood in the monopolistic (single-contest) setting. In particular, in many cases, a winner-takes-all contest is optimal in terms of maximizing either the sum of contestants' efforts or the single maximal effort (e.g.~\citealp{barut_symmetric_1998, kalra2001designing,moldovanu2001optimal,terwiesch2008innovation,chawla2019optimal}).


%

While most of the existing literature on contest design focuses on a monopolistic contest with an exogenously given set of participants, in reality, many times, there are multiple contests on a market and these contests must compete to attract participants, which induces a participation vs.~effort trade-off. Although the optimal contest in the single-contest setting induces maximal effort exertion {\em after} contestants choose to participate in their contests, contestants might at the same time be discouraged from choosing the more demanding contests. 
To attract participants, it seems helpful to design lucrative and easy contests that leave a large fraction of the total surplus to contestants. 
Thus, these two aspects appear to be contradicting. 

%

Previous literature has already started to acknowledge this issue with few models that formally studied it (e.g., \cite{azmat_competition_2009, stouras_prizes_2020}). In particular,~\cite{azmat_competition_2009} conclude that, despite the competitive environment, contest designers should still choose effort-maximizing contests since the effort aspect dominates the participation aspect. However, it is not clear how robust this conclusion really is, since these previous papers analyze models that are restricted in two main aspects. First, they assume that all contests have the same total prize to offer.\footnote{Previous analysis seems to significantly rely on this assumption. It also formally assumes exactly two competing contests, although this aspect may be more easily generalized.}
Second, and perhaps even more important, they restrict the choice of a contest and assume that designers choose a multiple-prize contest where contestants' winning probabilities for each prize are determined by a Tullock success function that is exgonenous and identical for all contest designers. A main appeal of a Tullock contest as a model to winner-determination in real-life contests is that it captures the fact that the efforts $e_1,...,e_k$ of the $k$ contestants in a certain contest cannot be fully observed by the contest designer. The winning probability may be following a Tullock contest success function $e_i^\tau/(\sum_{j=1}^k e_j^\tau)$ using some parameter $\tau$ to capture partial-observability by letting contestants with higher efforts win with higher probabilities (as $\tau$ increases, effort observability is better). With such a motivation, it seems that a contest designer always has the strategic option to artificially reduce her ability to observe effort, e.g., using a Tullock contest success function with some parameter $\tau' < \tau$. Combining various prize structures with a limited choice of the parameter $\tau$ is a natural way to expand the set of possible contests to consider. This is not captured in previous models. Alternatively, effort-observability can be endogenously determined (e.g., online contests could use better technology to increase effort observability). In this case, contest designers have the freedom to increase $\tau$ where in the limit as $\tau \rightarrow \infty$ we have an all-pay auction where the contestant with the highest effort wins with certainty.
Contest success functions may depend on the number of contestants in other complex ways and many additional examples of natural classes of contests exist 
\citep{corchon_theory_2007}.
Naturally, as the strategic flexibility of contest designers increases, existing outcomes may no longer be in equilibrium, and even if they do remain in equilibrium, additional more attractive equilibria might emerge. 

This paper provides a more general framework and analysis of competition among multiple contests. To be consistent with previous literature our starting point is the model of \cite{azmat_competition_2009}
which we generalize in order to capture the two main aspects discussed in the previous paragraph: a general contest design space and asymmetric contest designers.
In a high-level, the model is composed of three phases. In the first phase, contest designers choose their contests (and commit to them) from a class of contests available to them which could be any arbitrary class of contests. In the second step, after seeing the contests chosen by designers, each contestant chooses (possibly in a random way) one contest to participate in. Finally, in each contest, contestants invest effort by playing a symmetric Nash equilibrium (which previous literature has shown to exist, see details in Section~\ref{sec:model}). Designers aim to maximize the sum of efforts exerted in their own contests and contestants aim to maximize the reward they receive minus their effort.


The bottom line of our analysis is that optimal (effort-maximizing) contests in the monopolistic setting are still in equilibrium even when significantly increasing the strategic flexibility of contest designers. In other words, effort indeed dominates participation in the aforementioned trade-off for the competing designers. In fact, these contests remain the unique equilibrium in many interesting and natural cases (although, as we show, not always). Moreover, even when additional contests emerge as equilibria, choosing effort-maximizing contests is a Pareto-optimal outcome for contest designers maintaining the attractiveness (for the contest designers) of these types of contests. These conclusions hold regardless of the number of designers, the total rewards they have, and the classes of contests they can choose from.

Technically, we identify two properties that characterize the class of effort-maximizing contests and show that if every contest designer chooses a contest that satisfies these two properties then we are at an equilibrium which is Pareto optimal for the designers. The first property, which we term Monotonically Decreasing Utility (MDU) simply says that a contestant's symmetric-equilibrium utility in the single contest game decreases as the number of contestants increases. The second property, which we term Maximal Rent Dissipation (MRD), is defined with respect to the space of possible contests $\mathcal S_i$ that contest designer $i$ has. A specific contest $C_i \in \mathcal S_i$ has maximal rent dissipation if, for any other contest $C'_i \in \mathcal S_i$ that could be a possible choice for designer $i$, and for any number of contestants, $k$, the contestant's symmetric-equilibrium utility in the single contest $C_i$ when there are $k$ contestants is not larger than the contestant's equilibrium utility in the single contest $C'_i$ when there are $k$ contestants. Thus, $C_i$ minimizes the contestants utilities and therefore maximizes the utility of the contest designer, among all contests available to designer $i$. In this sense, $C_i$ is ``optimal''.

Going back to Tullock contests and prize structures, the generality of our framework yields, as corollaries to our main result, that: (1) Choosing all-pay auctions is an equilibrium for designers who can only adjust their observability of effort (i.e.~the Tullock parameter $\tau$) but must give the entire reward to the winner (i.e.~winner-takes-all); in fact we show that choosing any $\tau \geq 2$ is an equilibrium. (2) Choosing winner-takes-all contests is an equilibrium for designers who can only adjust prize structures while the parameter $\tau$ is exogenous. (3) Choosing the winner-takes-all all-pay auction is an equilibrium for the designers when they can adjust both prize structures and effort observability. 





\subsection{Additional related literature}
In this subsection we review two strands of literature on contest design. 
One strand of literature considers optimality of contests in a monopolistic (single contest) setting, in terms of revenue for the designer, agent participation, etc. A second, more recent strand, considers equilibrium outcomes
when multiple contest designers compete over agents' participation and effort. Our result is interesting in the way it ties together these domains: We show that effort-maximizing contests (those that were identified in the first literature strand)
are in an equilibrium in our general model, that belongs to and follows the second strand.
The takeaway message to a contest designer is that, in case she is interested to maximize the sum of contestants efforts, introducing competition does not change her basic goal of maximizing over effort extraction, given a fixed set of contestants. 




\subsubsection{Optimal contests in the monopolistic setting}
In the monopolistic setting, several works study optimal multiple-prize contest design with the objective of maximizing sum of efforts. Under different assumptions, most of these papers arrive at the same conclusion that the optimal contest is the winner-takes-all contest where the full prize is offered to the single contestant exerting the highest effort.  For example, in all-pay auctions where contestants' efforts are fully observable, \cite{barut_symmetric_1998, moldovanu2001optimal} show that a winner-takes-all all-pay auction is optimal assuming contestants having either linear or concave cost functions (interestingly, for convex cost functions their results vary). An exception is \cite{glazer_optimal_1988} who show that the optimal contest should offer equal prizes to all players except for the player with the lowest effort if players value the prize money by a strictly concave utility function.  We consider linear utilities and linear cost functions as in \cite{barut_symmetric_1998} so the winner-takes-all is optimal. When contestants' efforts are not fully observable and their winning probabilities for different prizes are assumed to follow a Tullock success function, the optimal prize structure is once again a winner-takes-all \citep{clark_influence_1998}.  The winner-takes-all is also optimal in a stochastic-quality model (e.g., \citealp{kalra2001designing, ales2017optimal}) where a contestant who exerts effort $e_i$ produces a submission with random quality $Q_i = e_i + Z_i$ where $Z_i$ follows some noise distribution. In this case, prizes are allocated based on the submission qualities.
If the designer must offer a single prize, then the all-pay auction is optimal among all possible contests, as it induces full rent dissipation -- contestants' utilities are reduced to zero and their sum of efforts is maximized \citep{baye1996apa}. 

A line of the literature including \cite{segev_schweinzer_2012,baye1996apa,alcalde2010rent,ewerhart_2017_mdu} studies symmetric equilibria and rent dissipation in optimal contests. Our formal results, which analyze contests that admit symmetric equilibria and certain dissipation properties, have concrete applications thanks to existence and characterization results provided in these papers. 


\subsubsection{Competition among contests}

Previous works (e.g., \citealp{stouras_prizes_2020, azmat_competition_2009}) on contest competition suggest that either participation or effort may dominate in the aforementioned trade-off under different assumptions.  For example, \cite{stouras_prizes_2020} show that, for designers who wish to maximize the highest quality, participation outweighs effort and hence the designers will set multiple prizes when the quality of submission is sensitive to their effort. When the quality of submission is not sensitive to effort, they show that the effort aspect is dominating and hence the designers will offer a single prize which induces the maximum effort exertion in the single-contest setting.  On the other hand, \cite{azmat_competition_2009} suggest that, for designers maximizing the sum of efforts, the effort aspect is always dominating, and hence a single prize should be offered. As mentioned before, we generalize their work in several aspects.  

\cite{crowdsourcingSystemsAnalysis} consider an incomplete information competition among contests. They focus on participation issues rather than on the strategic choices of contest designers, by assuming that all contests are all-pay. They explicitly characterize the relationship between contestants' participation behavior and contests' rewards, and
find that rewards yield logarithmically diminishing returns with respect to participation levels. 
\cite{korpeouglu2017parallel} consider an incomplete information contest model where  contestants can participate in multiple contests, and contest designers use winner-takes-all contests while strategically choosing rewards to maximize the maximal submission quality minus reward. They show that, in several cases, contest designers benefit from contestants' participation in multiple contests.

\section{Competition Among Contests: Model and Preliminaries}
\label{sec:model}

\subsection{A single-contest game}


A contest designer designs a contest among several contestants in order to maximize the sum of efforts exerted by the contestants in return for some reward to be divided among them according to some winning rule determined by the designer.

Formally,
a contest $C$ is composed of a reward $R$ and a family of contest success functions $\vec{f}^k:\mathbb{R}_{\geq0}^k\to[0,1]^k$ for each number of contestants $k>0$. Contestants exert efforts $(e_1, \ldots, e_k) \in \mathbb R_{\ge 0}^k$ to compete for the reward. Each contestant $i$ receives a fraction $f^k_i(e_1,\ldots,e_k)$ of the reward, where $f^k_i(e_1,\ldots,e_k)$ is the $i$-th coordinate of the vector $\vec{f}^k(e_1,\ldots,e_k)$. 
In a stochastic quality model (e.g., the additive-noise model\footnote{In an additive-noise contest model, a contestant who exerts effort $e_i$ produces a submission with random quality $Q_i = e_i + Z_i$ where $Z_i$ follows some noise distribution.  The contest designer observes $Q_i$ but not $e_i$ and allocate rewards to contestants based on their $Q_i$'s. }), $f_i^k(e_1, \ldots, e_k)$ is the expected fraction of reward received by contestant $i$.
We allow general functions $f_i^k(\cdot)$ and only require that $\sum_{i=1}^kf^k_i(e_1,\ldots,e_k)\leq1$. The utility of a contestant is the reward she gets minus the effort she exerts: $f^k_i(e_1,\ldots,e_k) R - e_i$. The utility of the contest designer is the sum of efforts $\sum\limits_{i=1}^ke_i$. When $k=0$ the designer's utility is 0.



\begin{definition}
A contest is \emph{anonymous} if its contest success functions $\vec{f}^k:\mathbb{R}_{\geq0}^k\to[0,1]^k$ satisfy, for any $k>0$, for any $(e_1,\ldots,e_k) \in \mathbb{R}_{\geq0}$ and any permutation $\pi$ of $(1,\ldots,k)$,
\begin{equation*}
    \vec{f}^k\left(e_{\pi(1)},\ldots,e_{\pi(k)}\right)=\left(f^k_{\pi(1)}(e_1,\ldots,e_k),\ldots,f^k_{\pi(k)}(e_1,\ldots,e_k)\right).\footnote{This is the same definition as in \cite{alcalde2010rent}; it is equivalent to requiring that $f^k_i(e_1,\ldots,e_k)=
    f^k_{\pi(i)}(\tilde{e}_1,\ldots,\tilde{e}_k)$ where $\tilde e_{\pi(j)} = e_j$ for all $j$.}
\end{equation*}
\end{definition}

\begin{definition}
A contest \emph{fully allocates the reward} if its contest success functions $\vec{f}^k:\mathbb{R}_{\geq0}^k\to[0,1]^k$ satisfy, for any $k>0$ and any $(e_1,\ldots,e_k) \in \mathbb{R}_{\geq0}^k$, $\sum_{i=1}^kf^k_i(e_1,\ldots,e_k)=1$.
\end{definition}

\begin{example}
A Tullock contest (or, more accurately, a single-prize Tullock contest) parameterized by  $\tau\in[0,+\infty]$ has the following contest success function: 
\[ 
f_i^k(e_1,\ldots,e_k)=\begin{cases}
\frac{e_i^\tau}{\sum\limits_{j=1}^ke_j^\tau} & \text{if $e_j>0$ for some $j\in\{1, \ldots, k\}$} \\
\frac{1}{k} & \text{otherwise}
\end{cases} 
\]
When $\tau=+\infty$, the contest becomes an ``All Pay Auction (APA)'' where the contestant with the highest effort wins with certainty (to maintain anonymity, if several contestants exert the highest effort, they all win with equal probability).
A Tullock contest is anonymous and it fully allocates the reward.


\end{example}

\begin{definition}
Denote by $\mathcal{C}_R$ the set of all contests with reward $R$ that are anonymous, fully allocate the reward and have a symmetric Nash equilibrium among $k$ contestants for all $k>0$.
\end{definition}

For example, \cite{alcalde2010rent} and \cite{baye1996apa} show that Tullock contests with parameters $\tau\in[0, \infty)$ and $\tau=\infty$ admit a symmetric Nash equilibrium; thus, $\mathcal{C}_R$ contains all Tullock contests with reward $R$. Other examples of contests that admit a symmetric Nash equilibria are given in e.g., the seminal works of \cite{hirshleifer1989conflict,Nti1997ComparativeSO}, a survey by \cite{corchon_theory_2007}, as well as later works such as  \cite{amegashie2012nested}. 


We assume throughout the paper that all contestants in the same contest will play a symmetric Nash equilibrium of that contest. Formally, for every contest $C \in \mathcal{C}_R$ we fix a (mixed strategy) symmetric Nash equilibrium, i.e., $e_1, \ldots, e_i, \ldots, e_k$ are i.i.d.~random variables that follow a distribution $F$ defined by a mixed strategy Nash equilibrium. Since $C \in \mathcal C_R$ is anonymous, in the symmetric Nash equilibrium all contestants get an equal expected fraction of the reward and hence their expected utilities are identical. We denote their identical expected utility by
$\cutil_C(k) = \Ex[e_1, \ldots, e_k \sim F]{  f_i^k(e_1, \ldots, e_k) R - e_i}$. 
Moreover, since $C\in \mathcal C_R$ fully allocates the reward, we must have $ \Ex[e_1, \ldots, e_k \sim F]{f_i^k(e_1, \ldots, e_k)} = \frac{1}{k}$ and hence 
\begin{equation}
\label{eq:cutil-R-k}
\cutil_C(k) = \frac{R}{k} -  \Ex[e_i\sim F]{e_i}.
\end{equation}
\noindent
We note that when $k=1$, $\cutil_C(1)=R$, because the single contestant will not exert any effort. We also have $\cutil_C(k)\geq0$ for any $k>0$ since a contestant can always choose to exert zero effort and guarantee non-negative utility.  Moreover, since $e_i\ge 0$, we always have $\cutil_C(k)\leq \frac{R}{k}$. 
We can use $\cutil_C(k)$ to express the utility of a contest designer in a contest $C\in \mathcal C_R$ with $k\ge 1$ contestants by rearranging~\eqref{eq:cutil-R-k}: 
\begin{equation}\label{eq:sum-R-gamma}
    \Ex[e_1, \ldots, e_k \sim F]{\sum\limits_{i=1}^k e_i} =  k \Ex[e_i \sim F]{e_i} = R - k\cutil_C(k). 
\end{equation}
Note that we assume that the utility of a contest designer is the expected sum of efforts, even if this is non-observable. This fits settings like workplace contests that aim to improve workers' productivity. More generally, in the additive noise model, expected sum of qualities is equal to expected sum of efforts since the expected noise is usually assumed to be zero.

\subsection{A contest competition game}

In this paper we study a game where multiple contest designers compete by choosing their contest success functions. Contestants observe the different contests and choose in which one to participate.

\begin{definition}
A complete-information \emph{contest competition game} is denoted by $CCG(m, n, $ $(R_i)_{i=1}^m,$ $(\mathcal S_i)_{i=1}^m)$, where $m\ge 2$ is the number of contest designers, $n\ge 1$ is the number of contestants, $R_i>0$ is the reward of contest $i$, and $\mathcal S_i \subseteq \mathcal C_{R_i}$. The game has two phases: 
\begin{enumerate}
    \item \textbf{Designers choose contests}.  Each designer $i$ chooses a contest $C_i \in \mathcal S_i$ simultaneously. Contestants observe the chosen contests $(C_1, \ldots, C_m)$.
    
    \item \textbf{Contestants play a normal-form game of choosing in which contest to participate}.
    A pure strategy of each contestant in this game is to choose one contest.
    Importantly, contestants may play a mixed strategy, meaning that each contestant $\ell$ participates in each contest $C_i\ (i=1,...,m)$ with some probability $p_{\ell i}$, $\sum\limits_{i=1}^m p_{\ell i} = 1$. We denote the vector of probabilities chosen by contestant $\ell$ by $\vec{p}_{\ell}=(p_{\ell 1},...,p_{\ell m})$.

\end{enumerate}


\noindent
After Nature assigns contestants to contests, utilities are as follows. If there are $k\ge 1$ contestants participating in contest $C_i$, then each of these contestants 
gains utility $\cutil_{C_i}(k)$ and contest designer $i$ gains utility $R_i - k\cutil_{C_i}(k)$. If $k=0$ then the utility of the contest designer is $0$.
\end{definition}

The first important element of our model is the space $\mathcal S_i$ of all possible contests a designer can (strategically) choose. For example, this could be the space of all Tullock contests, i.e., the parameter $\tau$ becomes a strategic choice. Some of the previous literature views $\tau$ as an exogenous parameter representing how accurately the designer is able to observe the ranking of efforts performed by contestants. Even so, it seems plausible that the designer chooses an ``ignorance is bliss'' approach where she lowers the $\tau$ value (thus, observes efforts' ranking less accurately) in order to encourage participation.
Alternatively, another example for $\mathcal S_i$ could be the space of all prize structures where the ranking is determined according to a specific Tullock contest with a fixed exogenous $\tau$, as in~\cite{azmat_competition_2009}.\footnote{Since we assume contests that fully allocate the reward, each $f^k$ can be derived from prize structures of at most $k$ prizes.~\cite{azmat_competition_2009} allow for any number of prizes, although having more prizes than contestants is not beneficial for the contest designer.}

%
We remark that this model implicitly assumes that when a contestant decides on the level of effort to exert in the contest she participates in, 
she knows the total number of contestants $k$ in the same contest. In practice, contestants can observe the number of participants when they are in physical contests (like sport contests) or when the contest designer chooses to reveal this information. Moreover, \cite{myerson_population_2006} show that contest designers have an incentive to do so because the expected aggregate effort in a contest with a commonly known number of participants is in general higher than that in a contest where the contestants do not see the number of participants. 


In the second phase each contestant has a finite number $m$ of possible actions and the game is symmetric, hence there must exist at least one symmetric (mixed strategy) Nash equilibrium \citep{nash_non-cooperative_1951}. We will assume in all our results that the contestants play this symmetric equilibrium, i.e., we will only consider equilibria in which the probability vector of every contestant is the same ($\vec{p}_1=\vec{p}_2=\cdots=\vec{p}_n$). Example~\ref{ex:asymmetric} discusses the case where the contestants choose an asymmetric equilibrium.
Formally, we denote by $\bm{p}(C_1, \ldots, C_m)\in\reals^m$ the probability vector chosen by the contestants at their symmetric equilibrium when the designers choose contests $(C_1, \ldots, C_m)$ in the first phase
.\footnote{If there are multiple symmetric equilibria, we allow $\bm{p}(C_1, \ldots, C_m)$ to be any one of those.  All our conclusions hold regardless of which symmetric equilibrium the contestants play. In addition, we show in Lemma~\ref{lem:MDU-contestant-unique} that the symmetric equilibrium is unique if a certain condition (that is satisfied, e.g., by all Tullock contests) holds. }

Given that designers choose contests $\bm C = (C_1, \ldots, C_m)$ in the first phase of the game and contestants participate in contests with probabilities $(p_1, \ldots, p_m) = \bm{p}(\bm C)$ in equilibrium, the contestants' utility is as follows. 
For a contestant who participates in $C_i$, the number of contestants among the other $n-1$ contestants who also participate in $C_i$ follows the binomial distribution $\bin(n-1, p_i)$.  Therefore, the expected utility of a contestant participating in $C_i$, denoted by
$\beta(C_i, p_i)$, equals
\begin{equation}\label{eq:contestant-utility-beta}
 \beta(C_i, p_i) = \Ex[k\sim \bin(n-1, p_i)]{\cutil_{C_i}(k+1)} = \sum\limits_{k=0}^{n-1} \binom{n-1}{k} p_i^k (1-p_i)^{n-1-k} \cutil_{C_i}(k+1).
\end{equation}

Denote the set of indices of contests in which contestants participate with positive probability (i.e., the support of $\bm p(\bm C)$) by
\begin{equation}
\Supp(\bm C)=\{i:p_i(\bm C)>0\}.
\end{equation}

\begin{claim}[Equilibrium condition]
\label{claim:equilibrium-condition}
Suppose that designers choose contests $\bm C = (C_1, \ldots, C_m)$ in the first phase of the game and contestants participate in contests with probabilities $(p_1, \ldots, p_m) = \bm{p}(\bm C)$ in equilibrium. Then,
\begin{itemize}
    \item If $i \in \Supp(\bm C)$, then $\beta(C_i, p_i) \ge \beta(C_j, p_j)$ for any $j=1, \ldots, m$.
    \item Thus, if $i,j \in \Supp(\bm C)$, then $\beta(C_i, p_i) = \beta(C_j, p_j)$. 
\end{itemize}
\end{claim}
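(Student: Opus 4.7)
The plan is to apply the textbook indifference characterization of mixed-strategy Nash equilibria in finite-action games to the second-phase subgame.

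First, I would compute the expected utility of a single deviation. Fix the designers' choices $\bm C$ and suppose all contestants other than $\ell$ use the symmetric mixed strategy $\bm p = \bm p(\bm C)$. If contestant $\ell$ plays the pure strategy ``participate in $C_i$'', then by independence across contestants the number of the remaining $n-1$ contestants who also end up in $C_i$ is distributed as $\bin(n-1, p_i)$. Conditional on exactly $k$ of these other contestants entering $C_i$, contest $C_i$ has $k+1$ participants and each gets expected utility $\cutil_{C_i}(k+1)$ in its fixed symmetric Nash equilibrium. Taking the expectation over $k \sim \bin(n-1, p_i)$ recovers precisely $\beta(C_i, p_i)$ as defined in~\eqref{eq:contestant-utility-beta}.

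Second, I would invoke the standard best-response characterization: because $\bm p$ is a Nash equilibrium of the second-phase game and is played by all contestants (including $\ell$), the mixed strategy $\bm p$ must be a best response for $\ell$ against the others also playing $\bm p$. In a finite-action game, a mixed strategy is a best response if and only if every pure strategy in its support attains the maximum expected payoff over all pure strategies, and every pure strategy outside the support attains at most that maximum. Specializing to our setting, for any $i \in \Supp(\bm C)$ and any $j \in \{1,\ldots,m\}$, we have $\beta(C_i, p_i) \ge \beta(C_j, p_j)$, which is the first bullet. The second bullet is an immediate corollary: if $i, j \in \Supp(\bm C)$, the inequality holds both ways, yielding $\beta(C_i, p_i) = \beta(C_j, p_j)$.

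There is no real obstacle here; the only place that requires a modest care is verifying that the $\bin(n-1, p_i)$ calculation is legitimate, which uses both the independence of the contestants' mixing and the fact that we are in the \emph{symmetric} equilibrium so that every other contestant mixes according to the same $\bm p$. Once this is in place, the claim reduces to the definition of a Nash equilibrium.
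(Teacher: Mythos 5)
Your proposal is correct and follows essentially the same route as the paper's proof: both invoke the best-response/support characterization of the symmetric mixed Nash equilibrium of the second-phase game, with your binomial computation of the deviation payoff merely re-deriving the definition of $\beta(C_i,p_i)$ that the paper establishes in Eq.~\eqref{eq:contestant-utility-beta} before stating the claim.
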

\begin{proof}
$(p_1,\ldots,p_m)$ is a symmetric equilibrium, i.e., given that all other participants play $(p_1,\ldots, p_m)$, a player's best response is to play $(p_1,\ldots, p_m)$ herself. Therefore, the expected utility of choosing to participate in contest $i$ is at least as high as choosing to participate in contest $j$, as contest $i$ is assigned a positive probability $p_i > 0$. 
\end{proof}

\subsection{Equilibrium among contest designers}
We use $\bm{C} = (C_i, \bm{C}_{-i}) = (C_1, \ldots, C_m)$ to denote the contests (strategies) chosen by all designers, where  $\bm{C}_{-i}$ denotes the contests chosen by designers other than $i$.  Let $u_i(C_i, \bm C_{-i})$ be the expected utility of contest designer $i$ given that contestants use $\bm{p}(C_i, \bm C_{-i})$. Formally, by \eqref{eq:sum-R-gamma} the utility of the designer of contest $C_i$ equals $R_i - k\cutil_{C_i}(k)$ when there are $k\ge 1$ participants. Since each contestant participates in $C_i$ independently with probability $p_i = p_i(C_i, \bm C_{-i})$, the total number $k$ of participants in $C_i$ follows the binomial distribution $\bin(n, p_i)$, and hence the designer's expected utility equals 
\begin{equation}\label{eq:designer-utility-definition}
    u_i(C_i, \bm C_{-i}) = \Ex[k\sim \bin(n, p_i)]{(R_i - k\cutil_{C_i}(k)) \cdot \mathbbm{1}[k\ge 1] }. 
\end{equation} 
%

Since a contest between the designer and the multiple contestants is a constant-sum game where the overall utility of all players (i.e., the welfare) equals the total reward $R_i$ whenever there is at least one contestant, designer $i$'s expected utility $u_i(C_i, \bm C_{-i})$ can be written as the expected welfare $R_i[1 - (1-p_i)^n]$ minus the sum of contestants' expected utilities obtained from contest $i$, $n p_i \beta(C_i, p_i)$. Formally,

\begin{claim}\label{claim:designer-utility}
\begin{equation}\label{eq:designer-utility}
u_i(C_i, \bm C_{-i})=
R_i \left[ 1 - (1-p_i)^n\right] - np_i \beta(C_i, p_i).
\end{equation}
\end{claim}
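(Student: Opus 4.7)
The plan is to verify the identity in two ways: a conceptual ``welfare decomposition'' argument, which is the route hinted at in the paragraph preceding the claim, and, as a sanity check, a direct computation from the definition~\eqref{eq:designer-utility-definition}. The welfare approach is cleaner and is the one I would present as the actual proof.

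For the welfare approach, I would first observe that conditional on there being $k\ge 1$ contestants in contest $C_i$, the game played inside contest $i$ is constant-sum: the designer's utility is $R_i - k\cutil_{C_i}(k)$ and the contestants' total utility is $k\cutil_{C_i}(k)$, which sum to $R_i$ because $C_i\in\mathcal{C}_{R_i}$ fully allocates the reward. When $k=0$ both sides equal $0$. Hence the ``welfare'' contributed by contest $i$ equals $R_i\cdot\mathbbm{1}[k\ge 1]$, and the designer's utility equals this welfare minus the contestants' total utility in contest $i$. Taking expectations over $k\sim\bin(n,p_i)$ gives
\begin{equation*}
u_i(C_i,\bm C_{-i}) \;=\; R_i\cdot\Pr_{k\sim\bin(n,p_i)}[k\ge 1] \;-\; \Ex[k\sim\bin(n,p_i)]{k\cutil_{C_i}(k)}.
\end{equation*}
The first term is exactly $R_i[1-(1-p_i)^n]$. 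For the second term, I would argue by linearity: each of the $n$ contestants participates in $C_i$ independently with probability $p_i$, and conditional on participating the expected utility is $\beta(C_i,p_i)$ by~\eqref{eq:contestant-utility-beta} (since the remaining $n-1$ contestants each join $C_i$ with probability $p_i$, giving $k-1\sim\bin(n-1,p_i)$ co-participants). Summing over the $n$ contestants yields expected contestant utility $np_i\beta(C_i,p_i)$, which is exactly the second term. Subtracting gives the claim.

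As a cross-check, I would verify the equality $\Ex[k\sim\bin(n,p_i)]{k\cutil_{C_i}(k)\mathbbm{1}[k\ge 1]} = np_i\beta(C_i,p_i)$ by direct manipulation: expand the left-hand side as $\sum_{k=1}^n k\binom{n}{k}p_i^k(1-p_i)^{n-k}\cutil_{C_i}(k)$, apply the elementary identity $k\binom{n}{k}=n\binom{n-1}{k-1}$, and re-index $j=k-1$ to recover the expansion of $np_i\beta(C_i,p_i)$ from~\eqref{eq:contestant-utility-beta}.

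There is essentially no obstacle: the only subtlety is making sure the $k=0$ case is handled correctly (the designer's utility is $0$ rather than $R_i$, and the indicator $\mathbbm{1}[k\ge 1]$ in~\eqref{eq:designer-utility-definition} precisely accounts for this), and that the independence of the $n$ contestants' participation decisions is what makes the linearity step in the welfare argument valid. Both are immediate from the model.
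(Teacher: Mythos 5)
Your proposal is correct, and in fact it contains two proofs: your ``cross-check'' is, line for line, the paper's actual proof (expand \eqref{eq:designer-utility-definition} as a binomial sum, split off the $R_i$ term, apply $k\binom{n}{k}=n\binom{n-1}{k-1}$, and re-index to recover $np_i\beta(C_i,p_i)$ via \eqref{eq:contestant-utility-beta}). Your primary route, however, genuinely differs from what the paper writes down: the paper states the welfare decomposition only informally in the prose preceding the claim and then proves the identity by pure algebra, whereas you make that decomposition rigorous, replacing the binomial-coefficient manipulation with a linearity-of-expectation (size-biasing) argument --- each of the $n$ contestants independently joins $C_i$ with probability $p_i$, and conditional on joining faces $\bin(n-1,p_i)$ co-participants, so the expected aggregate contestant utility from contest $i$ is $np_i\beta(C_i,p_i)$; subtracting this from the expected welfare $R_i\InBrackets{1-(1-p_i)^n}$ gives the claim. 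Your handling of the two potential pitfalls is right: the $k=0$ case is consistent on both sides of the decomposition (the indicator $\mathbbm{1}[k\ge 1]$ kills the designer's term exactly when there are no contestants to collect utility), and independence of participation decisions is precisely what makes both the $\bin(n,p_i)$ law of $k$ and the conditional $\bin(n-1,p_i)$ law valid simultaneously. What each approach buys: the paper's computation is self-contained and mechanical, needing no probabilistic interpretation; yours explains \emph{why} the identity $\Ex[k\sim\bin(n,p_i)]{k\cutil_{C_i}(k)}=np_i\beta(C_i,p_i)$ holds, matches the intuition the authors themselves offer before the claim, and would survive essentially unchanged in variants of the model where the participation counts are exchangeable but the algebra is messier.
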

\begin{proof}
\begin{align*}
    u(C_i, \bm C_{-i})&= \Ex[k\sim \bin(n, p_i)]{(R_i - k\cutil_{C_i}(k)) \cdot \mathbbm{1}[k\ge 1] }
    = \sum\limits_{k=1}^n\binom{n}{k}p_i^k(1-p_i)^{n-k}(R_i -k\cutil_{C_i}(k))\\
    &= \sum\limits_{k=1}^n\binom{n}{k}p_i^k(1-p_i)^{n-k}R_i -\sum\limits_{k=1}^n\binom{n}{k}p_i^k(1-p_i)^{n-k}k\cutil_{C_i}(k)\\
    &= R_i\left[1-(1-p_i)^n\right] - \sum\limits_{k=1}^n\binom{n}{k}p_i^k(1-p_i)^{n-k}k\cutil_{C_i}(k)\\
    &= R_i\left[1-(1-p_i)^n\right] - np_i\sum\limits_{k=1}^n \binom{n-1}{k-1} p_i^{k-1}(1-p_i)^{n-k}\cutil_{C_i}(k)\\
    &= R_i\left[ 1 - (1-p_i)^n\right] - np_i \Ex[k'\sim \bin(n-1, p_i)]{\cutil_{C_i}(k'+1)}, 
\end{align*}
which equals $R_i\left[ 1 - (1-p_i)^n\right] - np_i \beta(C_i, p_i)$ by \eqref{eq:contestant-utility-beta}. 
\end{proof}

In this paper we analyze the following solution concepts for the contest competition game:

\begin{definition}
Given some $CCG(m, n, (R_i)_{i=1}^m, (\mathcal S_i)_{i=1}^m)$,
\begin{itemize}

\item A contest $C_i \in \mathcal S_i$ is \emph{dominant} if $\forall C'_1 \in \mathcal S_1,...,C'_m \in \mathcal S_m$, $u_i(C_i, \bm{C'}_{-i}) \ge u_i(C'_i, \bm{C'}_{-i})$. 

\item A tuple of contests $(C_1, \ldots, C_m)$, where $C_i \in \mathcal S_i$ for all $i$, is a \emph{contestant-symmetric subgame-perfect equilibrium} if $u_i(C_i, \bm{C}_{-i}) \ge u_i(C_i', \bm{C}_{-i}), \forall C_i'\in \mathcal S_i, \forall i=1, \ldots, m$.

\end{itemize} 
\end{definition}

For simplicity and also for practical purposes, we do not consider the case where designers play mixed strategies (i.e., distributions over multiple contests).

\subsection{Additional important properties of contests}

Our results will use the following three properties of contests: 
\begin{definition} \label{def:MDU-MRD}
~~
\begin{itemize}
    \item A contest $C_i\in \mathcal C_{R_i}$ has \emph{monotonically decreasing utility (MDU)} if 
$\cutil_{C_i}(1) \ge \cutil_{C_i}(2) \ge \cdots \ge \cutil_{C_i}(n)$. In words, the symmetric Nash equilibrium expected utility of a contestant is decreasing as the number of contestants increases.

    \item A contest $C_i \in \mathcal S_i \subseteq \mathcal C_{R_i}$ has \emph{maximal rent dissipation (MRD)} in $\mathcal S_i$ if for any $C'_i \in \mathcal S_i$ and any $k=1, \ldots, n$, $\cutil_{C_i}(k) \leq \cutil_{C'_i}(k)$. Let $\mrd(\mathcal S_i) \subseteq \mathcal S_i$ denote the set of all contests with maximal rent dissipation in $\mathcal S_i$. In words, an MRD contest maximizes the designer's utility regardless of the number of contestants which is equivalent to minimizing the symmetric Nash equilibrium expected utility of contestants.

    \item A contest $C_i \in \mathcal{C}_{R_i}$ has \emph{full rent dissipation} if $\cutil_{C_i}(1)=R_i$ and $\cutil_{C_i}(k)=0$ for $k=2, \ldots, n$. 
\end{itemize}
\end{definition}

\begin{restatable}[]{claim}{monotonedecreasing}
\label{claim:monotone-decreasing}
If $C_i$ has monotonically decreasing utility, then for $p < p'$, $\beta(C_i, p) > \beta(C_i, p')$. 
\end{restatable}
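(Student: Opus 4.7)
The plan is to prove strict monotonicity of $\beta(C_i,\cdot)$ by a stochastic-dominance/coupling argument applied to the representation from~\eqref{eq:contestant-utility-beta},
\[
\beta(C_i, p) \;=\; \Ex[K\sim \bin(n-1,p)]{\cutil_{C_i}(K+1)}.
\]
First I would couple the two binomials on one probability space: let $Z_1,\dots,Z_{n-1}$ be i.i.d.\ uniform on $[0,1]$ and set $K(q):=|\{j:Z_j\le q\}|$, so $K(q)\sim\bin(n-1,q)$ and $K(p)\le K(p')$ pointwise whenever $p\le p'$. By the MDU hypothesis $\cutil_{C_i}$ is non-increasing, hence $\cutil_{C_i}(K(p)+1)\ge \cutil_{C_i}(K(p')+1)$ pointwise, and taking expectations yields the weak inequality $\beta(C_i,p)\ge\beta(C_i,p')$.

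To upgrade this to the strict inequality claimed, I would exhibit an event of positive probability on which the pointwise difference is strictly positive. The natural choice is $E=\{K(p)=0,\,K(p')\ge 1\}$, whose probability equals $(1-p)^{n-1}-(1-p')^{n-1}$ and is strictly positive for $n\ge 2$ and $p<p'\le 1$. On $E$, MDU gives
\[
\cutil_{C_i}(K(p)+1)-\cutil_{C_i}(K(p')+1)\;\ge\;\cutil_{C_i}(1)-\cutil_{C_i}(2).
\]
The paper has already observed right after~\eqref{eq:cutil-R-k} that $\cutil_{C_i}(1)=R_i$ (a lone contestant exerts no effort) and $\cutil_{C_i}(k)\le R_i/k$ for all $k\ge 1$, so $\cutil_{C_i}(1)-\cutil_{C_i}(2)\ge R_i-R_i/2=R_i/2>0$. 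Combining the pointwise weak inequality with this strict gap on $E$ gives $\beta(C_i,p)-\beta(C_i,p')>0$.

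The only real subtlety is upgrading the weak monotonicity in MDU to strict monotonicity of $\beta$, since MDU itself is stated with weak inequalities only; the rest is routine. The key observation making the upgrade work is that the very first step of the sequence, from $k=1$ to $k=2$, is automatically a strict drop because of the structural bounds $\cutil_{C_i}(1)=R_i$ and $\cutil_{C_i}(2)\le R_i/2$, and the coupling ensures that this particular drop is witnessed on a positive-probability event.
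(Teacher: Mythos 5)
Your proof is correct, and it differs from the paper's only in how the two standard ingredients are implemented. The paper splits the $k=0$ term of $\beta(C_i,p)=(1-p)^{n-1}R_i+\sum_{k=1}^{n-1}\binom{n-1}{k}p^k(1-p)^{n-1-k}\cutil_{C_i}(k+1)$ into two halves, replaces $\cutil_{C_i}(1)=R_i$ by $R_i/2$ to get a modified sequence $\tilde\cutil$ that is weakly non-increasing (using $\cutil_{C_i}(2)\le R_i/2$), cites first-order stochastic dominance of $\bin(n-1,p')$ over $\bin(n-1,p)$ to compare the expectations of $\tilde\cutil$, and extracts strictness from the leftover term $(1-p)^{n-1}R_i/2$, which strictly decreases in $p$. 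You instead prove the dominance step from scratch via the monotone coupling $K(q)=\abs{\{j:Z_j\le q\}}$ and obtain strictness by localizing the drop on the event $\{K(p)=0,\,K(p')\ge1\}$, whose probability $(1-p)^{n-1}-(1-p')^{n-1}$ is exactly the mass that the paper's halving trick exploits. So both arguments rest on the same two structural facts — binomial FOSD in $p$, and the automatic strict gap $\cutil_{C_i}(1)-\cutil_{C_i}(2)\ge R_i/2$ coming from $\cutil_{C_i}(1)=R_i$ and $\cutil_{C_i}(k)\le R_i/k$ — but yours is self-contained (no external reference needed for stochastic dominance) and more transparent about where strictness originates, while the paper's version is shorter given the textbook citation. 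One further point in your favor: you flag explicitly that $n\ge2$ is required; the claim is indeed false for $n=1$ (then $\beta(C_i,p)=R_i$ for all $p$), an assumption the paper's proof uses only implicitly through the strict decrease of $(1-p)^{n-1}$.
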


The proof of the claim is in Appendix~\ref{app:model_proofs}.

\begin{claim}
\label{claim:mrd-inequality}
Let $T_i\in \mrd(\mathcal S_i)$, and let $C_i \in \mathcal S_i$ be any other contest in $\mathcal S_i$. 
Then for any $p \in [0, 1]$, $\beta(T_i, p) \le \beta(C_i, p)$. 
\end{claim}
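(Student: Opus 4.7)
The plan is to observe that $\beta(C_i, p)$ is, by definition~\eqref{eq:contestant-utility-beta}, a convex combination (indeed, a binomial expectation) of the values $\cutil_{C_i}(k+1)$ for $k = 0, 1, \ldots, n-1$, with mixing weights $\binom{n-1}{k} p^k (1-p)^{n-1-k}$ that depend only on $p$ and $n$, not on the contest. Since the MRD property of $T_i \in \mrd(\mathcal S_i)$ gives the pointwise inequality $\cutil_{T_i}(k) \le \cutil_{C_i}(k)$ for every $k = 1, \ldots, n$, the desired inequality on $\beta$ will follow by comparing the two expectations term by term.

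Concretely, first I would write
\begin{equation*}
\beta(T_i, p) = \sum_{k=0}^{n-1} \binom{n-1}{k} p^k (1-p)^{n-1-k}\, \cutil_{T_i}(k+1),
\end{equation*}
and the analogous expression for $\beta(C_i, p)$. Then, using that each weight $\binom{n-1}{k} p^k (1-p)^{n-1-k}$ is nonnegative for $p \in [0,1]$, I would apply the inequality $\cutil_{T_i}(k+1) \le \cutil_{C_i}(k+1)$ (which is exactly the definition of $T_i \in \mrd(\mathcal S_i)$ applied at $k+1 \in \{1, \ldots, n\}$) inside the sum to conclude $\beta(T_i, p) \le \beta(C_i, p)$.

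There is no real obstacle here: the statement is essentially a monotonicity property of expectation under a fixed distribution, and all the work is done by the definitions. The only thing to be careful about is the index shift between $\beta$ (which ranges over $k = 0, \ldots, n-1$) and $\cutil_C$ (which is indexed by the number of contestants, so it is $\cutil_C(k+1)$ that appears in the sum), and to note that the endpoint cases $p=0$ and $p=1$ are covered by the usual convention $0^0 = 1$ so that $\beta(C_i, 0) = \cutil_{C_i}(1)$ and $\beta(C_i, 1) = \cutil_{C_i}(n)$, in both of which the MRD inequality applies directly.
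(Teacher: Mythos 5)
Your proposal is correct and is essentially identical to the paper's own proof, which also compares $\Ex[k\sim \bin(n-1,p)]{\cutil_{T_i}(k+1)}$ and $\Ex[k\sim \bin(n-1,p)]{\cutil_{C_i}(k+1)}$ term by term using the pointwise MRD inequality $\cutil_{T_i}(k+1) \le \cutil_{C_i}(k+1)$ under the fixed binomial weights. Your extra remarks on the index shift and the endpoints $p=0,1$ are harmless elaborations of the same one-line argument.
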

\begin{proof}
By the definition of maximal rent dissipation contest, $\cutil_{C_i}(k+1) \ge \cutil_{T_i}(k+1)$ for all $k=0, \ldots, n-1$, thus 
\[ \beta(C_i, p) = \Ex[k\sim \bin(n-1, p)]{\cutil_{C_i}(k+1)} \ge \Ex[k\sim \bin(n-1, p)]{\cutil_{T_i}(k+1)} = \beta(T_i, p). \qedhere \]
\end{proof}

Note that a full rent dissipation contest $C_i$ has monotonically decreasing utility and has maximal rent dissipation in any set $\mathcal S_i \subseteq \mathcal C_{R_i}$ that contains it.
It is known that $\wta$ has full rent dissipation~\citep{baye1996apa} 
and in fact, as a corollary of \cite{ewerhart_2017_mdu} we observe that every Tullock contest with parameter $\tau \ge 2$ has full rent dissipation. 
Thus, the class of Tullock contests contains maximal rent dissipation contests (namely, those with $\tau\ge 2$). Also, it is a class of contests that have monotonically decreasing utility: 

\begin{restatable}[Corollary of \citealp{baye1996apa, segev_schweinzer_2012, ewerhart_2017_mdu}]{lemma}{tullock}
\label{lem:tullock}
Let $C_\tau$ be a Tullock contest with reward $R$ and with parameter $\tau \in [0, +\infty]$. Then, $\cutil_{C_\tau}(k) = R(\frac{1}{k} - \frac{k-1}{k^2} \tau)$ if $\frac{k}{k-1} > \tau$ and $\cutil_{C_\tau}(k) = 0$ if $\frac{k}{k-1} \le \tau$.  For $\tau = +\infty$, $\cutil_{C_\tau}(1)=R$ and $\cutil_{C_\tau}(k)=0$ for $k\ge 2$. As corollaries, 
\begin{itemize}
    \item Every Tullock contest has monotonically decreasing utility. 
    \item Every Tullock contest with parameter $\tau \ge 2$ has full rent dissipation.
    \item If $\mathcal S$ is the set of all Tullock contests with parameter $\tau$ in some range whose maximum $\tau^{\max}$ is well defined and at most $2$, then the Tullock contest with $\tau^{\max}$ is the only contest in $\MRD(\mathcal S)$.
\end{itemize}
\end{restatable}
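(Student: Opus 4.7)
The plan is to take the explicit closed-form expression for $\cutil_{C_\tau}(k)$ as the main computational ingredient and then deduce each of the three corollaries by straightforward monotonicity arguments on that expression.

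For the main formula, I would first handle the range $\tau < \frac{k}{k-1}$, where a symmetric pure-strategy equilibrium exists. Writing contestant $i$'s expected utility as $\frac{e_i^\tau}{\sum_j e_j^\tau}R - e_i$, taking the first-order condition, and plugging in $e_1=\cdots=e_k=e$ yields $\frac{\tau(k-1)R}{k^2 e}=1$, hence $e = \frac{R\tau(k-1)}{k^2}$. Substituting back gives $\cutil_{C_\tau}(k)=\frac{R}{k}-e = R\bigl(\tfrac{1}{k} - \tfrac{(k-1)\tau}{k^2}\bigr)$; the second-order condition and the fact that this is a genuine best response (no profitable deviation to zero effort) can be verified directly from $\gamma\ge 0$, i.e.\ $\tau \le k/(k-1)$. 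For $\tau \ge k/(k-1)$ (including $\tau=\infty$), the symmetric pure-strategy equilibrium breaks down; here I would simply cite \cite{baye1996apa,segev_schweinzer_2012,ewerhart_2017_mdu}, which establish that the (mixed) symmetric equilibrium fully dissipates the rent, giving $\cutil_{C_\tau}(k)=0$. The case $k=1$ is trivial since a sole contestant exerts no effort and collects $R$.

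For the MDU corollary, I would show $\cutil_{C_\tau}(k)\ge\cutil_{C_\tau}(k+1)$ by cases. The transition from $k=1$ is immediate since $\cutil_{C_\tau}(2) \le R/2 \le R = \cutil_{C_\tau}(1)$. If $\cutil_{C_\tau}(k+1)=0$ the inequality holds trivially, so assume both are positive, which requires $\tau < (k+1)/k$. A short algebraic manipulation reduces the inequality to $\tau \le \frac{k(k+1)}{k^2-k-1}$, and one checks $\frac{k(k+1)}{k^2-k-1}\ge\frac{k+1}{k}$ for all $k\ge 2$, so the bound $\tau < (k+1)/k$ suffices. For full rent dissipation when $\tau\ge 2$: since $k/(k-1)\le 2\le \tau$ for every $k\ge 2$, the formula collapses to $\cutil_{C_\tau}(k)=0$ for all $k\ge 2$, while $\cutil_{C_\tau}(1)=R$.

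Finally, for uniqueness of MRD when $\tau^{\max}\le 2$: viewed as a function of $\tau$, $\cutil_{C_\tau}(k) = \max\bigl(0,\,R\bigl(\tfrac{1}{k} - \tfrac{(k-1)\tau}{k^2}\bigr)\bigr)$ is non-increasing in $\tau$ for each $k\ge 2$ (and constant equal to $R$ at $k=1$), so the contest with $\tau^{\max}$ minimizes $\cutil_{C_\tau}(k)$ coordinatewise and therefore lies in $\MRD(\mathcal S)$. For uniqueness I would focus on $k=2$, where $\cutil_{C_\tau}(2)=R(\tfrac12-\tfrac{\tau}{4})$ is \emph{strictly} decreasing in $\tau$ on $[0,2]$; combined with $\tau^{\max}\le 2$, any $\tau'\in\mathcal S$ with $\tau'<\tau^{\max}$ yields $\cutil_{C_{\tau'}}(2)>\cutil_{C_{\tau^{\max}}}(2)$, hence $C_{\tau'}\notin\MRD(\mathcal S)$. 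The main obstacle is really the derivation of the closed-form $\gamma$, which (outside the simple pure-strategy regime) relies on the cited existence and characterization results; the corollaries are then essentially one-line verifications on the formula.
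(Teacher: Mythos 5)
Your proposal is correct and reaches the same three corollaries by essentially the same elementary monotonicity checks on the closed form, but it differs from the paper's proof in two substantive ways. First, where the paper simply imports the closed form $\cutil_{C_\tau}(k)=R(\frac{1}{k}-\frac{k-1}{k^2}\tau)$ for $\tau\le\frac{k}{k-1}$ from Proposition 2 of \cite{segev_schweinzer_2012} (with $\cutil_{C_\tau}(k)=0$ for $\tau>\frac{k}{k-1}$ from Corollary 5 of \cite{ewerhart_2017_mdu}, and \cite{baye1996apa} covering $\tau=+\infty$), you re-derive the pure-strategy regime from the symmetric first-order condition. That buys self-containedness, but the verification step is under-argued for $1<\tau<\frac{k}{k-1}$: the payoff $e_i\mapsto\frac{e_i^\tau}{e_i^\tau+(k-1)e^\tau}R-e_i$ is not concave there, so the second-order condition at the FOC point plus the comparison with $e_i=0$ (i.e., $\cutil_{C_\tau}(k)\ge0$) suffice only after one additionally notes that the marginal benefit $\frac{R\tau e_i^{\tau-1}(k-1)e^\tau}{(e_i^\tau+(k-1)e^\tau)^2}$ is single-peaked in $e_i$, so the payoff has at most two interior critical points (a local minimum followed by a local maximum) and the global maximum is attained either at $0$ or at the FOC point; since you also cite the literature that establishes exactly this characterization, the omission is hedged rather than fatal. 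Second, for MDU the paper interpolates, treating $f(x)=\frac{1}{x}-\frac{x-1}{x^2}\tau$ as a function of a continuous variable and showing $f'(x)\le-\frac{1}{x^2(x-1)}\le0$ on the relevant range, whereas you compare consecutive integers directly; your reduction checks out---$\cutil_{C_\tau}(k)\ge\cutil_{C_\tau}(k+1)$ with both terms given by the formula is equivalent to $\tau\le\frac{k(k+1)}{k^2-k-1}$, and $\frac{k(k+1)}{k^2-k-1}\ge\frac{k+1}{k}$ reduces to $k^2\ge k^2-k-1$---so your discrete argument is valid and arguably more natural for an integer-indexed claim. On the third corollary you are in fact slightly more complete than the paper: the paper's displayed inequality $\cutil_{C_{\tau^{\max}}}(k)\le\cutil_{C_{\tau'}}(k)$ explicitly establishes only membership in $\MRD(\mathcal S)$, leaving the ``only''-ness implicit, while you pin down uniqueness via strict monotonicity of $\cutil_{C_\tau}(2)=R\left(\frac{1}{2}-\frac{\tau}{4}\right)$ on $[0,2]$ (note that both your argument and the paper's tacitly assume $n\ge2$, since for $n=1$ every contest is vacuously in $\MRD(\mathcal S)$).
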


\noindent
A proof of this lemma is given in Appendix~\ref{app:model_proofs}.
Proposition 2 of \cite{Nti1997ComparativeSO} shows a large class of contests that generalize Tullock contests with $\tau\leq 1$ and have the MDU property. 
MDU contests have another useful property:

\begin{lemma}\label{lem:MDU-contestant-unique}
If $C_1, \ldots, C_m$ are MDU contests then the contestants' symmetric equilibrium $\bm p(C_1, \ldots, C_m)$ is unique. 
\end{lemma}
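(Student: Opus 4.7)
The idea is to exploit the strict monotonicity of the contestant utility function $\beta(C_i, \cdot)$ guaranteed by Claim~\ref{claim:monotone-decreasing}. Write $\beta_i(p) := \beta(C_i, p)$. Since each $C_i$ is MDU, $\beta_i$ is strictly decreasing on $[0,1]$, with $\beta_i(0) = \cutil_{C_i}(1) = R_i$ and $\beta_i(1) = \cutil_{C_i}(n)$; in particular, $\beta_i$ is a bijection from $[0,1]$ onto $[\cutil_{C_i}(n), R_i]$. For any symmetric equilibrium $\bm p$ with support $S_p = \Supp(\bm p)$, Claim~\ref{claim:equilibrium-condition} yields a common value $u^*_p$ such that $\beta_i(p_i) = u^*_p$ for every $i \in S_p$ and $\beta_j(0) = R_j \le u^*_p$ for every $j \notin S_p$. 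The whole argument will be to pin down $u^*_p$, and then read off the $p_i$'s via $\beta_i^{-1}$.

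The plan is as follows. Suppose for contradiction that two distinct symmetric equilibria $\bm p$ and $\bm q$ exist, with equilibrium values $u^*_p$ and $u^*_q$. First I would show $u^*_p = u^*_q$. Assume WLOG that $u^*_p < u^*_q$. For any $j \in S_q$, we have $q_j > 0$, so by strict monotonicity $u^*_q = \beta_j(q_j) < \beta_j(0) = R_j$, hence $u^*_p < u^*_q < R_j$. The equilibrium condition for $\bm p$ forces every $j \notin S_p$ to satisfy $R_j \le u^*_p$, so we must have $j \in S_p$. This shows $S_q \subseteq S_p$. But then for each $j \in S_q$, $p_j = \beta_j^{-1}(u^*_p) > \beta_j^{-1}(u^*_q) = q_j$ by strict monotonicity, giving
\[
1 = \sum_{i \in S_p} p_i \;\ge\; \sum_{j \in S_q} p_j \;>\; \sum_{j \in S_q} q_j \;=\; 1,
\]
a contradiction. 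Hence $u^*_p = u^*_q =: u^*$.

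Once the equilibrium values agree, I would show that the supports agree. If $i \in S_p \setminus S_q$, then $p_i > 0$ implies $u^* = \beta_i(p_i) < \beta_i(0) = R_i$, while $i \notin S_q$ forces $R_i \le u^*$, a contradiction; symmetrically for $S_q \setminus S_p$. Thus $S_p = S_q =: S$, and for every $i \in S$, injectivity of $\beta_i$ together with $\beta_i(p_i) = u^* = \beta_i(q_i)$ gives $p_i = q_i$. Outside the common support both vectors are zero, so $\bm p = \bm q$, proving uniqueness.

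The only real obstacle is the correct handling of boundary cases (contests outside the support, ties of the form $R_j = u^*$), and these are all controlled once Claim~\ref{claim:monotone-decreasing} is invoked to get the strict monotonicity of $\beta_i$ on the full interval $[0,1]$; the rest of the argument is a short monotone-matching argument with no further technical content.
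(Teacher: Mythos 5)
Your proof is correct, and it takes a recognizably different route from the paper's, even though both rest on exactly the same two ingredients: strict monotonicity of $\beta(C_i,\cdot)$ from Claim~\ref{claim:monotone-decreasing} and the best-response condition of Claim~\ref{claim:equilibrium-condition}. The paper's proof is a minimal exchange argument: if $\bm p \ne \bm p'$, pick coordinates $i,j$ with $p_i > p_i'$ and $p_j < p_j'$ (which exist since both vectors sum to $1$), and derive the cyclic contradiction $\beta(C_i,p_i') > \beta(C_i,p_i) \ge \beta(C_j,p_j) > \beta(C_j,p_j') \ge \beta(C_i,p_i')$; it never mentions supports, the boundary value $\beta_j(0)=R_j$, inverses, or continuity. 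You instead extract two invariants of any symmetric equilibrium --- the common utility level $u^*$ and the support --- and recover the probabilities by injectivity. Your handling of the delicate points is sound: you only apply $\beta_j^{-1}$ to values that are actually attained (for $j \in S_q \subseteq S_p$ one has $\beta_j(p_j)=u^*_p$ by the equilibrium condition), so the bijection/continuity remark in your opening is in fact dispensable, and the strict inequality $u^*_q < R_j$ for $j \in S_q$ (from $q_j>0$ and strictness) correctly disposes of the tie case $R_j = u^*$. Two small observations: first, your Step 2 is subsumable into Step 1's machinery (with $u^*_p = u^*_q$, the same support-inclusion argument gives $S_q \subseteq S_p$ and $p_j = q_j$ on $S_q$, whence $\sum_{j\in S_q} p_j = 1$ forces $S_p = S_q$), so the proof could be compressed; second, your approach yields slightly more than the lemma asks --- it identifies the equilibrium utility level as an invariant, which is precisely the quantity $u_c(\bm C)$ that the paper introduces separately in Section~\ref{sec:welfare}, and it echoes the support-rigidity facts the paper later proves (Claim~\ref{claim:participation-probabilities-po}, Theorem~\ref{thm:uniqueness-NE}). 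In short: the paper's version is shorter and more elementary; yours is a longer but more structural ``common-value'' decomposition that makes the invariants explicit.
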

\begin{proof}
Let $\bm p=(p_1,\ldots,p_m)$ and $\bm p'=(p_1',\ldots,p_m')$ be two symmetric equilibria for contestants. If they are different, then there exist $i,j$ such that $p_i>p_i'$ and $p_j<p_j'$. Then we get the following contradiction
\begin{align*}
    & \beta(C_i,p_i') > && (p_i'< p_i,C_i\text{ has MDU, Claim~\ref{claim:monotone-decreasing}})\\
    & \beta(C_i,p_i) \ge && (p_i>0, \text{Claim~\ref{claim:equilibrium-condition}})\\
    & \beta(C_j,p_j) > && (p_j<p_j',C_j\text{ has MDU, Claim~\ref{claim:monotone-decreasing}})\\
    & \beta(C_j,p_j') \ge && (p_j'>0, \text{Claim~\ref{claim:equilibrium-condition}})\\
    & \beta(C_i,p_i'). && \hfill \qedhere
\end{align*}
\end{proof}

\section{Main Results: Equilibria in Contest Competition Games}
\label{sec:main_results}

Our first main result shows that choosing a maximal rent dissipation contest with a monotonically decreasing utility is a subgame-perfect equilibrium of the CCG game, and, moreover, that such a maximal rent dissipation contest is a dominant contest when the set of all possible contests contains only contests with monotonically decreasing utilities:
\begin{theorem}\label{thm:max-rent-dissipation-NE}
~~
\begin{enumerate}
    \item Fix any $CCG(m, n, (R_i)_{i=1}^m, (\mathcal S_i)_{i=1}^m)$ where each $\mathcal S_i \subseteq \mathcal C_{R_i}$ contains a maximal rent dissipation contest that has monotonically decreasing utility, denoted by $T_i\in\mrd(\mathcal S_i)$. Then, $(T_1, \ldots, T_m)$ is a contestant-symmetric subgame-perfect equilibrium.

    \item Moreover, if 
    each $\mathcal S_i$ only contains contests with monotonically decreasing utility,
    then $T_i$ is a dominant contest for each designer $i$. 

\end{enumerate}
\end{theorem}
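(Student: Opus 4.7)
The plan focuses on Part 2; Part 1 then follows since its hypothesis forces every $T_j \in \mathrm{MRD}(\mathcal S_j)$ to be an MDU contest. Fix designer $i$, an arbitrary profile $\bm C_{-i}$ of MDU contests for the other designers, and a candidate deviation $C_i \in \mathcal S_i$. Let $\bm p$ and $\bm p'$ denote the contestants' symmetric equilibria under $(T_i, \bm C_{-i})$ and $(C_i, \bm C_{-i})$, with common per-contestant utilities $\beta^*$ and $\beta'^*$. The goal is $u_i(T_i, \bm C_{-i}) \ge u_i(C_i, \bm C_{-i})$.

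Two monotonicity lemmas do the work, each by contradiction. First, I would show $\beta'^* \ge \beta^*$: supposing the opposite, I go through the four support-configurations for each index $j$ (in/out of $\Supp(\bm p)$ and $\Supp(\bm p')$), applying Claim~\ref{claim:equilibrium-condition}, strict MDU (Claim~\ref{claim:monotone-decreasing}) of $C_j$ and $T_i$, and MRD of $T_i$ (Claim~\ref{claim:mrd-inequality}), to force $p_j' \ge p_j$ for every $j$, with strict inequality whenever $j \in \Supp(\bm p')$; since $\Supp(\bm p')$ is non-empty, summing contradicts $\sum_j p_j' = 1 = \sum_j p_j$. Second, assuming $p_i, p_i' > 0$, I would show $p_i \le p_i'$: if $\beta'^* > \beta^*$, a parallel case analysis for $j \ne i$ forces $p_j \ge p_j'$, so $p_i > p_i'$ would violate the sum constraint; if $\beta'^* = \beta^*$, MRD gives $\beta(T_i, p_i') \le \beta(C_i, p_i') = \beta^* = \beta(T_i, p_i)$, and strict MDU of $T_i$ yields $p_i' \ge p_i$ directly.

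To conclude, define $\tilde h(p, \beta) := R_i(1-(1-p)^n) - np\beta$, so that $u_i(T_i, \bm C_{-i}) = \tilde h(p_i, \beta^*)$ and $u_i(C_i, \bm C_{-i}) = \tilde h(p_i', \beta'^*)$ whenever the relevant probability is positive. Elementary calculus shows that $\tilde h(\cdot, \beta)$ is strictly concave with maximizer $p_1^*(\beta) := 1 - (\beta/R_i)^{1/(n-1)}$, and $\tilde h(p, \cdot)$ is strictly decreasing for $p > 0$. Since $\cutil_{T_i}(1) = R_i$, dropping the non-negative terms in the binomial expansion of $\beta(T_i, p_i) = \beta^*$ yields $\beta^* \ge R_i(1-p_i)^{n-1}$, i.e., $p_i \ge p_1^*(\beta^*)$. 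When $p_i, p_i' > 0$ this delivers the chain
\begin{equation*}
u_i(C_i, \bm C_{-i}) = \tilde h(p_i', \beta'^*) \le \tilde h(p_i', \beta^*) \le \tilde h(p_i, \beta^*) = u_i(T_i, \bm C_{-i}),
\end{equation*}
where the first inequality uses $\beta'^* \ge \beta^*$ and the second uses $p_1^*(\beta^*) \le p_i \le p_i'$ together with concavity. The edge cases are direct: $p_i' = 0$ gives $u_i(C_i, \bm C_{-i}) = 0$; and $p_i = 0 < p_i'$ forces $R_i \le \beta^* \le \beta'^*$ by the equilibrium condition, so Bernoulli's inequality yields $u_i(C_i, \bm C_{-i}) \le \beta'^*[1 - (1-p_i')^n - np_i'] \le 0$, hence $=0$.

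The main obstacle is the case-by-case analysis underlying the two monotonicity lemmas, which must simultaneously track all four support-configurations for every index while invoking the correct property (strict MDU of $C_j$, MDU of $T_i$, or MRD of $T_i$) in each; once these are in hand, the concluding chain through $\tilde h$ is mechanical.
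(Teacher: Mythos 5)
Your proposal is correct and takes essentially the same route as the paper's proof (Lemma~\ref{lem:general_dominance} together with Claim~\ref{claim:probability}): contradiction-chain monotonicity of the contestants' symmetric equilibrium via Claims~\ref{claim:equilibrium-condition}, \ref{claim:monotone-decreasing} and~\ref{claim:mrd-inequality}, followed by showing that the designer's payoff $R_i\left[1-(1-p)^n\right]-np\beta$ is decreasing beyond the original participation probability, using exactly the paper's key bound $\beta^*\ge R_i(1-p_i)^{n-1}$. The only cosmetic differences are the ordering (you establish $\beta'^*\ge\beta^*$ first and probability monotonicity second, whereas the paper proves the probability claims first and derives the $\beta$-comparison from them) and your Bernoulli-inequality treatment of the case $p_i=0<p_i'$, which the paper instead shows is vacuous via part~(c) of Claim~\ref{claim:probability}.
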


The full proof, as well as most other proofs in this section are deferred to Appendix~\ref{app:main_results_proofs}. In a very high-level, the argument for why MRD contests constitute an equilibrium for the contest competition game is the following.
Consider any contest designer $i$. Suppose each of the $n$ contestants participates in designer $i$'s contest with some probability $p_i$ (assuming a symmetric participation equilibrium).  According to Claim~\ref{claim:designer-utility}, designer $i$'s expected utility equals
\[ u_i(C_i, \bm C_{-i})=
R_i \left[ 1 - (1-p_i)^n\right] - np_i \beta(C_i, p_i), \]
where we recall that $\beta(C_i, p_i)$ is each contestant's expected utility conditioning on her already participating in $C_i$. 
%
Now, suppose that contest designer $i$ switches to a contest $C'_i$ that requires less effort from the contestants (namely, leaving more utility to the contestants) and hence increases the participation probability to $p_i' = p_i + \Delta p$.  The welfare term is increased by $\Delta p \frac{\partial R_i[1 - (1-p_i)^n]}{\partial p_i}  =  n \Delta p R_i (1-p_i)^{n-1} $.  
A contestant's utility in contest $i$ will be increased (here we will use the condition that other contests $C_j$, $j\ne i$, are MDU contests, as explained in the formal proof) and suppose it is increased to $\beta(C_i', p_i')=\beta(C_i, p_i) + \Delta \beta$ ,
so the utility of each contestant is increased by
\begin{align*}
p_i'\beta(C'_i, p_i') - p_i \beta(C_i, p_i) = (p_i + \Delta p) (\beta(C_i, p_i) + \Delta \beta) - p_i \beta(C_i, p_i) & = \Delta p \beta(C_i, p_i) + p_i \Delta \beta + \Delta p \Delta \beta \\
& > \Delta p \beta(C_i, p_i) \\
& \ge \Delta p R_i (1-p_i)^{n-1},
\end{align*}
where the last inequality is because a contestant obtains utility $R_i$ when no other contestants participate in $C_i$, which happens with probability $(1-p_i)^{n-1}$. 
Thus, the overall increase of contestants' utility is greater than $n \Delta p R_i  (1-p_i)^{n-1} $, outweighing the increase of the welfare term, so the designer's utility is decreased.


Theorem~\ref{thm:max-rent-dissipation-NE} has the following implication regarding $\wta$ (or any other full rent dissipation contest):

\begin{corollary}\label{cor:full-rent-dissipation-NE}
~~
\begin{enumerate}
    \item Fix any $CCG(m, n, (R_i)_{i=1}^m, (\mathcal S_i)_{i=1}^m)$ where each $\mathcal S_i \subseteq \mathcal C_{R_i}$ contains a full rent dissipation contest (e.g., $\wta$), denoted by $F_i$. Then, $(F_1, \ldots, F_m)$ is a contestant-symmetric subgame-perfect equilibrium. 
    
    \item Let $\mathcal T_{R_i}$ be the set of all Tullock contests with reward $R_i$. 
    Then, 
    $\wta$ and any other Tullock contest with $\tau \ge 2$ is a dominant contest for every designer in $CCG(m, n, (R_i)_{i=1}^m, (\mathcal T_{R_i})_{i=1}^m)$.
    
    \item If $\mathcal S_i$ is the set of all Tullock contests with parameter $\tau_i$ in some range whose maximum $\tau^{\max}_i$ is well defined and at most $2$. Then, the Tullock contest with $\tau^{\max}_i$ is the only dominant contest for every designer in $CCG(m, n, (R_i)_{i=1}^m, (\mathcal S_i)_{i=1}^m)$. 

\end{enumerate}
\end{corollary}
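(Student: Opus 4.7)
The plan is to derive all three parts from Theorem~\ref{thm:max-rent-dissipation-NE} together with Lemma~\ref{lem:tullock}; in each part the task reduces to verifying the MDU/MRD hypotheses required by the relevant half of Theorem~\ref{thm:max-rent-dissipation-NE}. For Part 1, a full rent dissipation contest $F_i$ is trivially MDU since $\cutil_{F_i}(1) = R_i \ge 0 = \cutil_{F_i}(2) = \cdots = \cutil_{F_i}(n)$, and it is MRD in any $\mathcal S_i \subseteq \mathcal C_{R_i}$ containing it: for every $C'_i \in \mathcal C_{R_i}$ we have $\cutil_{C'_i}(1) = R_i$ (a lone contestant exerts no effort, as noted below equation~\eqref{eq:cutil-R-k}), while $\cutil_{F_i}(k) = 0 \le \cutil_{C'_i}(k)$ for $k \ge 2$ by non-negativity of contestants' utilities. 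Theorem~\ref{thm:max-rent-dissipation-NE} Part 1 then closes Part 1.

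For Part 2, Lemma~\ref{lem:tullock} ensures that every Tullock contest is MDU and that every Tullock with $\tau \ge 2$ has full rent dissipation. Hence $\mathcal T_{R_i}$ consists entirely of MDU contests, and any $\tau \ge 2$ Tullock (in particular $\wta$) is MRD in $\mathcal T_{R_i}$ by the Part~1 argument. Theorem~\ref{thm:max-rent-dissipation-NE} Part 2 then yields dominance.

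For Part 3, the third bullet of Lemma~\ref{lem:tullock} identifies the Tullock contest with parameter $\tau^{\max}_i$ as the unique member of $\MRD(\mathcal S_i)$, while the first bullet makes every contest in $\mathcal S_i$ MDU. Theorem~\ref{thm:max-rent-dissipation-NE} Part 2 therefore delivers dominance of this Tullock. The harder direction is the word \emph{only}: I must rule out dominance of any $C_\tau \in \mathcal S_i$ with $\tau < \tau^{\max}_i$. The plan is to promote to a strict inequality the sketch argument preceding Corollary~\ref{cor:full-rent-dissipation-NE}. Using Lemma~\ref{lem:tullock}'s explicit formula at $k=2$, where $\frac{k}{k-1} = 2$ places the comparison in the non-degenerate branch for $C_\tau$ (and either branch for $C_{\tau^{\max}_i}$ depending on whether $\tau^{\max}_i < 2$ or $\tau^{\max}_i = 2$), a direct calculation gives $\cutil_{C_\tau}(2) > \cutil_{C_{\tau^{\max}_i}}(2)$ whenever $\tau < \tau^{\max}_i \le 2$; this in turn gives $\beta(C_\tau, p) > \beta(C_{\tau^{\max}_i}, p)$ strictly for every $p \in (0,1)$. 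I then exhibit an opponent profile $\bm{C'}_{-i}$ (a natural choice is each $j \ne i$ playing its own $\tau^{\max}_j$ Tullock) under which the contestants' symmetric participation probability in designer $i$'s contest lies in $(0,1)$ regardless of whether $i$ plays $C_{\tau^{\max}_i}$ or $C_\tau$. Then the strict $\beta$ gap activates the strict version of the sketch, producing $u_i(C_{\tau^{\max}_i}, \bm{C'}_{-i}) > u_i(C_\tau, \bm{C'}_{-i})$ and contradicting the dominance of $C_\tau$. The main obstacle is precisely this uniqueness step: one must verify that such an opponent profile indeed yields interior participation in $i$'s contest under both of $i$'s candidate choices, and formalize the strict-inequality refinement of Theorem~\ref{thm:max-rent-dissipation-NE} Part 2's proof so that the pointwise $\beta$ gap genuinely translates into a strict gap in designer utility.
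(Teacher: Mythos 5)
Parts 1 and 2 of your proposal, and the dominance half of Part 3, are correct and coincide with the paper's own (essentially immediate) derivation: the paper gives no separate proof of this corollary, relying on Theorem~\ref{thm:max-rent-dissipation-NE}, Lemma~\ref{lem:tullock}, and the observation preceding Claim~\ref{claim:mrd-inequality} that a full rent dissipation contest is MDU and is MRD in any set containing it --- the same verifications you perform.

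The interiority obstacle you flag in the \emph{only} half of Part 3 is a genuine gap, and for arbitrary rewards it cannot be closed, because the claim read verbatim fails there. Adapt Example~\ref{ex:equilibria-asymmetric-rewards} to Tullock strategy sets: take $m\ge3$, $n\ge2$, let $\mathcal S_i$ be the Tullock contests with $\tau\in[0,2]$ for every $i$, and set $R_1=1$ and $R_j=\left(\frac{m-1}{m-2}\right)^{n-1}+1$ for $j\ge2$. At the all-$T$ profile the unique contestants' symmetric equilibrium (Lemma~\ref{lem:MDU-contestant-unique}) has $\tilde p_1=0$, and Claim~\ref{claim:participation-probabilities-po} then yields $p_1(C_1,\ldots,C_m)=0$ under \emph{every} strategy profile, so designer $1$'s utility is identically zero and every $C_1\in\mathcal S_1$ is vacuously dominant; in particular, no choice of opponent profile can produce the interior participation your argument needs, so the uniqueness step you left open is not merely unverified but impossible in general. (The case $n=1$ breaks the claim for the same trivial reason: a lone contestant exerts no effort, so every contest is dominant.) The ``only'' assertion therefore needs a hypothesis guaranteeing $\tilde p_i\in(0,1)$ --- for instance symmetric rewards, where the argument of Corollary~\ref{cor:uniqueness-NE-symmetric-reward} gives $0<\tilde p_i<1$ for all $i$ --- and your proof should state such a restriction explicitly rather than attempt to verify interiority for arbitrary $(R_i)_{i=1}^m$.

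Granting $\tilde p_i\in(0,1)$, your plan does go through, but you need not re-prove a strict version of Lemma~\ref{lem:general_dominance}: the paper's Claim~\ref{claim:mutual-dominate} is the ready-made tool. If some $C_\tau$ with $\tau<\tau^{\max}_i$ were also dominant, then against $\bm{T}_{-i}$ it ties in utility with $T_i$, so Claim~\ref{claim:mutual-dominate} gives $p_i=\tilde p_i$ and $\beta(C_\tau,\tilde p_i)=\beta(T_i,\tilde p_i)$; since all binomial weights are positive for $\tilde p_i\in(0,1)$ and $\cutil_{T_i}(k)\le\cutil_{C_\tau}(k)$ componentwise, equality of the $\beta$'s forces $\cutil_{C_\tau}(k)=\cutil_{T_i}(k)$ for every $k$ (exactly the argument inside Lemma~\ref{lem:necessity-of-uniqueness}), contradicting your correct $k=2$ computation $\cutil_{C_\tau}(2)=R_i\left(\frac{1}{2}-\frac{\tau}{4}\right)>R_i\left(\frac{1}{2}-\frac{\tau^{\max}_i}{4}\right)=\cutil_{T_i}(2)$ from Lemma~\ref{lem:tullock}. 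This closes the uniqueness step with the paper's existing machinery, whereas your strict-$f$ refinement, while sound, duplicates work already done in the appendix.
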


Theorem~\ref{thm:max-rent-dissipation-NE} shows a specific type of 
contestant-symmetric subgame-perfect equilibria. The following example shows that if the sets $\mathcal{S}_i$ also contain contests that do not satisfy the condition of monotonically decreasing utility, other types of equilibria may exist, and the equilibria that the theorem shows is not dominant:

\begin{example}
\label{ex:non-MDU}
 Let $m=2, n=6$, $R_1 = R_2 = 1$, both $\mathcal S_1$ and $\mathcal S_2$ consist of two contests: the $\wta$ contest and a contest $C$ that gives the reward for free when the number of participants $k=5, 6$ and runs $\wta$ otherwise. We thus have $\cutil_C = (1, 0, 0, 0, 1/5, 1/6)$, which is not monotonically decreasing.  We claim that $(C, C)$ is a contestant-symmetric subgame-perfect equilibrium and that $\wta$ is not a best-response to $C$ (and therefore not a dominant contest):  When designers choose $(C, C)$, by symmetry, contestants participate in either contest with equal probability $(0.5, 0.5)$. By direct computation (e.g., using \eqref{eq:designer-utility}), the expected utility of each designers is $(0.7812, 0.7812)$. Now suppose designer $1$ switches to $\wta$. The probabilities $(p_1,p_2)=\bm p(\wta, C)$ in the contestants' symmetric mixed strategy Nash equilibrium must satisfy, according to Claim~\ref{claim:equilibrium-condition}, $\beta(\wta, p_1) = \beta(C, p_2)$ (assuming $p_1, p_2>0$). 
By numerical methods, we find that $(p_1, p_2) = (0.4061, 0.5939)$.
The expected utility of designers is then $(0.7761, 0.7323)$. Since $0.7761 < 0.7812$, designer 1 will not switch to $\wta$.  By symmetry, designer 2 will not switch to $\wta$. Hence, $(C, C)$ is an equilibrium, and $\wta$ is not a dominant contest.  
\end{example}

In this example, for every $k$, $\bm{f}^k$ uses a Tullock contest with a parameter $\tau_k$ that depends on $k$ (namely, $\tau_k = 0$ for $k=5, 6$ and $+\infty$ otherwise). An example where we use $\tau_k=1$ instead of $\tau_k=0$ for some $k$ could be constructed in a similar way.\footnote{In particular, $m=2, n=10$, $R_1 = R_2 = 1$, both $\mathcal S_1$ and $\mathcal S_2$ consist of two contests: the $\wta$ contest and a contest $C$ with $\cutil_C = (1, 0, 0, 0, 0, 0, 1/49, 1/64, 1/81, 1/100)$, that is, choosing Tullock contest with $\tau_k=1$ when $7\le k \le 10$ and $\tau_k=+\infty$ otherwise. Then $(C, C)$ is a contestant-symmetric subgame-perfect equilibrium, and $\wta$ is not a dominant contest for either designer.  See Example~\ref{ex:non-monotone-tullock} for details. Moreover, Example~\ref{ex:wta-not-dominant} shows that even if $S_1$ contains only contests with monotonically decreasing utility, $\wta$ may not be a dominant contest for designer 1. 
}

When the sets $\mathcal{S}_i$ contain only contests with monotonically decreasing utility, the contestant-symmetric subgame-perfect equilibria that 
Theorem~\ref{thm:max-rent-dissipation-NE} describes are the only possible equilibria:

\begin{theorem}\label{thm:uniqueness-NE}
Fix any $CCG(m, n, (R_i)_{i=1}^m, (\mathcal{S}_i)_{i=1}^m)$ where each $\mathcal{S}_i \subseteq \mathcal C_{R_i}$ only contains contests with monotonically decreasing utility. Assume $\mrd(\mathcal S_i) \ne \emptyset$ for each $i$. Pick $T_i\in\mrd(\mathcal S_i)$, and let $\tilde p_i=p_i(T_1,\ldots,T_m)$ be the probability a contestant participates in contest $T_i$ in the equilibrium of contestants, and let $P=\Supp(\bm T) = \{i:\tilde p_i>0\}$ be the set of indices of contests in which contestants participate with positive probability when the contests are $(T_1,\ldots,T_m)$. Then
\begin{enumerate}

    \item for any contestant-symmetric subgame-perfect equilibrium $(C_1,\ldots,C_m)$, $p_i(C_1,\ldots,C_m)=\tilde p_i$.

    \item if $|P|\ge2$, then $(C_1,\ldots,C_m) \in \mathcal S_1\times \cdots \times \mathcal S_m$ is a contestant-symmetric subgame-perfect equilibrium if and only if $C_i\in\mrd(\mathcal S_i), \forall i \in P$.\footnote{If $p_i(C_1,...,C_m) = 0$ then contest $i$ could be anything:
    If $p_i(C_1,\ldots,C_m) = 0$, then the utility for agent $i$ is 0, which cannot be improved by choosing any other contest $C_i'$ as $(C_i, \bm{C}_{-i})$ is an equilibrium.  Moreover, by Claim~\ref{claim:probability}, $p_i(C'_i, \bm{C}_{-i})$ must be $0$ as well, so the choice of $C'_i$ does not affect the choices of contests of other designers. 
    }
    
    \item
    if $|P|=1$, let $P=\{i_0\}$, then $(C_1,\ldots,C_m) \in \mathcal S_1\times \cdots \times \mathcal S_m$ is a contestant-symmetric subgame-perfect equilibrium if and only if $\cutil_{C_{i_0}}(n)=\cutil_{T_{i_0}}(n)$.\footnote{As $p_{i_0}(C_1,\ldots,C_m)=1$, with probability 1 there are $n$ contestants in contest $i_0$, thus the contest success functions of contest $i_0$ for $k \neq n$ have no effect on the utility calculation for the contestants' best response and could be anything.
    %
    %
    %
    }
\end{enumerate}
\end{theorem}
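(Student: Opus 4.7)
The plan is to prove claims (2) and (3) separately, from which claim (1) will follow via Lemma~\ref{lem:MDU-contestant-unique}, using the observation that any two MRD contests in $\mrd(\mathcal{S}_i)$ share identical $\cutil(k)$ values (each being the pointwise minimum in $\mathcal{S}_i$), so the contestants' equilibrium problem depends only on each contest's MRD-membership, not on the specific choice.

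For the sufficiency direction of case (2), suppose $C_i \in \mrd(\mathcal{S}_i)$ for all $i \in P$. Then $\cutil_{C_i} \equiv \cutil_{T_i}$ on $P$, so Lemma~\ref{lem:MDU-contestant-unique} yields $\bm{p}(\bm{C}) = \tilde{\bm{p}}$ uniquely. Deviations by $i \in P$ are blocked by Theorem~\ref{thm:max-rent-dissipation-NE} (noting $u_i(C_i, \bm{C}_{-i}) = u_i(T_i, \bm{C}_{-i})$ since $C_i$ and $T_i$ are $\cutil$-equivalent). For $j \notin P$, since $\beta(C'_j, 0) = \cutil_{C'_j}(1) = R_j$ is invariant to the choice $C'_j \in \mathcal{S}_j$, the profile $(p'_j = 0,\ p'_i = \tilde{p}_i)$ remains an equilibrium and is unique by Lemma~\ref{lem:MDU-contestant-unique}, giving $u_j = 0$. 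For the necessity direction, suppose $(C_1, \ldots, C_m)$ is a contestant-symmetric subgame-perfect equilibrium with $C_{i^*} \notin \mrd(\mathcal{S}_{i^*})$ for some $i^* \in P$. Since $|P| \ge 2$, $p_{i^*} \in (0, 1)$, so $\bin(n, p_{i^*})$ is fully supported on $\{0, 1, \ldots, n\}$. Applying the computation sketched after Theorem~\ref{thm:max-rent-dissipation-NE}, designer $i^*$'s deviation to $T_{i^*}$ yields $u_{i^*}(T_{i^*}, \bm{C}_{-i^*}) > u_{i^*}(C_{i^*}, \bm{C}_{-i^*})$ strictly: the $\cutil$-mismatch at some $k$, weighted by positive binomial probability, produces a strict utility gap. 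This contradicts the equilibrium assumption.

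For case (3), let $P = \{i_0\}$, so $\tilde{p}_{i_0} = 1$, which by Claim~\ref{claim:equilibrium-condition} forces $\cutil_{T_{i_0}}(n) \ge R_j$ for all $j \ne i_0$. In any equilibrium $(C_1, \ldots, C_m)$, MRD gives $\cutil_{C_{i_0}}(n) \ge \cutil_{T_{i_0}}(n) \ge \max_{j \ne i_0} R_j$, so the attractiveness condition still holds and $p_{i_0}(\bm{C}) = 1$ is uniquely enforced (via Lemma~\ref{lem:MDU-contestant-unique}). Designer $i_0$'s utility is $R_{i_0} - n \cutil_{C_{i_0}}(n)$; minimizing $\cutil_{C'_{i_0}}(n)$ over $C'_{i_0} \in \mathcal{S}_{i_0}$ compatible with the attractiveness constraint yields the value $\cutil_{T_{i_0}}(n)$, so $\cutil_{C_{i_0}}(n) = \cutil_{T_{i_0}}(n)$ is both necessary (else deviation to $T_{i_0}$ strictly profits, since it preserves $p'_{i_0} = 1$) and sufficient. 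For $j \ne i_0$, the attractiveness $R_j$ is fixed and $C_{i_0}$'s MDU property ensures $p'_j = 0$ remains uniquely optimal, so $u_j = 0$. Claim (1) follows: in case (2), Lemma~\ref{lem:MDU-contestant-unique} gives $p_i = \tilde{p}_i$; in case (3), $p_{i_0} = 1 = \tilde{p}_{i_0}$ directly.

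The main technical obstacle is the strict inequality in case (2)'s necessity direction. Theorem~\ref{thm:max-rent-dissipation-NE} as stated gives only weak dominance, but its proof sketch actually yields strict improvement when $p_{i^*} \in (0, 1)$ (ensuring a non-trivial contestant response) and $\Delta \beta \ne 0$ (ensuring a genuine utility change). I will carefully reuse that computation, using $|P| \ge 2$ to guarantee $p_{i^*} \in (0, 1)$ and the non-MRD assumption to guarantee a $\cutil$-gap at some $k$ in the full support of the binomial—confirming the strict version of the inequality needed for the contradiction.
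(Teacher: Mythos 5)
Your case~(3) analysis is sound and in fact slicker than the paper's: since $T_{i_0}\in\mrd(\mathcal S_{i_0})$ gives $\cutil_{C_{i_0}}(n)\ge\cutil_{T_{i_0}}(n)\ge R_j$ for \emph{every} $C_{i_0}\in\mathcal S_{i_0}$, the profile $(1,0,\ldots,0)$ is a contestant equilibrium under every strategy profile, and Lemma~\ref{lem:MDU-contestant-unique} makes it the unique one; designer $i_0$'s problem then collapses to minimizing $\cutil_{C_{i_0}}(n)$, exactly as you say. Your extraction of a \emph{strict} improvement from the weak-dominance computation is also essentially correct and mirrors what the paper packages as Claim~\ref{claim:mutual-dominate}: utility equality forces $f(p_{i^*})=f(\hat p_{i^*})$, strict monotonicity of $f$ forces $p_{i^*}=\hat p_{i^*}$, and with $p_{i^*}\in(0,1)$ the fully supported binomial forces $\cutil_{C_{i^*}}(k)=\cutil_{T_{i^*}}(k)$ termwise, contradicting $C_{i^*}\notin\mrd(\mathcal S_{i^*})$.

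The genuine gap is the sentence ``Since $|P|\ge 2$, $p_{i^*}\in(0,1)$.'' Here $P$ is defined by the participation probabilities under $\bm T$, while $p_{i^*}=p_{i^*}(\bm C)$ is the probability under an arbitrary equilibrium $\bm C$; nothing you have written connects the two supports. What you need is $\Supp(\bm C)=P$ (both $p_{i^*}>0$ and, via $|P|\ge2$, $p_{i^*}<1$ hinge on the inclusion $P\subseteq\Supp(\bm C)$), and this is not a routine omission: for non-equilibrium profiles with asymmetric rewards the inclusion is simply \emph{false} --- in Example~\ref{ex:equilibria-asymmetric-rewards} contestants abandon contest $1$ entirely --- so any proof must genuinely use the designer-equilibrium property of $\bm C$. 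The paper handles this with two separate arguments: the inclusion $\Supp(\bm C)\subseteq P$ is Claim~\ref{claim:participation-probabilities-po}, while $P\subseteq\Supp(\bm C)$ is proved by a contradiction chain that itself invokes Lemma~\ref{lem:necessity-of-uniqueness} applied to an index $j$ with $p_j>\tilde p_j>0$ (to obtain $\beta(T_j,p_j)=\beta(C_j,p_j)$ and bridge between the $\bm C$- and $\bm T$-equilibria); note the paper's Lemma~\ref{lem:necessity-of-uniqueness} is deliberately stated for $i\in\Supp(\bm C)$ rather than $i\in P$ precisely so it can be proved first and then fed into the support-equality argument. Your proposal inverts this order --- it quantifies the necessity claim over $i^*\in P$ from the start --- and without restructuring along these lines (prove MRD-membership on $\Supp(\bm C)$ first, then derive $\Supp(\bm C)=P$, then conclude), the necessity direction of case~(2), and with it your derivation of claim~(1) in that case, does not close.
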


In the symmetric-reward case we can show that $|P|=m$ which makes the statement shorter:

\begin{corollary}
\label{cor:uniqueness-NE-symmetric-reward}
In the symmetric-reward case, i.e., $R_1 = \cdots = R_m$, $(C_1,\ldots,C_m) \in \mathcal S_1\times \cdots \times \mathcal S_m$ is a contestant-symmetric subgame-perfect equilibrium if and only if $C_i\in\mrd(\mathcal S_i)$ for all $i\in\{1, \ldots, m\}$. 
\end{corollary}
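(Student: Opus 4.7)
The plan is to derive the corollary from Theorem~\ref{thm:uniqueness-NE} by establishing that, in the symmetric-reward case, the set $P = \Supp(T_1,\ldots,T_m)$ appearing in that theorem equals $\{1,\ldots,m\}$. Once $|P| = m$ is in hand, Theorem~\ref{thm:uniqueness-NE}(2) immediately gives the ``only if'' direction of the corollary, since its characterization collapses to $C_i \in \mrd(\mathcal S_i)$ for \emph{every} $i$. The ``if'' direction requires nothing new: by assumption every contest in $\mathcal S_i$ is MDU, so if $C_i \in \mrd(\mathcal S_i)$ for each $i$, Theorem~\ref{thm:max-rent-dissipation-NE}(1) directly certifies $(C_1,\ldots,C_m)$ as a contestant-symmetric subgame-perfect equilibrium. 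I will assume $n \ge 2$ throughout (the case $n=1$ is degenerate: each designer's expected utility is identically zero and $\beta(C,p) \equiv R$ for every $C$ and $p$).

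It therefore suffices to rule out $|P| < m$. Suppose for contradiction that $\tilde p_{i_0} = 0$ for some $i_0$, and pick any $j \in P$. Applying Claim~\ref{claim:equilibrium-condition} to $j$ and using symmetric rewards, I would obtain
\[
  \beta(T_j, \tilde p_j) \;\ge\; \beta(T_{i_0}, 0) \;=\; \cutil_{T_{i_0}}(1) \;=\; R_{i_0} \;=\; R_j.
\]
On the other hand, because $T_j$ fully allocates the reward, $\cutil_{T_j}(k) \le R_j/k$ for every $k$, so $\cutil_{T_j}(k+1) < R_j$ as soon as $k \ge 1$, while $\cutil_{T_j}(1) = R_j$. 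Since $\tilde p_j > 0$ and $n \ge 2$, the binomial $\bin(n-1, \tilde p_j)$ puts strictly positive weight on some $k \ge 1$, and hence
\[
  \beta(T_j, \tilde p_j) \;=\; \Ex[k\sim \bin(n-1,\tilde p_j)]{\cutil_{T_j}(k+1)} \;<\; R_j,
\]
contradicting the previous display. Thus $P = \{1,\ldots,m\}$, and the proof is complete.

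I do not anticipate any serious obstacle here — the structural content has already been absorbed by Theorems~\ref{thm:max-rent-dissipation-NE} and~\ref{thm:uniqueness-NE}. The only subtle point is that the contradiction above leans crucially on $R_j = R_{i_0}$: without the symmetric-reward hypothesis one merely gets $R_j \ge R_{i_0}$, which is insufficient to force $|P| = m$. This is precisely why Theorem~\ref{thm:uniqueness-NE} must in general be phrased in terms of $P$ rather than $\{1,\ldots,m\}$.
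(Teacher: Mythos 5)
Your proof is correct and takes essentially the same route as the paper's: you reduce the corollary to showing $|P|=m$, derive a contradiction from $\tilde p_{i_0}=0$ via the equilibrium condition (Claim~\ref{claim:equilibrium-condition}) combined with the strict bound $\beta(T_j,\tilde p_j)<R_j=R_{i_0}$, and then invoke Theorem~\ref{thm:uniqueness-NE}. The only (immaterial) difference is that the paper obtains the strict inequality from Claim~\ref{claim:monotone-decreasing} (i.e., $\beta(T_j,\tilde p_j)<\beta(T_j,0)$, using MDU of $T_j$), whereas you derive it directly from full reward allocation via $\cutil_{T_j}(k+1)\le R_j/(k+1)<R_j$ for $k\ge 1$ — the same fact that underlies the strictness in Claim~\ref{claim:monotone-decreasing}'s own proof.
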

\begin{proof}
We need to show that $|P|=m$ in this symmetric-reward case. Assume by contradiction that there exists $i$ such that $\tilde p_i=0$. Since $\sum\limits_{\ell=1}^m\tilde p_\ell=1$, there exists $j \ne i$ such that $\tilde p_j>0$. Then by Claim~\ref{claim:monotone-decreasing}, $\beta(T_j,\tilde p_j)<\beta(T_j,0)=R_j=R_i=\beta(T_i,\tilde p_i)$. However, this contradicts the equilibrium condition (Claim~\ref{claim:equilibrium-condition}) which states that $\tilde p_j>0$ implies $\beta(T_j,\tilde p_j) \ge \beta(T_i,\tilde p_i)$. Therefore, we conclude that $\tilde p_i>0$ for all $i\in\{1,\ldots,m\}$, i.e., $|P|=m$ as required.
\end{proof}

Thus, the case of symmetric rewards is a ``clear cut'' while the general case is more involved. The following example demonstrates the need for this distinction using a setting with highly asymmetric rewards (see Appendix~\ref{sec:proof-example-asymmetric-reward} for a proof). 

\begin{restatable}{example}{equilibriaAsymmetricRewards}
\label{ex:equilibria-asymmetric-rewards}
Consider $m \ge 3$ contests and $n$ contestants. Contest 1 has reward $R_1=1$, and each of others has reward $R_j = \left(\frac{m-1}{m-2}\right)^{n-1}+1$. Each set $\mathcal{S}_i$ contains all monotonically decreasing utility contests (hence contains $\wta$). 
Then for any contest $C_1\in \mathcal S_1$, $(C_1, T_2, \ldots, T_m)$ where $T_j=\wta\in \mrd(\mathcal S_j)$ for $j=2, \ldots, m$ is a contestant-symmetric subgame-perfect equilibrium. 
In this equilibrium, $p_1(C_1, T_2, \ldots, T_m)=0$, and $p_j(C_1, T_2, \ldots, T_m)=\frac{1}{m-1}>0$ for any $j = 2,\ldots,m$. 
\end{restatable}



Finally, the equilibria in Theorem~\ref{thm:max-rent-dissipation-NE} are Pareto optimal for the contest designers:

\begin{definition}
~
\begin{itemize}
    \item For two strategy profiles $\hat{\vec{C}}=(\hat{C}_1,\ldots,\hat{C}_m), \vec{C}=(C_1,\ldots,C_m) \in \mathcal{S}_1\times\cdots\times\mathcal{S}_m$ of the contest competition game $CCG(m,n,(R_i)_{i=1}^m,(\mathcal S_i)_{i=1}^m)$, we say $\vec{C}$ is a \emph{Pareto improvement} of $\hat{\vec{C}}$, if 
$u_i(\bm{C}) \ge u_i(\hat{\bm{C}})$ for all $i\in\{1, \ldots, m\}$ and $u_i(\bm{C}) > u_i(\hat{\bm{C}})$ for at least one $i\in \{1, \ldots, m\}$. 

\item We say a strategy profile $\hat{\vec{C}}=(\hat{C}_1,\ldots,\hat{C}_m)$ is \emph{Pareto optimal}, if there is no Pareto improvement of it.

\end{itemize}
\end{definition}

\begin{theorem}
\label{thm:pareto-optimality}
The equilibria in Theorem~\ref{thm:max-rent-dissipation-NE} are Pareto optimal.
\end{theorem}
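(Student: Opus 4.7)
The plan is to argue by contradiction. Assume $\bm C=(C_1,\ldots,C_m)\in\mathcal S_1\times\cdots\times\mathcal S_m$ Pareto-improves $\bm T=(T_1,\ldots,T_m)$, so $u_i(\bm C)\ge u_i(\bm T)$ for every $i$ with strict inequality for at least one; let $\bm p^C,\beta^*$ and $\tilde{\bm p},\tilde\beta^*$ denote the symmetric contestant equilibrium data under $\bm C$ and $\bm T$. Using Claim~\ref{claim:designer-utility} with the equilibrium condition (Claim~\ref{claim:equilibrium-condition}), which collapses $\sum_i p_i^C\beta(C_i,p_i^C)$ to $\beta^*$ and $\sum_i\tilde p_i\beta(T_i,\tilde p_i)$ to $\tilde\beta^*$, I get $\sum_i u_i(\bm C)=W(\bm p^C)-n\beta^*$ and $\sum_i u_i(\bm T)=W(\tilde{\bm p})-n\tilde\beta^*$, where $W(\bm p):=\sum_i R_i[1-(1-p_i)^n]$, so strict Pareto improvement becomes $W(\bm p^C)-W(\tilde{\bm p})>n(\beta^*-\tilde\beta^*)$.

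Second, I would prove $\beta^*\ge\tilde\beta^*$ (contestants fare no worse under $\bm C$). Claim~\ref{claim:mrd-inequality} gives $\beta(T_i,p)\le\beta(C_i,p)$ for every $i,p$; from $\sum_j p_j^C=\sum_j\tilde p_j=1$, pigeonhole produces some $j^*\in\Supp(\bm T)\cup\Supp(\bm C)$ with $p_{j^*}^C\le\tilde p_{j^*}$. The case $j^*\in\Supp(\bm C)\setminus\Supp(\bm T)$ is vacuous ($\tilde p_{j^*}=0$ would force $p_{j^*}^C=0$ as well), and in the remaining two cases the strict monotonicity of $\beta(T_{j^*},\cdot)$ granted by MDU (Claim~\ref{claim:monotone-decreasing}), combined with the corner implication $p_{j^*}^C=0\Rightarrow R_{j^*}\le\beta^*$ from the contestants' equilibrium in $\bm C$, delivers $\beta^*\ge\tilde\beta^*$.

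The crux is to show $W(\bm p^C)-W(\tilde{\bm p})\le n(\beta^*-\tilde\beta^*)$, contradicting the strict inequality above. Using concavity of $p\mapsto 1-(1-p)^n$ coordinatewise, the welfare change is upper-bounded by the first-order term $n\sum_i R_i(1-\tilde p_i)^{n-1}(p_i^C-\tilde p_i)$, and the key observation is $R_i(1-\tilde p_i)^{n-1}\le\tilde\beta^*$ for every $i$: on $\Supp(\bm T)$ this coefficient equals the $k=0$ summand in the binomial expansion of $\beta(T_i,\tilde p_i)=\tilde\beta^*$, while the remaining terms (each a product of $\binom{n-1}{k}\tilde p_i^k(1-\tilde p_i)^{n-1-k}$ with $\cutil_{T_i}(k+1)\ge 0$) are nonnegative; off $\Supp(\bm T)$ it reduces to $R_i\le\tilde\beta^*$, the equilibrium condition preventing contestants from deviating to $i$. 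In the full-rent-dissipation case (e.g., $T_i=\wta$), $R_i(1-\tilde p_i)^{n-1}=\tilde\beta^*$ exactly on $\Supp(\bm T)$ and the first-order bound collapses to $-\sum_{i\notin\Supp(\bm T)}(\tilde\beta^*-R_i)p_i^C\le 0\le\beta^*-\tilde\beta^*$, which finishes the proof; the main obstacle is the general MDU case, where $R_i(1-\tilde p_i)^{n-1}<\tilde\beta^*$ strictly on $\Supp(\bm T)$ leaves a Jensen-style slack, and I expect to absorb this via a complementary invocation of the MRD inequality on the $\bm C$-side, namely $\beta^*=\beta(C_i,p_i^C)\ge\beta(T_i,p_i^C)$ expanded binomially, combined with the bound from step two.
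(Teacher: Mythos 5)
Your strategy is to deduce Pareto optimality from the aggregate inequality $\sum_i u_i(\bm C)\le\sum_i u_i(\bm T)$, and this is where the proposal breaks. Your Steps 1 and 2 are fine (Step 2 is essentially the paper's Theorem~\ref{thm:contestant-welfare-mrd}, proved with the same ingredients: Claims~\ref{claim:monotone-decreasing}, \ref{claim:mrd-inequality} and \ref{claim:equilibrium-condition}), and your Step 3 does go through in the full-rent-dissipation case, where $R_i(1-\tilde p_i)^{n-1}=\tilde\beta^*$ exactly on $\Supp(\bm T)$ — there your computation is correct and gives a genuinely different, aggregate-welfare proof for that special case. But for general MRD-plus-MDU equilibria the crux step is missing, and you say so yourself: after centering the first-order bound, $\sum_i R_i(1-\tilde p_i)^{n-1}(p_i^C-\tilde p_i)=\sum_i\bigl(c_i-\tilde\beta^*\bigr)(p_i^C-\tilde p_i)$ with $c_i:=R_i(1-\tilde p_i)^{n-1}\le\tilde\beta^*$, the terms with $i\in\Supp(\bm T)$ and $p_i^C<\tilde p_i$ contribute \emph{positively}, and nothing in your sketch bounds this slack by $n(\beta^*-\tilde\beta^*)$; the ``complementary invocation of the MRD inequality on the $\bm C$-side'' is a hope, not an argument, since $\beta^*\ge\beta(T_i,p_i^C)$ points in the wrong direction for producing an upper bound on the slack. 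Worse, your route commits you to proving a strictly stronger statement than the theorem: that $(T_1,\ldots,T_m)$ maximizes the designers' aggregate welfare $W_D$ over \emph{all} profiles. The paper establishes this only in the four special cases of Theorem~\ref{thm:total-welfare-four-cases} (via the corollary combining it with Theorem~\ref{thm:contestant-welfare-mrd}) and explicitly cautions, in the footnote preceding Theorem~\ref{thm:contestant-welfare-mrd}, that Pareto optimality and aggregate maximization do not substitute for one another here. So even if the Jensen-style slack could be absorbed, you would be carrying a burden the theorem does not require and the authors deliberately avoid claiming in general.

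The paper's own proof sidesteps aggregation entirely and is worth internalizing because your Step 2 already contains its ingredients, applied per designer instead of summed. It splits into two cases on $\bm p(\bm C)$ versus $\bm p(\bm T)$. If $p_i=\tilde p_i$ for all $i$, then Claim~\ref{claim:mrd-inequality} gives $\beta(T_i,p_i)\le\beta(C_i,p_i)$, hence $u_i(\bm T)\ge u_i(\bm C)$ for \emph{every} $i$, so no Pareto improvement. Otherwise there exist $i,j$ with $p_i>\tilde p_i$ and $p_j<\tilde p_j$, and the chain
\begin{equation*}
\beta(T_i,\tilde p_i)\le\beta(T_j,\tilde p_j)<\beta(T_j,p_j)\le\beta(C_j,p_j)\le\beta(C_i,p_i)
\end{equation*}
(equilibrium condition, MDU, MRD, equilibrium condition) combined with the strict monotonicity of $f(p)=R_i[1-(1-p)^n]-np\,\beta(T_i,\tilde p_i)$ on $[\tilde p_i,1]$ shows that this \emph{particular} designer $i$ satisfies $u_i(\bm T)=f(\tilde p_i)>f(p_i)\ge u_i(\bm C)$ — one strictly harmed designer, which is all Pareto optimality needs. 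To repair your proof, replace the aggregate comparison of Steps 1 and 3 with this pointwise argument; as written, the proposal proves the theorem only when every $T_i$ has full rent dissipation.
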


\section{Welfare Optimality}
\label{sec:welfare}

Throughout this section, let $\bm C = (C_1,\ldots,C_m)$ be a tuple of contests. 
Denote the sum of designers' expected utilities, the sum of contestants' expected utilities, and their sum by
\begin{equation*}
    W_D(\bm C)=\sum_{i=1}^mu_i(\bm C),\quad W_C(\bm C)=n\sum_{i=1}^mp_i\beta(C_i,p_i), \quad W_S(\bm C)=W_D(\bm C)+W_C(\bm C)
\end{equation*}
where $p_i=p_i(\bm C)$. 
By equilibrium condition (Claim \ref{claim:equilibrium-condition}) we have
$\beta(C_i, p_i)=\beta(C_j, p_j)$ for all $i,j \in \Supp(\bm C)$. Denote this constant by $u_c(\bm C)$. This is a contestant's expected utility in any contest in which she participates with positive probability. Note that by definition $p_i=0$ for any $i \notin \Supp(\bm C)$, so $\sum_{i \in \Supp(\bm C)}p_i=\sum_{i=1}^mp_i=1$. As a result,
\begin{align*}
    W_C(\bm C)
    =n\sum_{i=1}^mp_i\beta(C_i,p_i)
    =n\sum_{i \in \Supp(\bm C)}p_i\beta(C_i,p_i)
    =n\sum_{i \in \Supp(\bm C)}p_iu_c(\bm C)
    =nu_c(\bm C).
\end{align*}

For $W_S(\bm C)$, note that whenever at least one contestant participates in contest $i$, the sum of expected utilities of designer $i$ and the participants in that contest equals $R_i$. Define the random variables $k_1, ... ,k_m$ as the number of contestants in contests $C_1, \ldots, C_m$. We can represent $W_S(\bm C)$ as
\begin{align}
\label{eq:WSC}
    W_S(\bm C)
    &= \Ex[k_1, ... ,k_m]{\sum_{i=1}^mR_i\mathbbm{1}[k_i \geq 1]}
    = \sum_{i=1}^mR_i\Ex[k_i \sim \bin(n, p_i)]{\mathbbm{1}[k_i \ge 1]}\\
    &= \sum_{i=1}^mR_i\left[1-(1-p_i(\bm C))^n\right]
    = \sum_{i=1}^mR_i-\sum_{i=1}^mR_i(1-p_i(\bm C))^n.
\end{align}

Our main conclusion in this section is that the equilibria $\bm T$ of Theorem~\ref{thm:max-rent-dissipation-NE} obtain welfare optimality in several quite natural cases:

\begin{theorem}
\label{thm:total-welfare-four-cases}
Consider a contest competition game $CCG(m,n,(R_i)_{i=1}^m,(\mathcal S_i)_{i=1}^m)$ and fix some $T_i\in\mrd(\mathcal S_i)$ that also have monotonically decreasing utility. Then the equilibrium $(T_1,\ldots,T_m)$ maximizes $W_S$ in each one of the following cases:

\begin{enumerate}
    \item Unrestricted contest design: for all $i$, $\mathcal S_i = \mathcal C_{R_i}$.
    \item $\WTA$ is a possible contest: $\WTA \in \mathcal S_i$ for every $i$. (We note that $\WTA$ can be replaced with any other full rent dissipation contest.)
    \item A symmetric CCG: $R_1=\cdots=R_m=R$ and $\mathcal S_1=\cdots=\mathcal S_m=\mathcal S\subset\mathcal C_R$.
    \item An MRD-symmetric CCG: $R_1=\cdots=R_m=R$ and $\mrd(\mathcal S_1)=\cdots=\mrd(\mathcal S_m)$.
\end{enumerate}
\end{theorem}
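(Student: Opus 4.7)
The plan is to observe that, by equation~\eqref{eq:WSC}, the social welfare
$W_S(\bm C) = \sum_{i=1}^m R_i\bigl[1 - (1-p_i(\bm C))^n\bigr]$
depends only on the participation probabilities $\bm p(\bm C)$, which necessarily lie in the simplex $\Delta = \{\bm p \in \reals^m : p_i \ge 0,\ \sum_i p_i = 1\}$. Consequently $W_S(\bm C) \le V^* := \max_{\bm p \in \Delta} \sum_i R_i\bigl[1-(1-p_i)^n\bigr]$ for every profile $\bm C$. The whole proof then reduces to showing $W_S(\bm T) = V^*$ in each of the four stated cases. A key structural fact used throughout is that $\sum_i R_i[1-(1-p_i)^n]$ is concave in $\bm p$ (each $(1-p_i)^n$ is convex for $n \ge 2$), so standard convex-optimization tools apply.

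For cases 1 and 2, case 1 is subsumed by case 2 because $\wta \in \mathcal C_{R_i}$. In both cases any $T_i \in \mrd(\mathcal S_i)$ has the same $\cutil$-profile as $\wta$ (full rent dissipation: $\cutil_{T_i}(1)=R_i$ and $\cutil_{T_i}(k)=0$ for $k \ge 2$), so by~\eqref{eq:contestant-utility-beta}, $\beta(T_i, p) = R_i(1-p)^{n-1}$. The equilibrium condition (Claim~\ref{claim:equilibrium-condition}) for $\bm T$ becomes $R_i(1-p_i(\bm T))^{n-1} = u_c(\bm T)$ for $i \in \Supp(\bm T)$ and $R_i \le u_c(\bm T)$ for $i \notin \Supp(\bm T)$. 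Setting the Lagrange multiplier $\lambda := n\,u_c(\bm T)$ converts these precisely into the KKT stationarity and complementary-slackness conditions for maximizing the concave objective over $\Delta$: interior $p_i$ satisfy $nR_i(1-p_i)^{n-1} = \lambda$, while boundary $p_i = 0$ requires $nR_i \le \lambda$. Concavity makes KKT sufficient, so $\bm p(\bm T)$ attains $V^*$.

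For cases 3 and 4, $R_1 = \cdots = R_m = R$, so Jensen's inequality applied to the convex map $q \mapsto (1-q)^n$ gives $V^* = mR[1-(1-1/m)^n]$, uniquely achieved at $\bm p = (1/m, \ldots, 1/m)$. It remains to show $\bm p(\bm T) = (1/m, \ldots, 1/m)$. In case 3 we can take $T_1 = \cdots = T_m$, making the contestants' game symmetric; Lemma~\ref{lem:MDU-contestant-unique} (MDU implies a unique symmetric contestant equilibrium) then forces $p_i(\bm T) = 1/m$. For case 4, although the $\mathcal S_i$ may differ, the hypothesis $\mrd(\mathcal S_i) = \mrd(\mathcal S_j)$ places both $T_i$ and $T_j$ into the common MRD set, and the mutual inequalities $\cutil_{T_i}(k) \le \cutil_{T_j}(k)$ and vice versa force $\cutil_{T_i} \equiv \cutil_{T_j}$. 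Hence $\beta(T_i,\cdot) = \beta(T_j,\cdot)$ for all $i,j$; the contestants' game is again symmetric and the same argument yields $p_i(\bm T) = 1/m$.

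The main obstacle is the KKT alignment in cases 1 and 2, specifically checking that contests outside $\Supp(\bm T)$ (which can occur when rewards are heterogeneous and some $R_i$ is small) still satisfy the correct boundary condition $nR_i \le \lambda$ under the chosen multiplier. A secondary point in case 1 is verifying that $\mrd(\mathcal C_{R_i})$ consists exactly of the full-rent-dissipation contests: $\cutil_{C}(1) = R_i$ is forced because a single contestant exerts zero effort, and $\cutil_{C}(k) \ge 0$ is the universal lower bound, so the coordinate-wise minimum $\cutil$-profile is pinned down and equals that of $\wta$.
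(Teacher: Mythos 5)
Your proposal is correct, and for cases 1 and 2 it takes a genuinely different route from the paper. Both arguments begin with the same reduction via \eqref{eq:WSC}: $W_S(\bm C)$ depends only on $\bm p(\bm C)$ in the simplex, so it suffices to show that $\bm p(\bm T)$ maximizes $\sum_{i=1}^m R_i\left[1-(1-p_i)^n\right]$ there. The paper proves this by a H\"older-inequality computation yielding the closed-form bound \eqref{eq:total-welfeare-bound}, then solves the equilibrium conditions $R_i^{1/(n-1)}(1-\tilde p_i)=\mathrm{const}$ explicitly and verifies the bound is met with equality; because H\"older's equality case forces every $\tilde p_i$ to be interior, the paper must separately dispose of contests with $\tilde p_i=0$, which it does by invoking Claim~\ref{claim:participation-probabilities-po} (no alternative choice $C_i$ can attract participation to such a contest either, so those coordinates may be deleted from both $\bm T$ and $\bm C$). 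Your KKT argument sidesteps both the explicit solution and that reduction: with the multiplier $\lambda=n\,u_c(\bm T)$, the equilibrium conditions of Claim~\ref{claim:equilibrium-condition} --- $R_i(1-\tilde p_i)^{n-1}=u_c(\bm T)$ on the support and $R_i\le u_c(\bm T)$ off it, the latter because $\beta(T_i,0)=\cutil_{T_i}(1)=R_i$ --- are verbatim the stationarity and complementary-slackness conditions of the concave program, and concavity makes them sufficient for global optimality, so zero-participation contests are handled automatically rather than excised. This is arguably a cleaner treatment of the heterogeneous-reward regime (your ``main obstacle'' check, $nR_i\le\lambda$ for $i\notin\Supp(\bm T)$, is exactly Claim~\ref{claim:equilibrium-condition} applied to a supported $j$ versus the unsupported $i$, so it does go through). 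For cases 3 and 4 your argument coincides with the paper's: mutual MRD domination forces all $T_i$ to share one $\cutil$ vector, hence $\beta(T_i,\cdot)\equiv\beta(T_j,\cdot)$, the uniform vector is a symmetric contestant equilibrium, and Lemma~\ref{lem:MDU-contestant-unique} pins $\bm p(\bm T)=(1/m,\ldots,1/m)$; your Jensen step is simply the equal-rewards specialization of the paper's bound. One small slip: in case 3 you say ``we can take $T_1=\cdots=T_m$,'' but the theorem fixes arbitrary, possibly distinct, $T_i\in\mrd(\mathcal S)$ --- this is harmless, since your case-4 argument (identical $\cutil$ vectors within a common MRD set) already covers case 3 without that assumption, as the paper itself notes.
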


Note that the second case generalizes the first case and the fourth case generalizes the third case (since every symmetric CCG is also MRD-symmetric). The proof of this theorem is given in Appendix~\ref{app:welfare_proofs}. The following example shows that the conclusion of Theorem~\ref{thm:total-welfare-four-cases} does not hold in general.


\begin{example}
Take $m=2,n=2,R_1=R_2=1$. Let $C, T$ be Tullock contests with $\tau_C = 1, \tau_T = 1.2$. It can be verified that  $\cutil_C=(1,\frac{1}{4})$ and $\cutil_T=(1,\frac{1}{5})$. Suppose $\mathcal S_1=\{C\}$ and $\mathcal S_2=\{C,T\}$.  Then $\mrd(\mathcal S_1) = \{C\}$ and $\mrd(\mathcal S_2) = \{T\}$. Due to symmetry, $\bm p(C,C)=(\frac{1}{2},\frac{1}{2})$, so
\[
W_S(C,C)=2-2\left(1-\frac{1}{2}\right)^2=\frac{3}{2}.
\]
We can determine $\bm p(C,T)=(\tilde p_1,1-\tilde p_1)$ by solving $\beta(C,\tilde p_1)=1-\tilde p_1+\frac{1}{4}\tilde p_1=\beta(T,1-\tilde p_1)=\tilde p_1+\frac{1}{5}(1-\tilde p_1)$ hence $\tilde p_1=\frac{16}{31}$ and $\tilde p_2=1-\tilde p_1=\frac{15}{31}$. So
\[
W_S(C,T)=2-(1-\tilde p_1)^2-(1-\tilde p_2)^2=\frac{1441}{961}<\frac{3}{2}.
\]
\end{example}




\begin{remark}
One can construct a similar example for every $m,n$, and $R_1,\ldots,R_m$. Specifically, construct $\mathcal S_i$ as follows: consider equation set
\[
R_i\sum_{k=1}^n\binom{n-1}{k-1}\tilde p_i^{k-1}(1-\tilde p_i)^{n-k}\cutil_{C_i}(k)=R,\quad \forall i=1,\ldots,m.
\]
Find a solution $(R, \{\cutil_{C_i}(k)\}_{i=1,\ldots,m,k=2,\ldots,n})$ of it, and choose a monotonically decreasing utility contest $T_i$ that has higher rent dissipation than $C_i$ for every $i$. If $T_1,\ldots,T_m$ satisfy $p_i(T_1,\ldots,T_m)\ne\tilde p_i$ for some $i$, then $W_S(T_1,\ldots,T_m)<W_S(C_1,\ldots,C_m)$, which means $(T_1,\ldots,T_m)$ does not maximize $W_S$ any more, and $\mathcal S_i=\{C_i,T_i\}$ is then a counterexample.


\end{remark}

Although the total welfare is not always maximized at the equilibria of Theorem~\ref{thm:max-rent-dissipation-NE}, it turns out that the contestants' welfare is {\em always} minimized at these equilibria outcomes, as the next theorem shows.\footnote{We already know that each such equilibrium outcome is Pareto-optimal for the designers. However, this does not immediately imply that these equilibrium outcomes minimize the contestants welfare since (1) the game is not constant-sum as $W_S(\bm C)$ depends on the $p_i$'s and (2) Pareto-optimal outcomes need not necessarily maximize the aggregate designers' utility.}

\begin{theorem}
\label{thm:contestant-welfare-mrd}
Fix some $CCG(m,n,(R_i)_{i=1}^m,(\mathcal S_i)_{i=1}^m)$.
Let $\bm T = (T_1,\ldots,T_m)$ be one of the equilibria in Theorem~\ref{thm:max-rent-dissipation-NE}, i.e., $T_i\in\mrd(\mathcal S_i)$ and $T_i$ has monotonically decreasing utility. Then for any $\bm C \in\mathcal S_1\times\cdots\times\mathcal S_m$, $W_C(\bm T) \le W_C(\bm C)$. 
\end{theorem}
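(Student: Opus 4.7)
The plan is to first reduce the claim to a comparison of per-contestant utility. Since the preamble of this section establishes $W_C(\bm C) = n\,u_c(\bm C)$ with $u_c(\bm C) = \beta(C_i, p_i(\bm C))$ for every $i \in \Supp(\bm C)$ (Claim~\ref{claim:equilibrium-condition} ensures this value is the same across all such $i$), it will suffice to prove $u_c(\bm T) \le u_c(\bm C)$. I will abbreviate $\tilde p_j = p_j(\bm T)$ and $p_j = p_j(\bm C)$.

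My main tool will be a case split according to whether some contest in $\Supp(\bm C)$ receives weakly less traffic under $\bm C$ than under $\bm T$. In the first case, when there exists $j^\star \in \Supp(\bm C)$ with $p_{j^\star} \le \tilde p_{j^\star}$, I will observe that $p_{j^\star}>0$ forces $\tilde p_{j^\star}>0$, so $j^\star \in \Supp(\bm T)$ as well; then Claim~\ref{claim:mrd-inequality} applied to $T_{j^\star} \in \mrd(\mathcal S_{j^\star})$ against $C_{j^\star}$ yields $\beta(C_{j^\star}, p_{j^\star}) \ge \beta(T_{j^\star}, p_{j^\star})$, and Claim~\ref{claim:monotone-decreasing} (using MDU of $T_{j^\star}$) yields $\beta(T_{j^\star}, p_{j^\star}) \ge \beta(T_{j^\star}, \tilde p_{j^\star})$. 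Chaining these gives $u_c(\bm C) \ge u_c(\bm T)$ at once.

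In the complementary case, where $p_j > \tilde p_j$ for every $j \in \Supp(\bm C)$, I will run a pigeonhole argument: summing over $\Supp(\bm C)$ gives $1 = \sum_{j \in \Supp(\bm C)} p_j > \sum_{j \in \Supp(\bm C)} \tilde p_j$, which forces the existence of some $j' \in \Supp(\bm T) \setminus \Supp(\bm C)$. For this $j'$ one has $p_{j'} = 0$, so Claim~\ref{claim:equilibrium-condition} applied to $\bm C$ gives $u_c(\bm C) \ge \beta(C_{j'}, 0) = R_{j'}$, while MDU of $T_{j'}$ together with $\tilde p_{j'} > 0$ gives $u_c(\bm T) = \beta(T_{j'}, \tilde p_{j'}) \le \beta(T_{j'}, 0) = R_{j'}$, so again $u_c(\bm T) \le u_c(\bm C)$. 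The main obstacle I expect is precisely this second case: the natural chain through MRD and MDU requires an index that lies in \emph{both} supports, and when no such index exists one must instead exploit the unused contest $j'$ as a deviation target in the $\bm C$-equilibrium and as a "zero-participation" benchmark in the $\bm T$-equilibrium.
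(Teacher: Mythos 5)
Your proof is correct, and its overall skeleton matches the paper's: both reduce the claim to $u_c(\bm T)\le u_c(\bm C)$ via $W_C=n\,u_c$, and your first case is exactly the paper's first case --- when some $j^\star\in\Supp(\bm C)$ has $p_{j^\star}\le\tilde p_{j^\star}$, chain $u_c(\bm T)=\beta(T_{j^\star},\tilde p_{j^\star})\le\beta(T_{j^\star},p_{j^\star})\le\beta(C_{j^\star},p_{j^\star})=u_c(\bm C)$ using Claims~\ref{claim:monotone-decreasing} and~\ref{claim:mrd-inequality}. Where you genuinely diverge is the complementary case. The paper fixes a single index $1\in\Supp(\bm C)$, and when $p_1>\tilde p_1$ it picks any $i$ with $\tilde p_i>p_i$ (which may still lie in $\Supp(\bm C)$, i.e.\ $p_i>0$ is allowed) and runs the uniform chain $\beta(T_i,\tilde p_i)\le\beta(T_i,p_i)\le\beta(C_i,p_i)\le\beta(C_1,p_1)$, where the last step is Claim~\ref{claim:equilibrium-condition} applied to $\bm C$; this uses MDU \emph{and} MRD at the decreased index $i$. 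You instead split globally (``all of $\Supp(\bm C)$ strictly gains probability''), so your pigeonhole produces a witness $j'\in\Supp(\bm T)\setminus\Supp(\bm C)$ with $p_{j'}=0$, and you bridge through the constant $R_{j'}$: $u_c(\bm T)=\beta(T_{j'},\tilde p_{j'})\le\beta(T_{j'},0)=R_{j'}=\beta(C_{j'},0)\le u_c(\bm C)$. This is sound --- the case split is exhaustive since $\sum_j p_j=\sum_j\tilde p_j=1$, and $\beta(C_{j'},0)=\cutil_{C_{j'}}(1)=R_{j'}$ holds for every contest in the model. What your variant buys is that Claim~\ref{claim:mrd-inequality} (the MRD property) is not needed at all in the second case, only MDU of $T_{j'}$ and the normalization $\cutil_C(1)=R$; what the paper's version buys is uniformity, since the same three-lemma chain handles the decreased index whether or not it remains in $\Supp(\bm C)$, with no need to force the witness outside the support.
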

\begin{proof}
Let $\tilde p_i=p_i(\bm T)$ and $p_i=p_i(\bm C)$.
We assume without loss of generality $p_1>0$ (i.e. $1 \in \Supp(\bm C)$). Then $u_c(\bm C)=\beta(C_1,p_1)$.

If $p_1 \le \tilde p_1$, then $\tilde p_1>0$, which implies $u_c(\bm T)=\beta(T_1, \tilde p_1)$. Since $T_1$ has monotonically decreasing utility, we have $\beta(T_1,\tilde p_1) \le \beta(T_1,p_1)$ by Claim~\ref{claim:monotone-decreasing}. Since $T_1$ is also the maximal rent dissipation contest of $\mathcal S_1$, we get $\beta(T_1,p_1) \le \beta(C_1,p_1)$ by Claim~\ref{claim:mrd-inequality}. These inequalities together yield
\[
u_c(\bm T)=\beta(T_1, \tilde p_1)
\le\beta(T_1,p_1)
\le\beta(C_1,p_1)=u_c(\bm C).
\]

Otherwise $p_1>\tilde p_1$, then there exists $i\in\{2,\ldots,m\}$ such that $\tilde p_i>p_i$. Note that this implies $\tilde p_i>0$, so $u_c(\bm T)=\beta(T_i,\tilde p_i)$, and, 
\[
u_c(\bm T)=
\beta(T_i,\tilde p_i)
\stackrel{T_i \text{ has MDU, Claim~\ref{claim:monotone-decreasing}}}{\le} \beta(T_i,p_i)
\stackrel{T_i\in\mrd(\mathcal S_i), \text{Claim~\ref{claim:mrd-inequality}}}{\le} \beta(C_i,p_i)
\stackrel{p_1 > 0, \text{Claim~\ref{claim:equilibrium-condition}}}{\le} \beta(C_1,p_1)
= u_c(\bm C).
\]

In either case we would get $u_c(\bm T) \le u_c(\bm C)$, hence
$W_C(\bm T)=nu_c(\bm T) \le nu_c(\bm C)=W_C(\bm C)$.
\end{proof}

Theorems~\ref{thm:total-welfare-four-cases} and~\ref{thm:contestant-welfare-mrd} together immediately imply:

\begin{corollary}
Consider a contest competition game $CCG(m,n,(R_i)_{i=1}^m,(\mathcal S_i)_{i=1}^m)$ and fix some $T_i\in\mrd(\mathcal S_i)$ that have monotonically decreasing utility. Then the equilibrium $(T_1,\ldots,T_m)$ maximizes $W_D$ in the four cases of Theorem~\ref{thm:total-welfare-four-cases}.
\end{corollary}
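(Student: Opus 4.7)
The plan is to observe that this corollary is essentially a bookkeeping consequence of the two theorems it cites. By definition, $W_S(\bm C) = W_D(\bm C) + W_C(\bm C)$ for any strategy profile $\bm C \in \mathcal S_1\times\cdots\times\mathcal S_m$, so we can solve for $W_D(\bm C) = W_S(\bm C) - W_C(\bm C)$. The strategy will therefore be to apply Theorem~\ref{thm:total-welfare-four-cases} to upper-bound the $W_S$ term and Theorem~\ref{thm:contestant-welfare-mrd} to lower-bound the $W_C$ term, both at $\bm T$, so that the difference gets pushed in the same direction.

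Concretely, fix an arbitrary alternative profile $\bm C \in \mathcal S_1 \times \cdots \times \mathcal S_m$. Theorem~\ref{thm:total-welfare-four-cases} guarantees, under any of the four listed hypotheses, that $W_S(\bm T) \ge W_S(\bm C)$. Theorem~\ref{thm:contestant-welfare-mrd} applies to any CCG (no extra hypothesis is needed beyond $T_i \in \mrd(\mathcal S_i)$ with MDU, which we already assume) and yields $W_C(\bm T) \le W_C(\bm C)$. Combining these two chained inequalities gives
\begin{equation*}
W_D(\bm T) \;=\; W_S(\bm T) - W_C(\bm T) \;\ge\; W_S(\bm C) - W_C(\bm C) \;=\; W_D(\bm C),
\end{equation*}
which is exactly the claim that $(T_1,\ldots,T_m)$ maximizes $W_D$ over $\mathcal S_1\times \cdots \times \mathcal S_m$.

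There is essentially no obstacle here: the only subtlety is checking that both cited results apply simultaneously to the same equilibrium profile $\bm T$, but the hypotheses line up — Theorem~\ref{thm:total-welfare-four-cases} and Theorem~\ref{thm:contestant-welfare-mrd} both require exactly that $T_i \in \mrd(\mathcal S_i)$ has monotonically decreasing utility, which is precisely what the corollary assumes. No case-by-case analysis of the four scenarios is needed in the proof itself, since the four cases enter only through the invocation of Theorem~\ref{thm:total-welfare-four-cases}.
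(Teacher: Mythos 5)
Your proof is correct and matches the paper exactly: the paper offers no separate argument, stating only that Theorems~\ref{thm:total-welfare-four-cases} and~\ref{thm:contestant-welfare-mrd} ``together immediately imply'' the corollary, and the intended reasoning is precisely your decomposition $W_D(\bm T)=W_S(\bm T)-W_C(\bm T)\ge W_S(\bm C)-W_C(\bm C)=W_D(\bm C)$. Your check that both theorems' hypotheses ($T_i\in\mrd(\mathcal S_i)$ with monotonically decreasing utility) coincide with the corollary's assumptions is the only point of care, and you handled it correctly.
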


Thus, for example, $(\WTA, \ldots, \WTA)$ maximizes the designers welfare and minimizes the contestants welfare in the case of unrestricted contest design.

\section{Summary and Discussion}

This paper studies a complete-information competition game among contest designers. First, each designer chooses a contest. Second, each contestant chooses (possibly in a random way) in which contest to participate. Third, a symmetric equilibrium outcome is realized in each contest (contestants choose effort levels; the reward is allocated to them based on their realized efforts). The resulting utility of each contest designer is the sum of efforts invested in their respective contests. The resulting utility of each  contestant is the reward she receives minus the effort she invests.

Our main results characterize a certain type of contests which form an equilibrium (and may even be dominant) in this game. These equilibria are Pareto-optimal for the contest designers. Under natural conditions, these are the only possible equilibria. In addition, these equilibria maximize the social welfare (while minimizing the contestants' aggregate welfare) when designers are unrestricted in their choice of a contest, or when they are all restricted in the same way. Our results yield several conclusions regarding Tullock contests. For example, if contest designers are restricted to choose a Tullock contest with some parameter $\tau$ then any Tullock contest  with $\tau \geq 2$ (e.g., $\wta$ where $\tau=\infty$) is a dominant contest for every designer.

It may be interesting for future work to further examine and relax several of our assumptions:

\begin{trivlist}

\item {\bf The Participation Model:} We make several assumptions that (although natural in many cases) could be further relaxed. First, one may assume that contestants cannot observe the total number of contestants in the contest they chose which may be the reality in large electronic/online contests (but is less realistic in small physical contests). Second, a contestant may be able to participate in more than one contest simultaneously, e.g., in up to some fixed maximal number of contests.
Another option is to study a budgeted participation model where the total effort of each contestant could be split among several contests (see e.g.~\cite{LS20}). Third, if we allow contestants to choose an asymmetric equilibrium in response to the designers' contest success functions, our results no longer hold. Example~\ref{ex:asymmetric} in the Appendix shows that $(\wta,\wta)$ is not an equilibrium when we have two Tullock contests and three contestants that may choose an asymmetric equilibrium for their participation probabilities. It can be interesting to understand the asymmetric case as well.

\item {\bf Stochastic Ability:}
When a contestant has a cost of effort that depends on a stochastic (and private) ability,
the single contest game between a designer and $k\ge 1$ contestants is not constant-sum. Importantly, our argument in this paper uses its constant-sum property when contestants' abilities are a fixed constant; this property allows us to relate the designer's utility with contestants' utilities which can then be characterized by the equilibrium condition for the contestants.  When the game is not constant-sum, our argument cannot be applied directly.  Another obstacle to analyzing stochastic ability is the lack of explicit characterization of contestants' utilities in a contest (even in a single Tullock contest) with stochastic ability.

\item {\bf Non-MDU contests: } When the prize structure can depend on the number of participants, e.g., choosing a different Tullock contest depending on the number of participants, the MDU property will typically be violated, see e.g., Example~\ref{ex:non-MDU}. This opens the possibility of new ``exotic'' contests with new types of equilibria that do not fall within Theorem~\ref{thm:max-rent-dissipation-NE}. 

\item {\bf Risk-averse contestants: } In Appendix~\ref{app:risk-averse} we show an example with two contest designers that can choose some Tullock contest with $\tau \in [0,2]$. With risk-neutral contestants we showed that $(\tau=2,\tau=2)$ is the unique contestant-symmetric subgame-perfect equilibrium (Corollary~\ref{cor:full-rent-dissipation-NE} and Corollary~\ref{cor:uniqueness-NE-symmetric-reward}).
The example shows that with risk-averse contestants $(\tau=2,\tau=2)$ is no longer a contestant-symmetric subgame-perfect equilibrium. It can be interesting to investigate how contestants' risk perceptions affect the qualitative conclusions of this work. In particular, we conjecture that with risk-loving participants, our results continue to hold.

\end{trivlist}

\appendix

\section{Missing Proofs from Section~\ref{sec:model}}
\label{app:model_proofs}

\subsection{Proof of Claim~\ref{claim:monotone-decreasing}}
\monotonedecreasing*

\begin{proof}
By definition, 
\begin{align*}
    \beta(C_i, p) & = (1-p)^{n-1} R_i + \sum_{k=1}^{n-1} \binom{n-1}{k} p^k (1-p)^{n-1-k} \cutil_{C_i}(k+1) \\
    & = (1-p)^{n-1} \frac{R_i}{2}  +  (1-p)^{n-1} \frac{R_i}{2} + \sum_{k=1}^{n-1} \binom{n-1}{k} p^k (1-p)^{n-1-k} \cutil_{C_i}(k+1). 
\end{align*}
Let
\[ \tilde{\cutil}(k+1) = \begin{cases} \frac{R_i}{2} & k=0; \\ \cutil_{C_i}(k+1) & \text{otherwise.} \end{cases} \]
We can write $\beta(C_i, p) = (1-p)^{n-1} \frac{R_i}{2}  + \Ex[k\sim \bin(n-1, p)]{ \tilde{\cutil}(k+1)}$. 
We recall that $\cutil_{C_i}(k)$, the utility of a contestant in a contest with $k$ contestants, is at most $\frac{R_i}{k}$; in particular, $\cutil_{C_i}(2) \le \frac{R_i}{2}$. Therefore, the sequence
\[ \tilde{\cutil}(1) = \frac{R_i}{2} \ge \tilde{\cutil}(2) \ge \cdots \ge \tilde{\cutil}(n) \]
is decreasing under the assumption that $C_i$ has monotonically decreasing utility. Due to the first order stochastic dominance of a binomial distribution with a higher $p$ parameter over another binomial distribution with a lower $p$ parameter (see e.g., \citealp{Wolfstetter:99}), we have
\[ \Ex[k\sim \bin(n-1, p)]{ \tilde{\cutil}(k+1)} \ge \Ex[k\sim \bin(n-1, p')]{ \tilde{\cutil}(k+1)}  \]
which implies
\begin{align*}
    \beta(C_i, p) & = (1-p)^{n-1}\frac{R_i}{2} + \Ex[k\sim \bin(n-1, p')]{ \tilde{\cutil}(k+1)} \\
     & > (1-p')^{n-1}\frac{R_i}{2} + \Ex[k\sim \bin(n-1, p')]{ \tilde{\cutil}(k+1)} = \beta(C_i, p'). 
     \qedhere
\end{align*}
\end{proof}


\subsection{Proof of Lemma~\ref{lem:tullock}}
\tullock*

\begin{proof}
Proposition 2 of \cite{segev_schweinzer_2012} shows that for any $k\ge 2$ such that $\frac{k}{k-1} \geq \tau$, $\gamma_{C_{\tau}}(k) = R( \frac{1}{k} - \frac{k-1}{k^2} \tau)$; in particular, $0\le \gamma_{C_{\tau}}(k) \le \frac{1}{k}R < R = \gamma_{C_\tau}(1)$. 
Corollary 5 of \cite{ewerhart_2017_mdu} shows that if $\frac{k}{k-1} < \tau$, then $\gamma_{C_{\tau}}(k) = 0$. \cite{baye1996apa} show that for $\tau=+\infty$, $\cutil_{C_{\tau}}(k)=0$ for $k\ge 2$. 

To prove the third corollary of the lemma, we note that for any fixed $k$, $\cutil_{C_\tau}(k) = R( \frac{1}{k} - \frac{k-1}{k^2} \tau)$ is a non-negative decreasing function of $\tau$ for $0\le \tau \le \frac{k}{k-1}$, and $\cutil_{C_\tau}(k) = 0$ for $\tau > \frac{k}{k-1}$, 
so $\cutil_{C_{\tau^{\max}}}(k) \le \cutil_{C_{\tau'}}(k)$ for any $\tau' \le \tau^{\max}$. 

We now turn our attention to the first two corollaries. 
If $\tau > 2$, then $\frac{k}{k-1} < \tau$ for any $k \ge 2$, so $C_\tau$ has full rent dissipation. 
If $\tau = 2$, then for $k=2$ we $\gamma_{C_{\tau}}(k) = R(\frac 1 k - \frac{k-1}{k^2}\tau) = 0$, and for any $k\geq 3$ we have $\gamma_{C_{\tau}}(k) = 0$ since $\frac{k}{k-1} < \tau$, so $C_\tau$ has full rent dissipation.
For any $\tau < 2$, we let $K_{\tau}$ be the first $k\in \{2,3,...\}$ such that $\frac{K_{\tau}}{K_{\tau}-1} < \tau$ (let $K_{\tau} = \infty$ for $\tau \le 1$). In order to show that $C_\tau$ has monotonically decreasing utility, we only need to show that $\gamma_{C_{\tau}}(k)$ is monotonically decreasing in $k$ in the range $2\leq k < K_{\tau}$.  Consider the function $f(x) = \frac{1}{x} - \frac{x-1}{x^2} \tau$ for $2\le x < K_{\tau}$.  We take its derivative:
\[
 f'(x) = -\frac{1}{x^2} - \frac{x^2 - 2(x-1)x}{x^4} \tau = - \frac{1}{x^2} + \frac{x-2}{x^3}\tau. 
\]
Because $\tau \le \frac{x}{x-1}$ for $x< K_\tau$, we have 
\[
 f'(x) \le - \frac{1}{x^2} + \frac{x-2}{x^3} \frac{x}{x-1} = \frac{1}{x^2}\left(-1 + \frac{x-2}{x-1}\right) = -\frac{1}{x^2} \cdot \frac{1}{x-1} \le 0. 
\]
Thus, $f(x)$ is monotonically decreasing, and so is $\gamma_{C_{\tau}}(k)$. 
\end{proof}

\section{Missing Proofs from Section~\ref{sec:main_results}}
\label{app:main_results_proofs}

Throughout this section we assume that $\mathcal S_i \subseteq \mathcal C_{R_i}$ contains a maximal rent dissipation contest with monotonically decreasing utility, which we denote by $T_i \in \mrd(\mathcal S_i)$.

\subsection{Analysis of the contest competition game: proof of Theorem~\ref{thm:max-rent-dissipation-NE}} \label{sec:designer-game-proof-of-lemma}

Theorem~\ref{thm:max-rent-dissipation-NE} immediately follows from the following lemma, which is our main technical lemma.  It shows that for any designer $i$, choosing $T_i$ is always a best response if other designers choose contests with monotonically decreasing utility. 
\begin{lemma}
\label{lem:general_dominance}
Fix any $CCG(m, n, (R_i)_{i=1}^m, (\mathcal S_i)_{i=1}^m)$ where for any $i$, $\mathcal S_i \subseteq \mathcal C_{R_i}$ contains a maximal rent dissipation contest $T_i$ that has monotonically decreasing utility.
Fix some designer $i$ and for all $j \neq i$ fix some $\hat C_j \in \mathcal S_j$ with monotonically decreasing utility. Then $u_i(T_i, \hat{\bm C}_{-i}) \ge u_i(C_i, \hat{\bm C}_{-i})$ for all $C_i\in\mathcal S_i$.  
\end{lemma}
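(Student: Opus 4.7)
The plan is to follow the informal sketch preceding the lemma and make it rigorous by a careful monotonicity analysis. Let $p = p_i(T_i, \hat{\bm C}_{-i})$ and $p' = p_i(C_i, \hat{\bm C}_{-i})$ denote the equilibrium probabilities of a contestant participating in contest $i$ under the two strategies, and let $u_c^{\mathrm{old}}, u_c^{\mathrm{new}}$ be the corresponding symmetric equilibrium utilities of a contestant. The core idea is to first show that switching from $T_i$ to a less effort-extracting $C_i$ (weakly) raises participation in $i$ and (weakly) raises each contestant's equilibrium utility from $i$, and then to plug these two monotonicity facts into the designer utility formula of Claim~\ref{claim:designer-utility} and observe that the welfare gain from increased participation is dominated by the additional contestant rent that must be paid out.

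Step 1 (participation monotonicity, $p' \ge p$): I would argue by contradiction. If $p > p'$, then $\sum_{j\ne i} p'_j > \sum_{j\ne i} p_j$, so some $j\ne i$ satisfies $p'_j > p_j \ge 0$, and moreover $p>0$. Chaining the equilibrium condition (Claim~\ref{claim:equilibrium-condition}) at the old and new equilibria with the strict monotonicity of $\beta(T_i,\cdot)$ and $\beta(\hat C_j,\cdot)$ (Claim~\ref{claim:monotone-decreasing}, using MDU of $T_i$ and $\hat C_j$) and the MRD inequality $\beta(C_i,\cdot) \ge \beta(T_i,\cdot)$ (Claim~\ref{claim:mrd-inequality}) yields the circular chain
$$\beta(T_i, p) \ge \beta(\hat C_j, p_j) > \beta(\hat C_j, p'_j) \ge \beta(C_i, p') \ge \beta(T_i, p') > \beta(T_i, p),$$
a contradiction.

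Step 2 (contestant utility comparison, $\beta(C_i, p') \ge \beta(T_i, p)$): If $p'=p$, this is immediate from MRD. If $p > 0$ and $p' > p$, then $\sum_{j\ne i} p'_j < \sum_{j\ne i} p_j$, so some $j\ne i$ has $p'_j < p_j$; using strict MDU of $\hat C_j$ at both equilibrium conditions (splitting the trivial sub-sub-cases $p'_j > 0$ vs.\ $p'_j = 0$) forces $u_c^{\mathrm{new}} > u_c^{\mathrm{old}}$, hence $\beta(C_i, p') > \beta(T_i, p)$. The boundary case $p = 0, p' > 0$ turns out to be impossible: on one hand, $p=0$ gives $u_c^{\mathrm{old}} \ge \beta(T_i,0)=R_i$ while $p'>0$ gives $u_c^{\mathrm{new}} = \beta(C_i, p') < R_i$ (since $\cutil_{C_i}(k+1)\le R_i/(k+1)<R_i$ for $k\ge 1$ and such $k$ have positive weight in the binomial expansion \eqref{eq:contestant-utility-beta}), yet the same spillover argument forces $u_c^{\mathrm{new}} > u_c^{\mathrm{old}}$, a contradiction. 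So $p=0$ implies $p'=0$, in which case $u_i(T_i,\hat{\bm C}_{-i})=0=u_i(C_i,\hat{\bm C}_{-i})$ and the lemma holds trivially.

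Step 3 (assembling the designer comparison): Assume $p>0$, so also $p'\ge p>0$ by Step 1. Using Claim~\ref{claim:designer-utility},
$$u_i(T_i,\hat{\bm C}_{-i}) - u_i(C_i,\hat{\bm C}_{-i}) = R_i\bigl[(1-p')^n - (1-p)^n\bigr] + n\bigl[p'\beta(C_i,p') - p\beta(T_i,p)\bigr].$$
For the bracketed ``contestant rent'' term I write $p'\beta(C_i,p') - p\beta(T_i,p) = (p'-p)\beta(T_i,p) + p'\bigl[\beta(C_i,p') - \beta(T_i,p)\bigr] \ge (p'-p)\beta(T_i,p)$ by Step 2, and then $\beta(T_i,p) \ge (1-p)^{n-1}\cutil_{T_i}(1) = R_i(1-p)^{n-1}$ by keeping only the $k=0$ term of the expansion \eqref{eq:contestant-utility-beta}. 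For the ``welfare'' term, the mean value theorem applied to $(1-x)^n$ yields $(1-p)^n - (1-p')^n \le n(1-p)^{n-1}(p'-p)$. The two bounds cancel exactly, giving $u_i(T_i,\hat{\bm C}_{-i}) - u_i(C_i,\hat{\bm C}_{-i}) \ge 0$.

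The hardest part will be Step 2: showing that when a designer abandons her MRD contest in favor of one that leaves contestants more rent, \emph{equilibrium} contestant utility does not drop. This is exactly where the MDU hypothesis on the other designers' contests is indispensable --- without the strict monotonicity of $\beta(\hat C_j,\cdot)$, the spillover of contestants from contest $i$ into the other contests (when $p'<p$) would not translate into a rise in their equilibrium utilities, and the delicate cancellation in Step 3 between welfare gain and extra rent handed out would fail.
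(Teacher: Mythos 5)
Your proposal is correct and takes essentially the same route as the paper's proof: your Step 1 and the impossibility of $p=0,\,p'>0$ reproduce parts (\ref{claim:probability-p1}) and (\ref{claim:probability-pi=0}) of Claim~\ref{claim:probability}, your Step 2 chain is the paper's inequality \eqref{eq:contestant-utility-chain} (with part (\ref{claim:probability-pj}) folded into a counting argument selecting $j$ with $p'_j<p_j$), and your Step 3 decomposition plus mean value theorem is exactly the paper's monotonicity argument for $f(p)=R_1[1-(1-p)^n]-np\,\beta(T_1,\hat p_1)$ in Appendix~\ref{app:main_results_proofs}. The bounds $\beta(T_i,p)\ge R_i(1-p)^{n-1}$ and $(1-p)^n-(1-p')^n\le n(1-p)^{n-1}(p'-p)$ cancel exactly as in the paper, and your uniform treatment via the $p'=p$ sub-case even absorbs the paper's separate $\hat p_1=1$ case, so there are no gaps.
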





We prove a useful claim before proving the lemma.
The claim is about the change of contestants' participation equilibrium when a designer switches her contest.
Intuitively, we expect that, with all other contests left unchanged, a designer (say, designer 1) setting her contest to have less rent dissipation results in higher participation (properties (\ref{claim:probability-p1}) and (\ref{claim:probability-pj}) in the next claim). A more surprising property is that if participants are unwilling to participate in contest $1$ under maximal rent dissipation, then no other contest will be lucrative enough to attract them (property (\ref{claim:probability-pi=0})). 

\begin{claim}\label{claim:probability}
Following the notations in Lemma~\ref{lem:general_dominance}, we let $(\hat p_1, \hat p_2, \ldots, \hat p_m) =\bm{p}(T_1, \hat C_2, \ldots, \hat C_m)$, and for any $C_1 \in \mathcal S_1$, let 
$(p_1, p_2, \ldots, p_m) = \bm{p}(C_1, \hat C_2, \ldots, \hat C_m)$. 
Then,
\begin{enumerate}[(a)]
    \item $p_1 \ge \hat p_1$; \label{claim:probability-p1}
    \item $p_j \le \hat p_j$ for all $j\in\{2, \ldots, m\}$; \label{claim:probability-pj}
    \item $\hat p_i = 0 \implies p_i=0$ for all $i\in \{1, 2, \ldots, m\}$.\label{claim:probability-pi=0}
\end{enumerate}
\end{claim}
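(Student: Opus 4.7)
The plan is to prove parts (a), (b), (c) in order, each by contradiction, chaining inequalities from three sources: the MRD property of $T_1$ (Claim~\ref{claim:mrd-inequality}), strict MDU of $T_1$ and of each $\hat C_j$ (Claim~\ref{claim:monotone-decreasing}), and the equilibrium conditions at the two profiles $\bm T = (T_1, \hat{\bm C}_{-1})$ and $(C_1, \hat{\bm C}_{-1})$ (Claim~\ref{claim:equilibrium-condition}). The recurring pattern is that a strict gap between $\hat p_i$ and $p_i$ yields a strict MDU inequality on one side while the equilibrium condition gives a weak inequality on the other side, producing a contradiction upon concatenation.

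For part (a), I assume $p_1 < \hat p_1$. Then $\hat p_1 > 0$, and since $\sum_j p_j = \sum_j \hat p_j = 1$, some $j \ne 1$ must have $p_j > \hat p_j \ge 0$, hence $p_j > 0$. I build the chain
\[
\beta(C_1, p_1) \;\stackrel{\text{MRD}}{\ge}\; \beta(T_1, p_1) \;\stackrel{\text{MDU}(T_1)}{>}\; \beta(T_1, \hat p_1) \;\stackrel{\text{eq.\ at }\bm T}{\ge}\; \beta(\hat C_j, \hat p_j) \;\stackrel{\text{MDU}(\hat C_j)}{>}\; \beta(\hat C_j, p_j),
\]
contradicting the equilibrium condition at $(C_1, \hat{\bm C}_{-1})$, which, since $p_j > 0$, yields $\beta(\hat C_j, p_j) \ge \beta(C_1, p_1)$. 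Part (b) is analogous: if $p_j > \hat p_j$ for some $j \ne 1$, then $p_j > 0$, and by part (a) together with the sum constraint some $k \notin\{1, j\}$ satisfies $p_k < \hat p_k$ (so $\hat p_k > 0$); chaining equilibrium at $\bm T$ with two applications of strict MDU gives $\beta(\hat C_k, p_k) > \beta(\hat C_k, \hat p_k) \ge \beta(\hat C_j, \hat p_j) > \beta(\hat C_j, p_j)$, contradicting the equilibrium condition at $(C_1, \hat{\bm C}_{-1})$.

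For part (c), the case $i \ne 1$ follows immediately from part (b). For $i = 1$, assume $\hat p_1 = 0$ and $p_1 > 0$ for contradiction: some $j \ne 1$ has $\hat p_j > 0$, and part (b) plus weak MDU give $\beta(\hat C_j, p_j) \ge \beta(\hat C_j, \hat p_j)$; equilibrium at $\bm T$ (using $\hat p_j > 0$ and $\hat p_1 = 0$) gives $\beta(\hat C_j, \hat p_j) \ge \beta(T_1, 0) = R_1$; and equilibrium at $(C_1, \hat{\bm C}_{-1})$ (using $p_1 > 0$) gives $\beta(C_1, p_1) \ge \beta(\hat C_j, p_j) \ge R_1$. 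The contradiction comes from the strict bound $\beta(C_1, p_1) < R_1$ whenever $p_1 > 0$ and $n \ge 2$, which holds because $\cutil_{C_1}(k) \le R_1/k < R_1$ for all $k \ge 2$ and at least one such $k$ carries positive binomial weight in $\beta(C_1, p_1)$. The main obstacle is carefully tracking strict versus weak inequalities throughout these chains, together with this terminal strict bound on $\beta(C_1, p_1)$ needed in part (c).
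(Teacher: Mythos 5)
Your proof is correct and follows essentially the same route as the paper's: each part is a proof by contradiction that chains the equilibrium condition (Claim~\ref{claim:equilibrium-condition}), strict MDU monotonicity (Claim~\ref{claim:monotone-decreasing}), and the MRD comparison (Claim~\ref{claim:mrd-inequality}) into a contradictory cycle, with the same case structure throughout. The only cosmetic difference is in part (c), where the paper derives strictness from strict MDU applied to some $j$ with $p_j < \hat p_j$, while you derive it from the terminal bound $\beta(C_1, p_1) < R_1$ for $p_1 > 0$ (via $\cutil_{C_1}(k) \le R_1/k$) --- an equally valid variant of the same argument.
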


\begin{proof}
\textbf{Proof of (\ref{claim:probability-p1}).} Assume by contradiction $\hat{p}_1 > p_1 \geq 0$, then there is some designer $j\ne 1$ with $\hat p_j < p_j$, because the probabilities sum to one. We now have the contradiction: 
\begin{align*}
    & \beta(T_1, \hat p_1) \ge && (\hat p_1 > 0, \text{ Claim~\ref{claim:equilibrium-condition}}) 
     \\
     & \beta(\hat C_j, \hat p_j) > && (\hat p_j < p_j, \hat C_j \text{ has MDU, Claim~\ref{claim:monotone-decreasing}} ) \\
     & \beta(\hat{C}_j,p_j) \ge && ( p_j > 0, \text{ Claim~\ref{claim:equilibrium-condition}} ) \\
     & \beta(C_1, p_1) \ge && (T_1\in \mrd(\mathcal S_1)\text{, Claim~\ref{claim:mrd-inequality}}) \\
     & \beta(T_1,p_1) > && (p_1<\hat p_1,T_1 \text{ has MDU, Claim~\ref{claim:monotone-decreasing}}) \\
    & \beta(T_1,\hat{p}_1).
\end{align*}

\noindent
\textbf{Proof of (\ref{claim:probability-pj}).} From (\ref{claim:probability-p1}) we know
\begin{equation}
\label{eq:sum_of_p_increase}
\sum\limits_{j=2}^m \hat{p}_j =  1 - \hat p_1\ge 1 - p_1 = \sum\limits_{j=2}^mp_j. 
\end{equation}
Assume by contradiction that for some $j_0\in\{2,3,\ldots,m\}$, $p_{j_0}>\hat p_{j_0}\geq0$, then by \eqref{eq:sum_of_p_increase} there must exist $j\in\{2,3,\ldots,m\}$ with $j\ne j_0$ such that $\hat p_j>p_j\ge0$. We therefore have the contradiction
\begin{align*}
    & \beta(\hat C_j,p_j) > && (p_j<\hat p_j,\hat C_j\text{ has MDU, Claim~\ref{claim:monotone-decreasing}}) \\
    & \beta(\hat C_j,\hat p_j) \ge && (\hat p_j>0, \text{Claim~\ref{claim:equilibrium-condition}}) \\
    & \beta(\hat C_{j_0},\hat p_{j_0}) > && (\hat p_{j_0}<p_{j_0},\hat C_{j_0}\text{ has MDU, Claim~\ref{claim:monotone-decreasing}}) \\
    & \beta(\hat C_{j_0},p_{j_0}) \ge && (p_{j_0}>0, \text{Claim~\ref{claim:equilibrium-condition}}) \\
    & \beta(\hat C_j,p_j).
\end{align*}

\noindent
\textbf{Proof of (\ref{claim:probability-pi=0}).} For any $i\in \{2, \ldots, m\}$, (c) is an immediate corollary of (b). Consider $i=1$, and assume by contradiction that $p_1 > \hat{p}_1 = 0$. Then there is some designer $j\ne 1$ with $p_j < \hat{p}_j$
because the probabilities sum to one. We then have the contradiction

\begin{align*}
    & \beta(T_1, \hat p_1) = && (\hat p_1 = 0) \\
    & \cutil_{T_1}(1) = && ( \text{By definition}) \\
    & R_1 \geq && (\cutil_{C_1}(k+1) \le \frac{R_1}{k+1} \le R_1 \text{ by Eq.~\eqref{eq:cutil-R-k}}) \\
    & \Ex[k\sim \bin(n-1, p_1)]{\cutil_{C_1}(k+1)} = \beta(C_1, p_1) \ge && (p_1>0, \text{Claim~\ref{claim:equilibrium-condition}}) \\
    & \beta(\hat{C}_j,p_j) > && (p_{j}< \hat p_{j},\hat C_{j}\text{ has MDU, Claim~\ref{claim:monotone-decreasing}}) \\
    & \beta(\hat{C}_j,\hat{p}_j) \geq && (\hat{p}_j>0, \text{Claim~\ref{claim:equilibrium-condition}}) \\
    & \beta(T_1,\hat{p}_1). && \hfill \qedhere
\end{align*}

\end{proof}

\noindent
\textbf{Proof of Lemma \ref{lem:general_dominance}.}  Without loss of generality, we only prove it for designer $i=1$.  
Suppose that when designer $1$ chooses $T_1$ and all other designers $j$ choose some contests $\hat C_j$ with monotonically decreasing utility, contestants choose participation probabilities $(\hat p_1, \hat p_2, \ldots, \hat p_m) = \bm{p}(T_1, \hat C_2, \ldots, \hat C_m)$.  According to Claim~\ref{claim:designer-utility}, the expected utility of designer 1, denoted by $\hat u_1$, equals 
\begin{equation}\label{eq:designer-hat-utility}
    \hat u_1 = u_1(T_1, \hat C_2, \ldots, \hat C_m) = R_1 \left[ 1 - (1- \hat p_1)^n\right] - n\hat p_1 \beta(T_1, \hat p_1). 
\end{equation}
When designer 1 switches to any other contest $C_1\in \mathcal S_1$, letting $(p_1, p_2, \ldots, p_m)=\bm{p}(C_1, \hat C_2, \ldots, \hat C_m)$, the expected utility of designer 1 becomes
\begin{equation}\label{eq:designer-non-hat-utility}
    u_1 = u_1(C_1, \hat C_2, \ldots, \hat C_m) = R_1 \left[ 1 - (1- p_1)^n\right] - n p_1 \beta( C_1,  p_1). 
\end{equation}
Our goal is to show that $\hat u_1 \ge u_1$. 

If $\hat p_1 = 0$, then by (\ref{claim:probability-pi=0}) of Claim~\ref{claim:probability} we have $p_1 = 0$ and hence $\hat u_1 = \hat u_1 = 0$.  The conclusion holds. 

Now assume $\hat p_1 > 0$. If $\hat p_j = 0$ for all $j\in\{2, \ldots, m\}$, then by (\ref{claim:probability-pi=0}) of Claim~\ref{claim:probability} we have $p_j = 0$ for all $j$.  This implies $\hat p_1 = p_1 = 1$ and hence $\hat u_1 = R_1 - n \cutil_{T_1}(n)$ and $u_1 = R_1 -  n \cutil_{C_1}(n)$.  Since $\cutil_{T_1}(n) \le \cutil_{C_1}(n)$ by the assumption that $T_1$ has maximal rent dissipation, we have $\hat u_1 \ge u_1$. 

Now we consider the case where $\hat p_j>0$ for some $j\in \{2, \ldots, m\}$.  Because each contestant participates in both $T_1$ and $\hat C_j$ with positive probability, by equilibrium condition (Claim~\ref{claim:equilibrium-condition}), we must have 
\begin{equation*}
\beta(T_1, \hat p_1) = \beta(\hat C_j, \hat p_j).
\end{equation*}
By (\ref{claim:probability-p1}) of Claim~\ref{claim:probability}, $p_1 \ge \hat p_1 >0$, so each contestant participates in $C_1$ with positive probability, and by equilibrium condition (Claim~\ref{claim:equilibrium-condition}),
\begin{equation*}
    \beta(C_1, p_1) \ge \beta(\hat C_j, p_j).
\end{equation*}
According to Claim~\ref{claim:monotone-decreasing},  $\beta(\hat C_j, p)$ is a monotonically decreasing function of $p$, and by (\ref{claim:probability-pj}) of Claim~\ref{claim:probability}, $p_j \le \hat p_j$.  Therefore, we have $\beta( \hat C_j, p_j) \ge \beta( \hat C_j, \hat p_j)$ and hence \begin{equation}\label{eq:contestant-utility-chain}
    \beta(C_1, p_1) \ge \beta(\hat C_j, p_j) \ge \beta(\hat C_j, \hat p_j) = \beta(T_1, \hat p_1). 
\end{equation}
Plugging \eqref{eq:contestant-utility-chain} into \eqref{eq:designer-non-hat-utility}, we get
\[ u_1 \le R_1[1-(1- p_1)^n] - n p_1 \beta( T_1, \hat p_1). \]

Now we define function
\begin{equation}
\label{eq:f(p)}
    f(p) = R_1[1-(1- p)^n] - n p \beta(T_1, \hat p_1).
\end{equation}
We take its derivative: 
\begin{align*}
    f'(p) & = n R_1 (1-p)^{n-1} - n \beta(T_1, \hat p_1) \\
& = nR_1(1-p)^{n-1}  - n \sum\limits_{k=0}^{n-1} \binom{n-1}{k} \hat p_1^k (1- \hat p_1)^{n-1-k} \cutil_{T_1}(k+1) \\
& = nR_1(1-p)^{n-1}  - n(1 - \hat p_1)^{n-1} R_1 -  n \sum\limits_{k=1}^{n-1} \binom{n-1}{k} \hat p_1^k (1- \hat p_1)^{n-1-k} \cutil_{T_1}(k+1) \\
& \le nR_1(1-p)^{n-1}  - nR_1 (1 - \hat p_1)^{n-1}. 
\end{align*}
For $p > \hat p_1$, $(1-p)^{n-1} < (1-\hat p_1)^{n-1}$, so $f'(p) < 0$.  Thus, $f(p)$ is monotonically decreasing in the range $[\hat p_1, 1]$, which implies
\begin{equation}\label{eq:u1-comparison}
    \hat u_1 = f(\hat p_1) \ge f(p_1) \ge u_1,
\end{equation}
concluding the proof. 

\subsection{Full characterization of equilibria for MDU contests: proof of Theorem~\ref{thm:uniqueness-NE}}

\noindent
{\bf Additional properties of contestants' participation game.}
The following two claims use the notation and assumptions of Lemma~\ref{lem:general_dominance}, specifically, we fix a contest competition game $CCG(m,n,(R_i)_{i=1}^m,(\mathcal S_i)_{i=1}^m)$, where for every $i$, $\mrd(\mathcal S_i)$ contains at least one contest $T_i$, and $T_i$ has monotonically decreasing utility.
The first claim and its proof are similar to item (\ref{claim:probability-pi=0}) of Claim \ref{claim:probability}.\footnote{We note the differences: (1) here every contest changes, while in Claim~\ref{claim:probability} only one contest changes, and (2) here we require all contests to be maximal rent dissipation (and MDU), while in Claim~\ref{claim:probability} we only require them to be MDU.}

\begin{claim}
\label{claim:participation-probabilities-po}
Let $\tilde p_i=p_i(T_1, \ldots, T_m)$. For any strategy profile $(C_1, \ldots, C_m)\in\mathcal S_1 \times \cdots \times \mathcal S_m$, let $p_i=p_i(C_1, \ldots, C_m)$. Then for any $i\in\{1, \ldots, m\}$, $\tilde p_i=0$ implies $p_i=0$.
\end{claim}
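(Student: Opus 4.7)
My plan is to argue by contradiction in the same spirit as the proof of Claim~\ref{claim:probability}(\ref{claim:probability-pi=0}), but adapted to the case in which \emph{every} designer (not just one) may have switched contests. Suppose that $\tilde p_i=0$ but $p_i>0$ for some index $i$. Since the participation probabilities sum to one in both equilibria, there must exist some $j\ne i$ with $p_j<\tilde p_j$; in particular $\tilde p_j>0$. The strategy is to chain several inequalities along the sequence $\beta(T_i,\tilde p_i)\!\to\!\beta(C_i,p_i)\!\to\!\beta(C_j,p_j)\!\to\!\beta(C_j,\tilde p_j)\!\to\!\beta(T_j,\tilde p_j)\!\to\!\beta(T_i,\tilde p_i)$, and show that the loop collapses to a strict inequality $\beta(T_i,\tilde p_i)>\beta(T_i,\tilde p_i)$, which is the desired contradiction.

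The five links are each supplied by a previously established fact. First, $\tilde p_i=0$ together with $\cutil_{C_i}(k+1)\le R_i/(k+1)\le R_i$ and the definition of $\beta$ (Eq.~\eqref{eq:contestant-utility-beta}) yields $\beta(C_i,p_i)\le R_i=\cutil_{T_i}(1)=\beta(T_i,\tilde p_i)$. Second, $p_i>0$ and the equilibrium condition (Claim~\ref{claim:equilibrium-condition}) applied to the profile $(C_1,\ldots,C_m)$ gives $\beta(C_i,p_i)\ge\beta(C_j,p_j)$. Third, $C_j$ has monotonically decreasing utility (by hypothesis on $\mathcal S_j$) and $p_j<\tilde p_j$, so Claim~\ref{claim:monotone-decreasing} yields the strict inequality $\beta(C_j,p_j)>\beta(C_j,\tilde p_j)$. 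Fourth, since $T_j\in\mrd(\mathcal S_j)$ and $C_j\in\mathcal S_j$, Claim~\ref{claim:mrd-inequality} gives $\beta(C_j,\tilde p_j)\ge\beta(T_j,\tilde p_j)$. Finally, $\tilde p_j>0$ and the equilibrium condition applied to the profile $(T_1,\ldots,T_m)$ yields $\beta(T_j,\tilde p_j)\ge\beta(T_i,\tilde p_i)$. Composing these inequalities produces the required strict self-loop.

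The only place where I need to be careful is the very first inequality, which is the only one that uses the special structure coming from $\tilde p_i=0$: it forces $\beta(T_i,\tilde p_i)$ to take the maximum possible value $R_i$ (a single contestant exerts no effort and collects the whole reward), and hence upper bounds any $\beta(C_i,p_i)$ on any contest with reward $R_i$. After that, each subsequent link is a routine invocation of Claim~\ref{claim:equilibrium-condition}, Claim~\ref{claim:monotone-decreasing}, or Claim~\ref{claim:mrd-inequality}, and no new machinery is required. The main (mild) obstacle is simply bookkeeping: making sure the MDU hypothesis is attached to the right contest ($C_j$, via the standing assumption on $\mathcal S_j$), the MRD hypothesis is attached to the right contest ($T_j$), and that the strict inequality supplied by Claim~\ref{claim:monotone-decreasing} propagates through the chain so that the final contradiction is genuinely strict.
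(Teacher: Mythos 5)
Your chain has one link that rests on a hypothesis the claim does not actually grant you. In your third step you invoke Claim~\ref{claim:monotone-decreasing} on $C_j$ to get the strict inequality $\beta(C_j,p_j)>\beta(C_j,\tilde p_j)$, justifying it ``by hypothesis on $\mathcal S_j$.'' But the standing assumption in force for this claim (stated at the top of Appendix~\ref{app:main_results_proofs}, and echoed in the preamble before the claim) is only that each $\mathcal S_i$ \emph{contains} an MRD contest $T_i$ that has monotonically decreasing utility --- it does \emph{not} say that every contest in $\mathcal S_i$ is MDU. The profile $(C_1,\ldots,C_m)$ in the claim is an arbitrary element of $\mathcal S_1\times\cdots\times\mathcal S_m$, so $C_j$ may fail MDU, and then your strict link breaks. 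This is not a pedantic point: the claim is later applied in the proof of Theorem~\ref{thm:total-welfare-four-cases}, including the unrestricted case $\mathcal S_i=\mathcal C_{R_i}$ and the case where $\mathcal S_i$ merely contains $\wta$, where non-MDU contests such as the one in Example~\ref{ex:non-MDU} are available choices for $C_j$. Your argument would only establish the claim under the stronger hypotheses of Theorem~\ref{thm:uniqueness-NE}, not in the generality the paper needs.

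The repair is a simple reordering of your two middle links, and it is exactly what the paper does: instead of going $\beta(C_j,p_j)>\beta(C_j,\tilde p_j)\ge\beta(T_j,\tilde p_j)$ (MDU of $C_j$ at the step from $p_j$ to $\tilde p_j$, then MRD at $\tilde p_j$), first apply Claim~\ref{claim:mrd-inequality} at the \emph{current} probability $p_j$ to get $\beta(C_j,p_j)\ge\beta(T_j,p_j)$, and only then move the probability using MDU of $T_j$ --- which \emph{is} assumed --- via Claim~\ref{claim:monotone-decreasing}: $\beta(T_j,p_j)>\beta(T_j,\tilde p_j)$ since $p_j<\tilde p_j$. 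The paper's full chain is
\[
\beta(T_j,\tilde p_j)\ge\beta(T_i,\tilde p_i)=R_i\ge\beta(C_i,p_i)\ge\beta(C_j,p_j)\ge\beta(T_j,p_j)>\beta(T_j,\tilde p_j),
\]
which uses only the equilibrium condition (Claim~\ref{claim:equilibrium-condition}), the trivial bound $\beta(C_i,p_i)\le R_i$, MRD of $T_j$, and MDU of $T_j$. The strictness still survives because the strict step now sits on $T_j$ rather than $C_j$. Everything else in your proposal --- the contradiction setup, the choice of $j$ with $p_j<\tilde p_j$, the first link $\beta(T_i,\tilde p_i)=R_i\ge\beta(C_i,p_i)$, and both uses of the equilibrium condition --- is correct and matches the paper's argument.
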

\begin{proof}
 Assume by contradiction there exists $i$ such that $\tilde p_i=0$ and $p_i>0$. Then there exists $j \ne i$ such that $\tilde p_j>p_j$. By Claim \ref{claim:equilibrium-condition} we have $\beta(T_j,\tilde p_j) \ge \beta(T_i,\tilde p_i)=R_i$ and $\beta(C_i,p_i) \ge \beta(C_j,p_j)$.
 By Claim~\ref{claim:mrd-inequality} (since $T_j \in \mrd(\mathcal S_j)$), we have $\beta(C_j,p_j) \ge \beta(T_j,p_j)$.
 By Claim~\ref{claim:monotone-decreasing}, since $T_j$ is a MDU contest and $\tilde p_j>p_j$, we have $\beta(T_j,p_j)>\beta(T_j,\tilde p_j)$. Finally, it is obvious that $\beta(C_i,p_i) \le R_i$. Combining these inequalities, we get
\[
\beta(T_j,\tilde p_j)
\ge\beta(T_i,\tilde p_i)=R_i
\ge\beta(C_i,p_i)
\ge\beta(C_j,p_j)
\ge\beta(T_j,p_j)
>\beta(T_j,\tilde p_j),
\]
which is a contradiction.
\end{proof}

When $\hat{\bm{C}}_{-i}$ are all MDU contests, Lemma~\ref{lem:general_dominance} states that $T_i$ is a dominant contest for designer $i$. Thus, if we have $u_i(T_i, \hat{\bm C}_{-i}) = u_i(C_i, \hat{\bm C}_{-i})$ for some $C_i\in\mathcal S_i$, then $C_i$ is a best response to $\hat{\bm C}_{-i}$. The next claim shows that, in this case, the equilibrium outcome in the two contestants' participation games $(T_i, \hat{\bm C}_{-i})$ and $(C_i, \hat{\bm C}_{-i})$ is identical.

\begin{claim}\label{claim:mutual-dominate}
If $u_i(T_i, \hat{\bm C}_{-i}) = u_i(C_i, \hat{\bm C}_{-i})$ for some $C_i\in\mathcal S_i$, then $\hat p_i=p_i$ and $\beta(T_i, \hat p_i) = \beta(C_i, p_i)$, where $\hat p_i=p_i(T_i,\hat{\vec C}_{-i})$ and $p_i=p_i(C_i, \hat{\vec C}_{-i})$.

\end{claim}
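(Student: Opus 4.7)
The plan is to revisit the proof of Lemma~\ref{lem:general_dominance} and track precisely which inequalities there are strict, then argue that the hypothesis $u_i(T_i,\hat{\bm C}_{-i})=u_i(C_i,\hat{\bm C}_{-i})$ forces all of them to hold with equality. Without loss of generality take $i=1$, and reuse the notation $\hat u_1,u_1,\hat p_1,p_1$ from that proof. I would split into the same three cases that already appear there.

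First, if $\hat p_1=0$, then Claim~\ref{claim:probability}(\ref{claim:probability-pi=0}) forces $p_1=0$, so trivially $\hat p_1=p_1$, and $\beta(T_1,0)=\cutil_{T_1}(1)=R_1=\cutil_{C_1}(1)=\beta(C_1,0)$. Second, if $\hat p_j=0$ for every $j\neq 1$ (equivalently $\hat p_1=1$), then Claim~\ref{claim:probability}(\ref{claim:probability-pi=0}) forces $p_j=0$ for every $j\neq 1$, so $p_1=1=\hat p_1$; the formulas $\hat u_1=R_1-n\cutil_{T_1}(n)$ and $u_1=R_1-n\cutil_{C_1}(n)$ together with $\hat u_1=u_1$ then give $\cutil_{T_1}(n)=\cutil_{C_1}(n)$, which is exactly $\beta(T_1,1)=\beta(C_1,1)$.

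The main case is $\hat p_1>0$ together with $\hat p_j>0$ for some $j\neq 1$. The proof of Lemma~\ref{lem:general_dominance} established in this case the chain
\[
\hat u_1=f(\hat p_1)\;\ge\;f(p_1)\;\ge\;u_1,
\]
where $f(p)=R_1\bigl[1-(1-p)^n\bigr]-np\,\beta(T_1,\hat p_1)$. The right inequality came from $\beta(C_1,p_1)\ge\beta(T_1,\hat p_1)$ (the chain~\eqref{eq:contestant-utility-chain}), plugged into the definition of $u_1$; equality there is equivalent to $\beta(C_1,p_1)=\beta(T_1,\hat p_1)$. The left inequality used Claim~\ref{claim:probability}(\ref{claim:probability-p1}) giving $p_1\ge\hat p_1$ and the bound $f'(p)<0$ for $p>\hat p_1$ derived in that proof. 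So under the assumption $\hat u_1=u_1$, both inequalities must be tight: the right one gives $\beta(C_1,p_1)=\beta(T_1,\hat p_1)$, and strict monotonicity of $f$ on $[\hat p_1,1]$ together with $f(\hat p_1)=f(p_1)$ gives $p_1=\hat p_1$, which is the other conclusion.

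I do not expect a real obstacle here, since all of the heavy lifting, in particular the verification that $f$ is strictly decreasing on $[\hat p_1,1]$ when $\hat p_1<1$, is already contained in the proof of Lemma~\ref{lem:general_dominance}; the only thing to do is to notice that every inequality there was actually tight up to the sign of $f'$, which turns the utility-equality hypothesis into the two claimed equalities. The mild care needed is in the two boundary cases $\hat p_1\in\{0,1\}$, which is why I would separate them out explicitly rather than trying to absorb them into the generic argument.
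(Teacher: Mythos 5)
Your proposal is correct and follows essentially the same route as the paper's proof: the same three-case split on $\hat p_1\in\{0\}$, $\{1\}$, $(0,1)$, the same use of Claim~\ref{claim:probability}, and the same tightness argument on the chain $\hat u_1=f(\hat p_1)\ge f(p_1)\ge u_1$ via strict monotonicity of $f$ on $[\hat p_1,1]$. The only cosmetic differences are that in the $\hat p_1=1$ case you invoke part~(\ref{claim:probability-pi=0}) of Claim~\ref{claim:probability} where the paper uses part~(\ref{claim:probability-p1}), and in the main case you read off $\beta(C_1,p_1)=\beta(T_1,\hat p_1)$ from tightness of the right inequality directly (valid since $p_1\ge\hat p_1>0$) rather than after first establishing $p_1=\hat p_1$.
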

\begin{proof}
Without loss of generality, we only prove it for contest designer $i=1$. 
Following the notation in Section~\ref{sec:designer-game-proof-of-lemma} (Eq.~\eqref{eq:designer-hat-utility}, Eq.~\eqref{eq:designer-non-hat-utility} and Eq.~\eqref{eq:f(p)}), define
\begin{align*}
    & \hat u_1 = u_1(T_1, \hat{\bm C}_{-1}) = R_1 \left[ 1 - (1- \hat p_1)^n\right] - n\hat p_1 \beta(T_1, \hat p_1),\\
    & u_1 = u_1(C_1, \hat{\bm C}_{-1}) = R_1 \left[ 1 - (1- p_1)^n\right] - n p_1 \beta( C_1,  p_1),\\
    & f(p) = R_1[1-(1- p)^n] - n p \beta(T_1, \hat p_1).
\end{align*}
Recall that by the assumption in the statement of the claim we have $\hat u_1=u_1$. Consider the following three cases:
\begin{itemize}
    \item 
    If $\hat p_1=0$, then by \eqref{claim:probability-pi=0} of Claim~\ref{claim:probability}, $p_1=0=\hat p_1$. And $\beta(T_1,\hat p_1)=\beta(C_1,p_1)=R_1$.
    \item
    If $\hat p_1=1$, then by \eqref{claim:probability-p1} of Claim~\ref{claim:probability}, $1 \geq p_1 \ge \hat p_1=1$ hence $p_1=\hat p_1=1$. Therefore,
    \begin{align*}
        R_1-n\beta(T_1,\hat p_1)=\hat u_1=u_1 =R_1-n\beta(C_1,p_1).
    \end{align*}
    This immediately implies $\beta(T_1,\hat p_1)=\beta(C_1,p_1)$.
    \item
    Otherwise, $0<\hat p_1<1$, so there exists some $j\in\{2,\ldots,m\}$ such that $\hat p_j>0$ and therefore Eq.~\eqref{eq:u1-comparison} in the proof of Lemma~\ref{lem:general_dominance} holds. Furthermore, since $\hat u_1=u_1$, all the inequalities in \eqref{eq:u1-comparison} become equalities. Thus,
    \begin{equation}\label{eq:ineq-become-eq-1}
        f(\hat p_1)=f(p_1) = u_1.
    \end{equation}
    
By \eqref{claim:probability-p1} of Claim~\ref{claim:probability}, $p_1 \ge \hat p_1$, so by strict monotonicity of $f(p)$ in the range $p\in[\hat p_1,1]$, Eq.~\eqref{eq:ineq-become-eq-1} implies $p_1=\hat p_1$ and in addition
\begin{equation*}
    R_1[1-(1-p_1)^n]-np_1\beta(T_1,\hat p_1)=f(p_1) =u_1=R_1[1-(1-p_1)^n]-np_1\beta(C_1,p_1),
\end{equation*}
which directly implies
    $\beta(T_1,\hat p_1)=\beta(C_1,p_1)$ since $p_1=\hat p_1>0$. \qedhere
\end{itemize}
%
\end{proof}

\noindent
{\bf Proof of the ``$\Longrightarrow$'' direction of items 2 and 3 of Theorem~\ref{thm:uniqueness-NE}.}
Under the notation of Theorem~\ref{thm:uniqueness-NE}, for every contestant-symmetric subgame-perfect equilibrium $(C_1,...,C_m)$, recall that $\Supp(C_1,\ldots,C_m)=\{i:p_i(C_1,\ldots,C_m)>0\}$. We defer the proof that $\Supp(C_1,\ldots,C_m)=P$ and first prove a useful lemma:

\begin{lemma}\label{lem:necessity-of-uniqueness}
~~
\begin{itemize}
    \item If $|\Supp(C_1,\ldots,C_m)|>1$ then for any $i \in \Supp(C_1,\ldots,C_m)$, $C_i \in \mrd(\mathcal S_i)$.

    \item If 
    $\Supp(C_1,\ldots,C_m)=\{i_0\}$ then $p_{i_0}(C_1,\ldots,C_m)=1$ and $\cutil_{C_{i_0}}(n)=\cutil_{T_{i_0}}(n)$.
\end{itemize}
\end{lemma}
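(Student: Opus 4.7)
The plan is to leverage Claim~\ref{claim:mutual-dominate} applied to each designer $i$ in turn. Since we are in a CCG where every $\mathcal S_j$ contains only MDU contests, Lemma~\ref{lem:general_dominance} tells us that, for every designer $i$, the contest $T_i$ is a best response against the opponents' profile $\bm C_{-i}$, so $u_i(T_i, \bm C_{-i}) \ge u_i(C_i, \bm C_{-i})$. Since $(C_1,\ldots,C_m)$ is an equilibrium, $C_i$ is also a best response for designer $i$, which forces the equality $u_i(T_i, \bm C_{-i}) = u_i(C_i, \bm C_{-i})$. Claim~\ref{claim:mutual-dominate} then delivers two key identities: $p_i(T_i, \bm C_{-i}) = p_i(C_i, \bm C_{-i})$ and $\beta(T_i, p_i) = \beta(C_i, p_i)$, where throughout I write $p_i$ as shorthand for $p_i(C_1,\ldots,C_m)$.

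For the second bullet I will handle the case $\Supp(C_1,\ldots,C_m) = \{i_0\}$. Since the participation probabilities sum to one, $p_{i_0}=1$, giving the first assertion at once. Specializing $\beta(T_{i_0}, 1) = \beta(C_{i_0}, 1)$ using the definition in~\eqref{eq:contestant-utility-beta}, the binomial collapses to the single term at $k=n-1$, which yields $\cutil_{T_{i_0}}(n) = \cutil_{C_{i_0}}(n)$ as desired.

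For the first bullet I will fix $i \in \Supp(C_1,\ldots,C_m)$, so $p_i > 0$, and observe that $p_i < 1$ as well, since $|\Supp|>1$ forces some other $p_j$ to be strictly positive. Expanding $\beta(T_i, p_i) = \beta(C_i, p_i)$ via~\eqref{eq:contestant-utility-beta} gives
\[
\sum_{k=0}^{n-1} \binom{n-1}{k} p_i^k (1-p_i)^{n-1-k} \bigl(\cutil_{C_i}(k+1) - \cutil_{T_i}(k+1)\bigr) = 0.
\]
Since $T_i \in \mrd(\mathcal S_i)$, every bracketed difference is non-negative, and since $0 < p_i < 1$ every binomial weight is strictly positive. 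I then conclude that each summand must vanish, meaning $\cutil_{C_i}(k) = \cutil_{T_i}(k)$ for all $k = 1, \ldots, n$, which establishes $C_i \in \mrd(\mathcal S_i)$.

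The main obstacle I anticipate is not any single algebraic step but being careful at the very start: to invoke Claim~\ref{claim:mutual-dominate} I must justify that $T_i$ is a best response against the equilibrium profile $\bm C_{-i}$, which requires verifying that each $C_j$ for $j \ne i$ is MDU --- this follows from the blanket hypothesis that every contest in every $\mathcal S_j$ is MDU. A secondary subtlety is that the argument from the first bullet does not extend to the second: with $p_i = 1$ only the $k=n-1$ coefficient is nonzero, so $\beta(T_i, 1) = \beta(C_i, 1)$ pins down only $\cutil_{C_i}(n)$ and says nothing about $\cutil_{C_i}(k)$ for $k < n$, which is precisely why the second bullet states the strictly weaker pointwise conclusion at $k=n$ rather than full MRD membership.
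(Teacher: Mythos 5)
Your proof is correct and takes essentially the same route as the paper's own argument: both establish that $T_i$ is a best response via Lemma~\ref{lem:general_dominance}, use the equilibrium hypothesis to get $u_i(T_i,\bm C_{-i})=u_i(C_i,\bm C_{-i})$, apply Claim~\ref{claim:mutual-dominate} to obtain $p_i(T_i,\bm C_{-i})=p_i(C_i,\bm C_{-i})$ and $\beta(T_i,p_i)=\beta(C_i,p_i)$, and then exploit the strictly positive binomial weights when $0<p_i<1$ together with the termwise MRD inequality $\cutil_{T_i}(k)\le\cutil_{C_i}(k)$ to force $\cutil_{C_i}(k)=\cutil_{T_i}(k)$ for all $k$, with the $p_{i_0}=1$ case collapsing to the single coefficient at $k=n-1$. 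Your closing observation --- that $|\Supp(C_1,\ldots,C_m)|>1$ is exactly what guarantees $p_i<1$, and that $p_{i_0}=1$ pins down only $\cutil_{C_{i_0}}(n)$ --- matches the paper's case split precisely.
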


\begin{proof}
Recall that $T_i$ is a dominant contest and hence a best response to $\bm{C}_{-i}$.  Since $(C_1, \ldots, C_m)$ is a contestant-symmetric subgame-perfect equilibrium, $C_i$ is also a best response.
Applying Claim~\ref{claim:mutual-dominate}, we get $p_i:=p_i(C_i,\vec C_{-i})=p_i(T_i,\vec C_{-i})$ and
\begin{equation}\label{eq:eq-of-cutil}
    \beta(T_i,p_i)=\beta(T_i,p_i(T_i,\vec C_{-i}))=\beta(C_i,p_i(C_i,\vec C_{-i}))=\beta(C_i,p_i).
\end{equation}
By definition
\begin{align*}
    \beta(T_i,p_i)
    &= \sum\limits_{k=0}^{n-1}\binom{n-1}{k}p_i^k(1-p_i)^{n-1-k}\cutil_{T_i}(k+1)\\
    =\beta(C_i,p_i)
    &= \sum\limits_{k=0}^{n-1}\binom{n-1}{k}p_i^k(1-p_i)^{n-1-k}\cutil_{C_i}(k+1).
\end{align*}

If $p_i=1$, then $\beta(T_i,p_i)=\cutil_{T_i}(n)=\beta(C_i,p_i)=\cutil_{C_i}(n)$, this corresponds to the second case of the lemma.

Otherwise, for any $i\in\{1,\ldots,m\}$, $p_i<1$. We prove the first case of the lemma, i.e. for any $i \in \Supp(C_1,\ldots,C_m)$, $C_i \in \mrd(\mathcal S_i)$. Actually, as $0<p_i<1$, $\binom{n-1}{k}p_i^k(1-p_i)^{n-1-k}>0$ for every $k=0,\ldots,n-1$. Moreover, $T_i \in \mrd(\mathcal S_i)$ implies $\cutil_{T_i}(k+1)\le\cutil_{C_i}(k+1)$ for every $k=0,\ldots,n-1$. As a result, for \eqref{eq:eq-of-cutil} to hold, we must have $\cutil_{T_i}(k+1)=\cutil_{C_i}(k+1)$ for every $k=0,\ldots,n-1$, which indicates $C_i \in \mrd(\mathcal S_i)$. This completes the proof of the lemma.
\end{proof}

Comparing Lemma~\ref{lem:necessity-of-uniqueness} with the conclusion of the ``$\implies$'' direction, we are left to prove $\Supp(C_1,\ldots,C_m)=P$. The $\Supp(C_1,\ldots,C_m) \subseteq P$ result is just a direct implication of Claim~\ref{claim:participation-probabilities-po}.
We then prove $P \subseteq \Supp(C_1,\ldots,C_m)$.
Denote $p_i(C_1,\ldots,C_m)$ by $p_i$ for simplicity. 
Assume towards a contradiction that there exists $i\in\{1,\ldots,m\}$ such that $\tilde p_i>p_i=0$. Then there exists $j \ne i$ such that $p_j>\tilde p_j$. Note that this implies $p_j>0$. Therefore, by Lemma~\ref{lem:necessity-of-uniqueness}, either $C_j \in \mrd(\mathcal S_j)$, or $p_j=1$ and $\cutil_{C_j}(n)=\cutil_{T_j}(n)$. In either case we have $\beta(T_j,p_j)=\beta(C_j,p_j)$.
By equilibrium condition (Claim~\ref{claim:equilibrium-condition}), $\beta(C_j,p_j) \stackrel{(p_j>0)}{\ge} \beta(C_i,p_i)\stackrel{(p_i=0)}{=}R_i$ and $\beta(T_i,\tilde p_i) \stackrel{(\tilde p_i > 0)}{\ge} \beta(T_j,\tilde p_j)$. By Claim~\ref{claim:monotone-decreasing}, $\beta(T_j,\tilde p_j) > \beta(T_j,p_j)$ and $R_i=\beta(T_i,p_i)>\beta(T_i,\tilde p_i)$. These inequalities together yield
\[
\beta(T_j,\tilde p_j)
>\beta(T_j,p_j)
=\beta(C_j,p_j)
\ge\beta(C_i,p_i)
=R_i
=\beta(T_i,p_i)
>\beta(T_i,\tilde p_i)
\ge\beta(T_j,\tilde p_j),
\]
which is a contradiction. Therefore, we conclude that $p_i=0$ implies $\tilde p_i=0$ for any $i\in\{1,\ldots,m\}$. This completes the proof.

\vspace{3mm}

\noindent
{\bf Proof of the ``$\Longleftarrow$'' direction of items 2 and 3 of Theorem~\ref{thm:uniqueness-NE}.}
To prove this direction we first assume $|P|>1$. Assume $(C_1,\ldots,C_m)$ is any strategy profile satisfying $C_i \in \mrd(\mathcal S_i)$ for any $i \in P$. To prove that it is a contestant-symmetric subgame-perfect equilibrium, we only need to show that for any $i$, $C_i$ is designer $i$'s best response when the other designers choose $\bm{C}_{-i}$. Note that Lemma~\ref{lem:general_dominance} already guarantees that for $i \in P$, $C_i$ is designer $i$'s best response, so we are left to show that this also holds for those $i \notin P$. Assume $\tilde p_i=0$, then for any $C_i'\in\mathcal S_i$, by Claim~\ref{claim:participation-probabilities-po}, $p_i(C_i',\bm{C}_{-i})=0$, which implies that designer $i$ gets zero utility no matter which $C_i'$ she chooses. So $C_i$ is indeed one of her best responses. To conclude, $C_i$ is designer $i$'s best response for any $i$, which implies that $(C_1,\ldots,C_m)$ is a contestant-symmetric subgame-perfect equilibrium.

We then assume $|P|=1$. Suppose $P=\{i_0\}$, and assume $(C_1,\ldots,C_m)$ is any strategy profile satisfying $\cutil_{C_{i_0}}(n)=\cutil_{T_{i_0}}(n)$. We need to show that for any $i$, $C_i$ is designer $i$'s best response when the other designers choose $\bm{C}_{-i}$. This time Claim~\ref{claim:participation-probabilities-po} promises that for any $i \ne i_0$ and any $C_i' \in \mathcal S_i$, $p_i(C_i',\bm{C}_{-i})=0$, so $C_i$ is designer $i$'s best response. And by the same claim, $p_i(C_{i_0},\bm{C}_{-i_0})=p_i(T_{i_0},\bm{C}_{-i_0})=0$ for any $i \ne i_0$, so $p_{i_0}(C_{i_0},\bm{C}_{-i_0})=p_{i_0}(T_{i_0},\bm{C}_{-i_0})=1$. As a result,
\[
\beta(C_{i_0},p_{i_0}(C_{i_0},\bm{C}_{-i_0}))=\cutil_{C_{i_0}}(n)=\cutil_{T_{i_0}}(n)=\beta(T_{i_0},p_{i_0}(T_{i_0},\bm{C}_{-i_0})),
\]
and
\[
u_{i_0}(C_{i_0},\bm{C}_{-i_0})=R_{i_0}-n\beta(C_{i_0},p_{i_0}(C_{i_0},\bm{C}_{-i_0}))=R_{i_0}-n\beta(T_{i_0},p_{i_0}(T_{i_0},\bm{C}_{-i_0}))=u_{i_0}(T_{i_0},\bm{C}_{-i_0}).
\]
In other words, $C_{i_0}$ has equal utility for designer $i_0$ as her best response $T_{i_0}$, which implies that $C_{i_0}$ is also designer $i_0$'s best response. To conclude, $C_i$ is designer $i$'s best response for any $i$, so $(C_1, \ldots, C_m)$ is a contestant-symmetric subgame-perfect equilibrium. This completes the proof.

\vspace{3mm}

\noindent
{\bf Proof of item 1 of Theorem~\ref{thm:uniqueness-NE}.}
For any contestant-symmetric subgame-perfect equilibrium $(C_1, \ldots, C_m)$, we claim that $(\tilde p_1,\ldots,\tilde p_m)$ is a symmetric equilibrium for the contestants under $(C_1,\ldots,C_m)$; then, since the contestants' symmetric equilibrium is unique according to Lemma~\ref{lem:MDU-contestant-unique}, we must have $p_i(C_1, \ldots, C_m)=\tilde p_i$, which completes the proof. 
Consider $\beta(C_i,\tilde p_i)$ for all $i \in \{1,\ldots,m\}$. If $i \notin P$, then $\tilde p_i=0$ and $\beta(C_i,\tilde p_i)=R_i=\beta(T_i,\tilde p_i)$. If $i \in P$, then by item 2 and 3, either $C_i\in\mrd(\mathcal S_i)$ or $\tilde p_i=1$ and $\cutil_{C_i}(n)=\cutil_{T_i}(n)$, and we have $\beta(C_i,\tilde p_i)=\beta(T_i,\tilde p_i)$ in either case. So $\beta(C_i,\tilde p_i)=\beta(T_i,\tilde p_i)$ for all $i \in \{1,\ldots,m\}$. Then applying equilibrium condition (Claim~\ref{claim:equilibrium-condition}) for the case where designers choose $(T_1,\ldots,T_m)$, we get $\beta(C_i,\tilde p_i)=\beta(T_i,\tilde p_i)=\beta(T_j,\tilde p_j)=\beta(C_j,\tilde p_j)\ge\beta(T_\ell,\tilde p_\ell)=\beta(C_\ell,\tilde p_\ell)$ for any $i,j \in P$ and $\ell \notin P$. We therefore know that when designers choose $(C_1,\ldots,C_m)$, $(\tilde p_1,\ldots,\tilde p_m)$ is still a best response for any contestant when all the other contestants use $(\tilde p_1,\ldots,\tilde p_m)$, which means that $(\tilde p_1,\ldots,\tilde p_m)$ is a contestants' symmetric equilibrium. 

\subsection{Pareto efficiency of the equilibria: proof of Theorem~\ref{thm:pareto-optimality}}

Assume $\bm{T} = (T_1,\ldots,T_m)$ is the contestant-symmetric subgame-perfect equilibrium in Theorem~\ref{thm:max-rent-dissipation-NE} for $CCG(m,n,(R_i)_{i=1}^n,(\mathcal{S}_i)_{i=1}^n)$, and $\bm{C} = (C_1,\ldots,C_m)\in\mathcal{S}_1\times\cdots\times\mathcal{S}_m$ is any other strategy profile of the designers. We will show that $\bm{C}$ is not a Pareto improvement of $\bm{T}$ which proves the theorem. 
Denote by $(\tilde p_1,\ldots,\tilde p_m) = \bm p(\bm T)$ and $(p_1,\ldots,p_m) = \bm p(\bm C)$ the symmetric equilibria contestants play under $\bm T$ and $\bm C$, respectively.
If $p_i=\tilde p_i$ for any $i$, then as $T_i$ is the maximal rent dissipation contest of $\mathcal{S}_i$, by Claim~\ref{claim:mrd-inequality} we have
\begin{equation*}
    \beta(T_i,p_i) \le \beta(C_i,p_i)
\end{equation*}
As a result, for any designer $i$, 
\begin{align*}
    u_i(\bm{T}) 
    &= R_i[1-(1-\tilde p_i)^n]-n\tilde p_i\beta(T_i,\tilde p_i)\\
    &= R_i[1-(1-p_i)^n]-np_i\beta(T_i,p_i)\\
    &\ge R_i[1-(1-p_i)^n]-np_i\beta(C_i,p_i)
    = u_i(\bm{C}),
\end{align*}
showing that $\bm C$ is not a Pareto improvement of $\bm T$.

Otherwise, there exist $i,j$ such that $p_i>\tilde p_i$ and $p_j<\tilde p_j$. Note that this implies $p_i,\tilde p_j>0$, so by equilibrium condition (Claim~\ref{claim:equilibrium-condition}), we get
\begin{equation}\label{eq:PO-chain-1}
    \beta(C_i,p_i) \ge \beta(C_j,p_j),
\end{equation}
and
\begin{equation}\label{eq:PO-chain-2}
    \beta(T_j, \tilde p_j) \ge \beta(T_i, \tilde p_i).
\end{equation}
Since $p_j<\tilde p_j$ and $T_j$ is a monotonically decreasing utility contest, 
by Claim~\ref{claim:monotone-decreasing} we have
\begin{equation}\label{eq:PO-chain-3}
    \beta(T_j,\tilde p_j) < \beta(T_j,p_j),
\end{equation}
Moreover, as $T_j$ is a maximal rent dissipation contest in $\mathcal{S}_j$, we have
\begin{equation}\label{eq:PO-chain-4}
    \beta(T_j,p_j) \le \beta(C_j,p_j)
\end{equation}
by Claim~\ref{claim:mrd-inequality}. Combining these inequalities together, we get
\begin{equation*}
    \beta(T_i, \tilde p_i) \stackrel{\eqref{eq:PO-chain-2}}{\le} \beta(T_j, \tilde p_j) \stackrel{\eqref{eq:PO-chain-3}}{<} \beta(T_j, p_j) \stackrel{\eqref{eq:PO-chain-4}}{\le} \beta(C_j, p_j) \stackrel{\eqref{eq:PO-chain-1}}{\le} \beta(C_i, p_i).
\end{equation*}
Now we consider the utilities of designer $i$ in $\bm T$ and $\bm C$. We have
\begin{align*}
    u_i(\bm{T}) &= R_i[1-(1-\tilde p_i)^n]-n\tilde p_i\beta(T_i,\tilde p_i),\\
    u_i(\bm{C}) &= R_i[1-(1-p_i)^n]-np_i\beta(C_i,p_i).
\end{align*}
Similarly to \eqref{eq:f(p)}, we define $f(p)=R_i[1-(1-p)^n]-np\beta(T_i,\tilde p_i)$ and have
\begin{align*}
    f'(p) \le nR_i(1-p)^{n-1}-nR_i(1-\tilde p_i)^{n-1} < 0
\end{align*}
for $p>\tilde p_i$, which implies that $f(p)$ is a strictly decreasing function of $p$ when $p \ge \tilde p_i$. Therefore, as $p_i>\tilde p_i$, we have $f(p_i)<f(\tilde p_i)$. As a result,
\begin{align*}
    u_i(\bm{T})
    = f(\tilde p_i) >f(p_i) & = R_i[1-(1-p_i)^n]-np_i\beta(T_i,\tilde p_i)\\
    & \ge R_i[1-(1-p_i)^n]-np_i\beta(C_i,p_i) = u_i(\bm{C}), 
\end{align*}
which indicates that $\bm C$ cannot be a Pareto improvement of $\bm T$, concluding the proof. 

\section{Missing Proofs from Section~\ref{sec:welfare}}
\label{app:welfare_proofs}
\subsection{Proof of Theorem~\ref{thm:total-welfare-four-cases}}
For any $\bm C=(C_1,\ldots,C_m)$, let $\bm p(\bm C)=(p_1,\ldots,p_m)$.
By Eq.~\eqref{eq:WSC},
$$
    W_S(C_1,\ldots,C_m)=\sum_{i=1}^mR_i-\sum_{i=1}^mR_i(1-p_i)^n.$$
Then by H\"older's inequality,
\begin{align*}
    &\phantom{=} \left(\sum_{i=1}^mR_i(1-p_i)^n\right)^{\frac{1}{n}}
    \left(\sum_{i=1}^mR_i^{-\frac{1}{n-1}}\right)^{\frac{n-1}{n}}
    = \left(\sum_{i=1}^m\left(R_i^{\frac{1}{n}}(1-p_i)\right)^n\right)^{\frac{1}{n}}
    \left(\sum_{i=1}^m\left(R_i^{-\frac{1}{n}}\right)^{\frac{n}{n-1}}\right)^{\frac{n-1}{n}}\\
    &\ge \sum_{i=1}^m\left(R_i^{\frac{1}{n}}(1-p_i)\right)\left(R_i^{-\frac{1}{n}}\right)
    = \sum_{i=1}^m(1-p_i)
    = m-1.
\end{align*}
So
\begin{equation*}
    \sum_{i=1}^mR_i(1-p_i)^n
    \ge\left(\frac{m-1}{\left(\sum_{i=1}^mR_i^{-\frac{1}{n-1}}\right)^{\frac{n-1}{n}}}\right)^n
    =\frac{(m-1)^n}{\left(\sum_{i=1}^mR_i^{-\frac{1}{n-1}}\right)^{n-1}},
\end{equation*}
and
\begin{equation}
\label{eq:total-welfeare-bound}
    W_S(C_1,\ldots,C_m)=\sum_{i=1}^mR_i-\sum_{i=1}^mR_i(1-p_i)^n
    \le\sum_{i=1}^mR_i-\frac{(m-1)^n}{\left(\sum_{i=1}^mR_i^{-\frac{1}{n-1}}\right)^{n-1}}.
\end{equation}
We will show that $W_S(T_1,\ldots,T_m)$ is equal to the right-hand side in the following cases.

\vspace{5mm}

\noindent
{\bf  The case when $\mathcal{S}_i$ contains a full rent dissipation contest for every $i$. } In this case, since $T_i \in \MRD(\mathcal{S}_i)$, we have that $T_i$ is also a full rent dissipation contest. Let $\tilde p_i=p_i(T_1,\ldots,T_m)$. If $\tilde p_i=0$ for some $i$, then by Claim~\ref{claim:participation-probabilities-po}, $p_i(C_1,\ldots,C_m)=0$. Thus $W_S(\bm T) = W_S(\bm T_{-i})$ and $W_S(\bm C) = W_S(\bm C_{-i})$. We can thus assume that $\tilde{p}_i>0$ for every $i$.
%
%
Then by Claim \ref{claim:equilibrium-condition},
\begin{equation}\label{eq:equilibrium-condition-of-wta}
\beta(T_1,\tilde p_1)=\beta(T_2,\tilde p_2)=\cdots=\beta(T_m,\tilde p_m).
\end{equation}
As $T_i$ has full rent dissipation, we have $\beta(T_i,\tilde p_i)=R_i(1-\tilde p_i)^{n-1}$. 
Substitute this into \eqref{eq:equilibrium-condition-of-wta}, we get
\[ R_1^{\frac{1}{n-1}}(1-\tilde p_1) = R_2^{\frac{1}{n-1}}(1-\tilde p_2) = \cdots = R_m^{\frac{1}{n-1}}(1-\tilde p_m). \]
So $\forall i=1,\ldots,m$,
\begin{equation}
\label{eq:fraction}
\frac{1-\tilde p_j}{1-\tilde p_i}=\frac{R_j^{-\frac{1}{n-1}}}{R_i^{-\frac{1}{n-1}}}, \quad \forall j=1,\ldots,m.
\end{equation}
Note that $\sum_{j=1}^m (1-\tilde p_j) = m - 1$. Fixing any $i$ and summing \eqref{eq:fraction} for $j=1$ to $m$, we obtain
\[
\frac{m-1}{1-\tilde p_i}=\sum_{j=1}^m\frac{1-\tilde p_j}{1-\tilde p_i}=\frac{\sum_{j=1}^mR_j^{-\frac{1}{n-1}}}{R_i^{-\frac{1}{n-1}}}.
\]
So
\[
1-\tilde p_i=\frac{m-1}{\sum_{j=1}^mR_j^{-\frac{1}{n-1}}} \cdot R_i^{-\frac{1}{n-1}},
\]
and therefore,
\begin{equation*}
    W_S(T_1, \ldots, T_m)=\sum_{i=1}^mR_i-\sum_{i=1}^mR_i(1-\tilde p_i)^n=\sum_{i=1}^mR_i-\frac{(m-1)^n}{\left(\sum_{j=1}^mR_j^{-\frac{1}{n-1}}\right)^{n-1}},
\end{equation*}
which meets the bound given by \eqref{eq:total-welfeare-bound}. This completes the proof for this case.

\vspace{5mm}

\noindent
{\bf  The case of MRD-symmetric strategy space. } Note that by definition any two contests $T,T' \in \MRD(\mathcal{S}_i)$ must have the same $\bm{\cutil}$ vector (i.e., for any $k=1,\ldots,n$, $\cutil_{T}(k)=\cutil_{T'}(k)$). The following lemma therefore proves this case:


\begin{lemma}
In a $CCG(m, n, (R_i)_{i=1}^m, (\mathcal S_i)_{i=1}^m)$ with $R_1=\cdots = R_m=R$ (and $\mathcal S_i$'s can be different), for any strategy profile $\bm C' = (C_1', \ldots, C_m')$ where for every $i$, $C_i'\in\mathcal S_i$ is a MDU contest, and has the same $\bm{\cutil}$ vector (i.e., for any $k=1,\ldots,n$, and any $i,j\in\{1,\ldots,m\}$, $\cutil_{C_i'}(k)=\cutil_{C_j'}(k)$), $\bm C'$ maximizes $W_S$. 
\end{lemma}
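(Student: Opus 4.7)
The plan is to apply the upper bound on $W_S$ from inequality~\eqref{eq:total-welfeare-bound} and show that $\bm C'$ saturates it. The first step is to pin down the contestants' equilibrium probability vector $\bm p(\bm C')$. Since all rewards equal $R$ and every $C_i'$ shares the same $\bm{\cutil}$ vector, the map $p \mapsto \beta(C_i', p)$ is the same function across all $i$. Consequently, the symmetric profile $p_i = 1/m$ for all $i$ automatically satisfies the equilibrium condition of Claim~\ref{claim:equilibrium-condition}, because every $\beta(C_i', 1/m)$ takes the same value. Because each $C_i'$ is MDU, Lemma~\ref{lem:MDU-contestant-unique} guarantees that the contestants' symmetric equilibrium is unique, so I can conclude $\bm p(\bm C') = (1/m, \ldots, 1/m)$.

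The second step is a direct computation. Plugging this $\bm p$ into~\eqref{eq:WSC} yields
\begin{equation*}
W_S(\bm C') \;=\; \sum_{i=1}^m R \;-\; \sum_{i=1}^m R\left(1 - \tfrac{1}{m}\right)^n \;=\; mR \;-\; \frac{(m-1)^n\,R}{m^{n-1}}.
\end{equation*}
On the other hand, specializing~\eqref{eq:total-welfeare-bound} to the symmetric-reward case, using $\sum_{i=1}^m R_i^{-1/(n-1)} = m\,R^{-1/(n-1)}$ and hence $\bigl(\sum_i R_i^{-1/(n-1)}\bigr)^{n-1} = m^{n-1}/R$, gives
\begin{equation*}
W_S(\bm C) \;\le\; mR \;-\; \frac{(m-1)^n\,R}{m^{n-1}}
\end{equation*}
for every $\bm C \in \mathcal S_1 \times \cdots \times \mathcal S_m$. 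Comparing the two displays proves that $\bm C'$ is a maximizer of $W_S$.

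The only subtle point I foresee is the uniqueness of $\bm p(\bm C') = (1/m, \ldots, 1/m)$: without MDU, there could in principle exist other symmetric participation equilibria with unequal probabilities, which would alter the welfare computation. Lemma~\ref{lem:MDU-contestant-unique} rules this out, so MDU is used in a crucial but very localized way. Everything else is a straightforward algebraic matching of the two closed-form expressions, the Hölder step having already been carried out earlier in the proof of Theorem~\ref{thm:total-welfare-four-cases}.
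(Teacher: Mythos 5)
Your proposal is correct and follows essentially the same route as the paper's own proof: both establish that $(1/m,\ldots,1/m)$ is a contestant equilibrium because the identical $\bm{\cutil}$ vectors make all $\beta(C_i',\cdot)$ the same function, both invoke Lemma~\ref{lem:MDU-contestant-unique} (this is exactly where MDU enters) to get uniqueness of the symmetric equilibrium, and both then match the resulting closed-form welfare $mR - R(m-1)^n/m^{n-1}$ against the H\"older bound \eqref{eq:total-welfeare-bound} specialized to equal rewards. Your remark that indifference across all $\beta(C_i',1/m)$ suffices for $(1/m,\ldots,1/m)$ to be a best response is also the (implicit) justification in the paper, so there is no gap.
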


\begin{proof}
When the rewards are the same, \eqref{eq:total-welfeare-bound} becomes
\begin{equation*}
    W_S(C_1,\ldots,C_m)=mR-R\sum_{i=1}^m(1-p_i)^n
    \le mR-R\frac{(m-1)^n}{m^{n-1}}
\end{equation*}
for any $(C_1,\ldots,C_m) \in \mathcal S_1 \times \cdots \times \mathcal S_m$.
On the other hand, since any two $C'_i$ and $C'_j$ have the same $\bm{\cutil}$ vector it follows that
\[
\beta\left(C_1',\frac{1}{m}\right)=\cdots=\beta\left(C_m',\frac{1}{m}\right).
\]
So $(p_1, \ldots, p_m) = \left(\frac{1}{m}, \ldots, \frac{1}{m}\right)$ is a symmetric equilibrium for the contestants under $(C_1', \ldots, C_m')$. Moreover, by Lemma~\ref{lem:MDU-contestant-unique}, it is the unique symmetric equilibrium, so $p_i(\bm C')=\frac{1}{m}$.  Therefore
\begin{equation*}
    W_S(C_1', \ldots, C_m')=\sum_{i=1}^mR_i-\sum_{i=1}^mR_i(1-p_i(\bm C'))^n=mR-R\frac{(m-1)^n}{m^{n-1}} \ge W_S(C_1,\ldots,C_m),
\end{equation*}
which completes our proof.
\end{proof}

\section{Additional Examples}\label{app:example}

\subsection{Non-MDU contests}\label{app:non-MDU}
Theorem~\ref{thm:max-rent-dissipation-NE} (or Corollary~\ref{cor:full-rent-dissipation-NE}) show that a sufficient condition for a set of contests to be in equilibrium in a contest competition game is that the contests have both MRD and MDU (or simply having full rent dissipation).
Example~\ref{ex:non-MDU} then shows that this condition is not necessary by giving an example of equilibria consisting of non-MDU contests; also, it shows that the MRD and MDU contests are not dominant when the sets $\mathcal S_i$ contain non-MDU contests.  But that example makes use of an unnatural Tullock contest with parameter $\tau_k=0$, where the reward is given to contestants for free.  Here we give another example using a more natural Tullock contest with $\tau_k=1$.  
\begin{example}\label{ex:non-monotone-tullock}
 Let $m=2, n=10$, $R_1 = R_2 = 1$, both $\mathcal S_1$ and $\mathcal S_2$ consist of two contests: the $\wta$ contest and a contest $C$ with $\cutil_C = (1, 0, 0, 0, 0, 0, 1/49, 1/64, 1/81, 1/100)$ (that is, choosing Tullock contest with $\tau_k=1$ when there are $7\le k\le 10$ contestants and $\tau_k=+\infty$ otherwise).  We claim that $(C, C)$ is a contestant-symmetric subgame-perfect equilibrium:  When designers choose $(C, C)$, by symmetry, contestants participate in either contest with equal probability $(0.5, 0.5)$.  By direct computation (e.g. using \eqref{eq:designer-utility}), the expected utility of each designers is
 \[  (0.9658, 0.9658
). \]
Now suppose designer 1 switches to $\wta$.  The probabilities $(p_1, p_2) = \bm p(\wta, C)$ in the contestants' symmetric mixed strategy Nash equilibrium must satisfy, according to Claim~\ref{claim:equilibrium-condition}, $\beta(\wta, p_1) = \beta(C, p_2)$ (assuming $p_1, p_2>0$). By numerical methods, we find that $(p_1, p_2) = (0.4125, 0.5875)$.  Then the expected utility of designers becomes
\[ (0.9607, 0.9509). \]
Since $0.9607 < 0.9658$, designer 1 will not switch to $\wta$.  By symmetry, designer 2 will not switch to $\wta$, either.  Hence, $(C, C)$ is an equilibrium.  Also, $\wta$ is not a dominant contest because it is not a best-response to $C$. 
\end{example}

Concerning dominant contests in contest competition games, the second part of Theorem~\ref{thm:max-rent-dissipation-NE} show that the MRD and MDU contests (e.g., the full rent dissipation contest $\wta$) are dominant if all sets $\mathcal S_i$ only contain MDU contests.   Example~\ref{ex:non-MDU} and Example~\ref{ex:non-monotone-tullock} show that $\wta$ is not dominant for designer $1$ when both sets $\mathcal S_1$ and $\mathcal S_2$ contain non-MDU contests in a two-designer game.  Here we give another example where $\wta$ is not dominant even if $\mathcal S_1$ only contains MDU contests (while $S_2$ still contains non-MDU contests).   

\begin{example}\label{ex:wta-not-dominant}
Consider two contest designers ($m=2$) with $R_1=R_2=1$ and ten contestants ($n=10$).
Designer 2 chooses contest $C$ which is a Tullock contest whose parameter $\tau_k$ depends on the number of contestants $k$: if $k\le 5$, she plays $\wta$; if $k\ge 6$, she chooses $\tau_k=1$.
Consider designer 1 that chooses from the set of all Tullock contests.
We show that $\wta$ is not a dominant contest for designer 1 by showing that the Tullock contest with $\tau=1.2$ is a better response for designer 1 than $\wta$. 

According to Lemma~\ref{lem:tullock}, the contest utility functions for contestants satisfy:
$$\cutil_{C}(k) = \begin{cases}
1 & k=1 \\
0 & 2 \leq k \leq 5 \\
\frac{1}{k^2} & k \geq 6
\end{cases} \quad 
\cutil_{\wta}(k) = \begin{cases}
1 & k=1 \\
0 & k \geq 2
\end{cases} \quad 
\cutil_{\tau=1.2}(k) = \begin{cases}
1 & k=1 \\
\frac{1}{k} - \frac{1.2(k-1)}{k^2} & 2 \leq k \leq 5 \\
0 & k\geq 6
\end{cases}
$$

Using the contestants' equilibria equations (Claim~\ref{claim:equilibrium-condition}), this induces participation equilibria probabilities of:
$$\hat{p}_1 \approx 0.366965,~~ p_1 \approx 0.519786, $$
where $(\hat p_1, \hat p_2) = \bm p(\wta, C)$ and $(p_1, p_2) = \bm p(\tau=1.2, C)$.

This translates to designer 1 utility of $\approx 0.929759$ choosing $\wta$, and $\approx 0.930121$ choosing $\tau=1.2$. 
\end{example}

\subsection{Proof of Example~\ref{ex:equilibria-asymmetric-rewards}}\label{sec:proof-example-asymmetric-reward}

\equilibriaAsymmetricRewards*

\begin{proof}
When for all $i=1, \ldots, m$, contest designer $i$ chooses $T_i=\wta$, we have that for any $j=2,\ldots,m$,
\[\beta\left(T_j,\frac{1}{m-1}\right)=R_j\left(1-\frac{1}{m-1}\right)^{n-1}=\left[\left(\frac{m-1}{m-2}\right)^{n-1}+1\right]\left(\frac{m-2}{m-1}\right)^{n-1}>1=R_1=\beta(T_1,0),\]
which implies that $\left(0,\frac{1}{m-1},\ldots,\frac{1}{m-1}\right)$ is a symmetric equilibrium for contestants in $(T_1,\ldots,T_m)$, i.e. $p_i(T_1,\ldots,T_m)=\begin{cases}\frac{1}{m-1}, & i=2,\ldots,m\\0, & i=1\end{cases}$.
Then the claim that any $(C_1, T_2, \ldots, T_m)$, with $T_j=\wta$ for $j=2,\ldots,m$ and $C_1$ be an arbitrary contest in $\mathcal S_1$, is a contestant-symmetric subgame-perfect equilibrium with $p_i(C_1, T_2, \ldots,T_m)=p_i(T_1,\ldots,T_m)=\begin{cases}\frac{1}{m-1}, & i=2,\ldots,m\\0, & i=1\end{cases}$ follows from Theorem~\ref{thm:uniqueness-NE}. 
\end{proof}

\subsection{Asymmetric participation equilibrium}
To demonstrate the importance of the symmetry aspect of the contestant-symmetric subgame-perfect equilibrium to our results, we consider an example where we allow contestants to play an asymmetric equilibrium in response to the designers' contest success functions; the conclusion that $\wta$ contests form an equilibrium now no longer holds. 
\begin{example}\label{ex:asymmetric}
Consider two contest designers ($m=2$) with reward $R_1=R_2=1$ and three contestants ($n=3$). The designers choose Tullock contests with parameter $\tau \in [0,\infty]$. We show that if the contestants may play an asymmetric equilibrium for their participation, $(\wta,\wta)$ may not be an equilibrium for the designers. We compare designer's 1 utility of choosing either $\tau = \wta$ or $\tau = 1.5$ in response to designer 2 choosing $\wta$. 

If designer 1 chooses $\wta$ and the contestants play the symmetric equilibria of $p_1 = p_2 = \frac{1}{2}$, the designers' expected utilities are the same. Moreover, notice that under any realization of the participation probabilities, there is some contest with at least 2 contestants and some contest with at most 1 contestant. By the full rent dissipation property of $\wta$, we conclude that under any realization the sum of utilities for the designers is 1, and so each has an expected utility of $\frac{1}{2}$. 

If designer 1 chooses $\tau= 1.5$, the contestants may play an asymmetric equilibrium where contestants 1,2 choose contest 1 and contestant 3 chooses contest 2. This is an equilibrium because contestant 3 gets the full reward with no effort, and for contestant 1 (w.l.o.g., the same argument applies to contestant 2), given the other contestants' choices and that changing her choice to contest 2 leads to an expected utility of zero, the minimal possible, 
it is a best response to stay in contest 1. 

Under this asymmetric participation equilibrium, the utility for contest 1's designer is $0.75$ (according to Lemma~\ref{lem:tullock}, $u_1 = R_1 - k \gamma_{\tau=1.5}(k) = 1 - 2 \cdot 0.125 = 0.75$), higher than its utility by setting $\wta$. We thus conclude that there is a better response than $\wta$ for contest designer 1 to the $\wta$ set by contest designer 2, and therefore $(\wta,\wta)$ is not an equilibrium. This stands in contrast with the results of Theorem~\ref{thm:max-rent-dissipation-NE} which assumes symmetric equilibrium. 
\end{example}

\subsection{Risk averse contestants}
\label{app:risk-averse}

Consider an example CCG with $m=2, n=2, R_1=R_2=1$ and  $\mathcal S_i$ for both contest designers is the set of Tullock contests with $\tau \in [0,2]$. With risk-neutral contestants we showed that $(\tau=2,\tau=2)$ is the unique contestant-symmetric subgame-perfect equilibrium (Corollary~\ref{cor:full-rent-dissipation-NE} and Corollary~\ref{cor:uniqueness-NE-symmetric-reward}).
We show via an example that with risk-averse contestants $(\tau=2,\tau=2)$ is no longer a contestant-symmetric subgame-perfect equilibrium.

\begin{definition}
A risk-averse contestant in a contest has a twice differentiable, strictly increasing in $[0,1]$, concave utility function $a:\mathcal{R} \rightarrow \mathcal{R}$, with $a(0) = 0, a(1) = 1$. 
\end{definition}

In our example we use the utility function $a(x) = 1 - (1-x)^4$. \cite{segev_schweinzer_2012} show that, in a single Tullock contest with parameter $\tau \in [0,2]$, there exists a pure strategy symmetric Nash equilibrium (see also the proof of Lemma~\ref{lem:tullock}). Since it is pure, the efforts exerted in this equilibrium are the same whether contestants are risk-neutral or risk-averse.
%
%
A contestant's utility in the competition game is thus
(following Eq.~\eqref{eq:contestant-utility-beta}):
\begin{equation}\label{eq:contestant-utility-beta-averse}
 \beta_{\AVERSE}(C_i, p_i) = \Ex[k\sim \bin(n-1, p_i)]{a(\cutil_{C_i}(k+1))} = \sum\limits_{k=0}^{n-1} \binom{n-1}{k} p_i^k (1-p_i)^{n-1-k} a(\cutil_{C_i}(k+1)).
\end{equation}

In particular, for $\tau=2$,
 $$\beta_{\AVERSE}(\tau=2, p_i) = \Ex[k\sim \bin(n-1, p_i)]{a(\cutil_{\tau=2}(k+1))} = (1-p_i)^{n-1} a(\cutil_{\tau=2}(1)) = (1-p_i)^{n-1}.$$

Claim~\ref{claim:equilibrium-condition} (equilibrium condition) continues to hold, 
and the definition of designer utilities remains the same as they remain risk-neutral (Eq.~\eqref{eq:designer-utility-definition}). 
%
%
%
%
We show that with risk-averse contestants $\tau = 1$ is a better response than $\tau=2$ to $\tau=2$. By symmetry, if both designers set $\tau = 2$, we have $p_1 = p_2 = \frac{1}{2}$ and designer 1's utility is $p_1^2 \cdot (1 - 2\cutil_{\tau=2}(2)) = \frac{1}{4}$ (the formula for $\cutil_{\tau}(k)$ is given in Lemma~\ref{lem:tullock}). If designer 1 sets $\tau = 1$, by the equilibrium condition for risk-averse participants, we have
    $$(1-p_1) a(1) + p_1 a(\cutil_{\tau=1}(2)) = \beta_{\AVERSE}(\tau = 1,p_1) = \beta_{\AVERSE}(\tau=2, p_2) = 1-p_2  = p_1,$$
where $\cutil_{\tau=1}(2) = \frac{1}{4}$, 
which yields $p_1 = \frac{256}{337}$, and the utility for designer 1 is
$p_1^2 (1 - 2\cutil_{\tau=1}(2)) \approx 0.288529$.
%
%
This establishes that $(\tau=2, \tau=2)$ is not an equilibrium. It can be verified that a symmetric equilibrium exists for this setting with $\tau = \frac{2}{3}$.


\bibliographystyle{apalike}
\bibliography{bibs}

\end{document}